\def\OPTIONS{}}
\def\OPTIONS{[runningheads,a4paper]}\def \CLASS{llncs}}
\title{\GADTs meet subtyping}
\author{Gabriel Scherer \and Didier R\'emy}
\institute 
   {INRIA, Rocquencourt\thanks
    {Part of this work has been done at \textsc{IRILL}.}}
\providecommand {\Final}{\False}
\begin{document}

\begin{version}{\RR}
\selectlanguage{francais}
  \makeRR
\selectlanguage{english}
\end{version}
\begin{version}{\Not\RR}

\maketitle 

\begin{abstract}{}
  \RR{\noindent}{}
  While generalized algebraic datatypes~(\GADTs) are now considered
  well-understood, adding them to a language with a notion of
  subtyping comes with a few surprises. What does it mean for a \GADT
  parameter to be covariant?
  The answer turns out to be quite subtle. It involves fine-grained
  properties of the subtyping relation that raise interesting design
  questions.
  We allow variance annotations in \GADT definitions, study their
  soundness, and present a sound and complete algorithm to check them.
  Our work may be applied to real-world ML-like languages with
  explicit subtyping such as OCaml, or to languages with general
  subtyping constraints.
\end{abstract}

\end{version}

\section*{Introduction}

%\subsection*{Introduction}

In languages that have a notion of subtyping, the interface of
parametrized types usually specifies a \emph{variance}. It defines the
subtyping relation between two instances of a parametrized type from
the subtyping relations that hold between their parameters. For
example, the type $\ty{\alpha}{list}$ of immutable lists is expected
to be \emph{covariant}: we wish $\ty{\sigma}{list} \leq
\ty{\sigma'}{list}$ as soon as $\sigma \leq \sigma'$.

Variance is essential in languages with parametric polymorphism whose
programming idioms rely on subtyping, in particular object-oriented
languages, or languages with structural datatypes such as extensible
records and variants, dependently typed languages with inductive types
(to represent positivity requirements), or  additional
information in types such as permissions, effects, \etc. A last reason
to care about variance is its use in the \emph{relaxed value
  restriction}~\cite{relaxing-the-value-restriction}: while
a possibly-effectful expression, also called an \emph{expansive
  expression}, cannot be soundly generalized in ML---unless some
sophisticated enhancement of the type system keeps track of effectful
expressions---it is always sound to generalize type variables that
only appear in covariant positions, as they may not classify mutable
data.
\begin{version}{\Not\Esop}
  This relaxation uses an intuitive subtyping argument: all
  occurrences of such type variables can be specialized to $\bot$, and
  anytime later, all covariant occurrences of the same variable
  (which are now $\bot$) can be simultaneously replaced by the same
  arbitrary type $\tau$, which is always a supertype of $\bot$.  This
  relaxation of the value-restriction is implemented in OCaml, where
  it is surprisingly useful.
\end{version}
Therefore, it is important for extensions of type definitions, such as
generalized algebraic datatypes~(\GADTs), to support it as well through
a clear and expressive definition of parameter covariance.

For example, consider the following \GADT of well-typed expressions:
\label{exp-example}
\begin{lstlisting}[xleftmargin=4em]
type $\ty{\vplus\alpha}{exp}$ =
  | Val : $\alpha \to \ty{\alpha}{exp}$
  | Int : $\tyc{int} \to \ty{\tyc{int}}{exp}$
  | Thunk : $\forall \beta.\,\ty{\beta}{exp} * (\beta \to \alpha) \to \ty{\alpha}{exp}$
  | Prod : $\forall \beta \gamma .\,\ty{\beta}{exp} * \ty{\gamma}{exp} \to \ty{(\beta*\gamma)}{exp}$
\end{lstlisting}
Is it safe to say that $\tyc{exp}$ is covariant in its type parameter? It
turns out that, using the subtyping relation of the OCaml type system, the
answer is ``yes''. But, surprisingly to us, in a type system with a top type
$\top$, the answer would be ``no''.  We introduce this example in details in
\S\ref{sec/examples}---and present some interesting counter-examples of
incorrect variance annotations.

Verifying variance annotations for simple algebraic datatypes is
straightforward: it suffices to check that covariant type variables appear
only positively and contravariant variables only negatively in the
types of the arguments of the datatype constructors. \GADTs can be
formalized as extensions of datatypes where constructors have typed
arguments, but also a set of existential variables and equality
constraints. Then, the simple check of algebraic datatypes apparently
becomes a searching problem: witnesses for existentials must be found
so as to satisfy the equality constraints. That is, there is a natural
correctness criterion (already present in previous work); however, it is
expressed in a ``semantic'' form that is not suitable for a simple
implementation in a type checker. 
We present this semantic criterion
in~\S\ref{sec/formal_setting} after reviewing the formal framework
of variance-based subtyping. 

The main contribution of our work, described in
\S\ref{sec/checking_variance_of_gadts}, is to develop a syntactic criterion
that ensures the semantics criterion. 
Our solution extends the simple
check of algebraic datatypes in a non-obvious way by introducing two
new notions. First, \emph{upward and downward-closure} of type
constructors explains how to check that a single equality constraint
is still satisfiable in presence of variance (but also raises interesting
design issues for the subtyping relation). Second, \emph{zipping} explains
when witnesses exist for existential variables, that is, when multiple
constraints using the same existential may soundly be used without
interfering with each other. These two properties are combined into a new
syntactic 
judgment of \emph{decomposability} that is central to our syntactic
criterion. We prove that our syntactic criterion is
sound and complete with respect to the semantic criterion. The
proof of soundness is relatively direct, but completeness is much
harder.

We discuss the implication of our results in~\S\ref{sec/discussion}, 
in particular the notion of upward and downward-closure properties
of type constructors, on the design of a subtyping
relation. We  also contrast this approach, motivated by the needs
of a language of a ML family, with a different and mostly orthogonal
approach taken by existing object-oriented languages, namely \csharp
and Scala, where a natural notion of \GADTs involves subtyping
constraints, rather than equality constraints. We can re-evaluate our
syntactic criterion in this setting: it is still sound, but the
question of completeness is left open.

\medskip In summary, we propose a syntactic criterion for checking the
soundness of variance annotations of \GADTs with equality constraints
in a language with subtyping. Our work is directly applicable to the
OCaml language, but our approach can also be transposed to languages
with general subtyping constraints, and raises interesting design
questions.
\begin{version}{\Esop}
A long version of the present article, containing the
detailed proofs and additional details and discussion, is available online~\cite
{Scherer-Remy:gadts-meet-subtyping@long2012}.
\end{version}

\section{Examples}
\label{sec/examples}

Let us first explain why it is reasonable to say that $\ty\alpha{exp}$ is
covariant. Informally, if we are able to coerce
a value of type  $\alpha$ into one of type $\alpha'$ (we write $(\mc{v} :> \alpha')$ to
explicitly cast a value $\mc{v}$ of type $\alpha$ to a value of type
$\alpha'$), then we are also able to transform a value of type $\ty\alpha{exp}$
into one of type $\ty{\alpha'}{exp}$. Here is some pseudo-code\footnote{The
  variables $\beta'$ and $\gamma'$ of the \code{Prod} case are never
  really defined, only justified at the meta-level, making this code
  only an informal sketch.} for the coercion function:
\begin{lstlisting}[xleftmargin=4em]
let coerce : $\ty\alpha{exp} \to \ty{\alpha'}{exp}$ = function
  | Val (v : $\alpha$) -> Val (v :> $\alpha'$)
  | Int n -> Int n
  | Thunk $\beta$ (b : $\beta$ exp) (f : $\beta \to \alpha$) ->
    Thunk $\beta$ b (fun x -> (f x :> $\alpha'$))
  | Prod $\beta$ $\gamma$ ((b, c) : $\ty\beta{exp} * \ty\gamma{exp}$) ->
    (* if $\beta * \gamma \leq \alpha'$, then $\alpha'$ is of the form $
        \beta' * \gamma'$
        with $\beta \leq \beta'$ and $\gamma \leq \gamma'$ *)
    Prod $\beta'$ $\gamma'$ ((b :> $\ty{\beta'}{exp}$), (c :> $\ty{\gamma'}{exp}$))
\end{lstlisting}
In the $\mc{Prod}$ case, we make an informal use of something we know
about the OCaml type system: the supertypes of a tuple are all
tuples. By entering the branch, we gain the knowledge that $\alpha$
must be equal to some type of the form $\beta * \gamma$. So from
$\alpha \leq \alpha'$ we know that
$\beta * \gamma \leq \alpha'$. Therefore, $\alpha'$ must itself be
a pair of the form $\beta' * \gamma'$. By covariance of the product,
we deduce that $\beta \leq \beta'$ and $\gamma \leq \gamma'$. We may 
thus conclude by casting at types $\ty{\beta'}{exp}$ and
$\ty{\gamma'}{exp}$, recursively.

Similarly, in the $\mc{Int}$ case, we know that $\alpha$ must be an
$\tyc{int}$ and therefore an $\ty{\tyc{int}}{exp}$ is returned. This is
because we know that, in OCaml, no type is above $\tyc{int}$: if $\tyc{int}
\leq \tau$, then $\tau$ must be $\tyc{int}$. 

What we use in both cases is reasoning of the form\footnote {\let \T
  T\let \S \sigma\relax We write $\app \T\bb$ for a type expression
  $\T$ that may contain free occurrences of variables $\bb$ and $\app
  \T {\bar\S}$ for the simultaneous substitution of $\bar\S$ for $\bb$
  in $\T$.}:
``if  
$\app{T}{\bb} \leq \alpha'$, then I know that $\alpha'$ is of the form
$\app{T}{\bb'}$ for some $\bb'$''. We call this an \emph{upward
  closure} property: when we ``go up'' from a $\app{T}\bb$, we only
find types that also have the structure of $T$. Similarly, for
contravariant parameters, we would need a \emph{downward closure}
property: $T$ is downward-closed if $\app{T}\bb \geq \alpha'$ entails
that $\alpha'$ is of the form $\app{T}{\bb'}$.

Before studying a more troubling example, we  define the classic
equality type $\ty{(\alpha, \beta)}{eq}$ and the corresponding
casting function
$\mc{cast}: \forall \alpha \beta. \ty{(\alpha, \beta)}{eq} \to \alpha \to
\beta$: 
\begin{version}{\Not\Esop}
\begin{lstlisting}
  type ($\alpha$, $\beta$) eq =
    | Refl : $\forall \gamma$. ($\gamma$, $\gamma$) eq
\end{lstlisting}
\begin{lstlisting}
  let cast (eqab : $\ty{(\alpha, \beta)}{eq}$) : $\alpha \to \beta$ =
    match eqab with
      | Refl -> (fun x -> x)
\end{lstlisting}
\end{version}
\begin{version}{\Esop}
\begin{lstlisting}[xleftmargin=1em]
type ($\alpha$, $\beta$) eq =                         $\hspace{5.8em}$let cast r =
  | Refl : $\forall \gamma$. ($\gamma$, $\gamma$) eq  $\hspace{2.3em}$  match r with Refl -> (fun x -> x)
\end{lstlisting}
\end{version}
Notice that it would be unsound\footnote{This counterexample is due to
  Jeremy Yallop.} to define $\tyc{eq}$ as covariant, even in only one
parameter. For example, if we had
$\mc{type}~\ty{(\vplus\alpha,\veq\beta)}{eq}$, from any $\sigma \leq
\tau$, we could subtype $\ty{(\sigma,\sigma)}{eq}$ into
$\ty{(\tau,\sigma)}{eq}$, allowing a cast from any value of type $\tau$
back into one of type $\sigma$, which is unsound in general.

As a counter-example, the following declaration is incorrect: the type
$\ty{\alpha}{bad}$ cannot be declared covariant.
\begin{lstlisting}
  type $\vplus\alpha$ bad =
    | K : < m : int > $\to$ < m : int > bad
  let v = (K (object method m = 1 end) :> < > bad)
\end{lstlisting}
This declaration uses the OCaml object type \code{< m : int >}, which
qualifies objects having a method \code{m} returning an integer. It is
a subtype of object types with fewer methods, in this case the empty
object type \code{< >}, so the alleged covariance of \code{bad}, if
accepted by the compiler, would allow us to cast a value of type
\code{< m : int > bad} into one of type \code{< > bad} and thus have 
the above value \code v of type \code {<> bad}. However, if such a value \code v
existed,  we could produce an equality witness
\code{(< >, <m : int>) eq} that allows to cast any
empty object of type \code{< >} into an object of type
\code{< m : int >}, but this is unsound, of course!

\begin{lstlisting}
  let get_eq : $\ty\alpha{bad} \to$ ($\alpha$, < m : int >) eq = function
    | K _ -> Refl      $\qquad$ (* locally $\alpha =\,$< m : int > *)
  let wrong : < > -> < m : int > =
    let eq : (< >, < m : int >) eq = get_eq v in cast eq
\end{lstlisting}
It is possible to reproduce this example using a different feature of
the OCaml type system named \emph{private type
  abbreviation}\footnote{This counterexample is due to Jacques
  Garrigue.}: a module using a type $\mc{type}~\tyc{s}~\mc{=}~\tau$
\emph{internally} may describe its interface as
$\mc{type}~\tyc{s}~\mc{=}~\mc{private}~\tau$. This is a compromise
between a type abbreviation and an abstract type: it is possible to
cast a value of type $\tyc{s}$ into one of type $\tau$, but not,
conversely, to construct a value of type $\tyc s$ from one of type
$\tau$. In other words, $\tyc{s}$ is a strict subtype of $\tau$: we
have $\tyc{s} \leq \tau$ but not $ \tyc{s} \geq\tau$. Take for example
\code{type file_descr = private int}: this semi-abstraction is useful
to enforce invariants by restricting the construction of values of
type \code{file_descr}, while allowing users to conveniently and
efficiently destruct them for inspection at type \code{int}.
\begin{version}{\Esop}
  Using an unsound but quite innocent-looking covariant \GADT
  datatype, one is able to construct a function to cast any integer
  into a \code{file_descr}, which defeats the purpose of this
  abstraction---see the extended version of this article for the full
  example.
\end{version}

\begin{version}{\Not\Esop}
Unsound \GADT covariance declarations would defeat the purpose of such
private types: as soon as the user gets one element of the private
type, she could forge values of this type, as illustrated by the code
below.
\begin{lstlisting}
  module M = struct
    type file_descr = int
    let stdin = 0
    let open = ...
  end : sig
    type file_descr = private int
    val stdin : file_descr
    val open : string -> (file_descr, error) sum
  end

  type $\vplus\alpha$ s =
    | K : priv -> M.file_descr s

  let get_eq : $\ty\alpha{s}$ -> ($\alpha$, M.file_descr) eq = function
    | K _ -> Refl

  let forge : int -> M.file_descr =
    fun (x : $\tyc{int}$) -> cast (get_eq p) M.stdin
\end{lstlisting}
\end{version}

The difference between the former, correct \code{Prod} case and those
two latter situations with unsound variance is the notion of upward
closure. The types $\alpha * \beta$ and $\tyc{int}$ used in the
correct example were upward-closed. On the contrary, the private type
\Esop{\code{file_descr}}{\code{M.file_descr}} has a distinct supertype
\code{int}, and similarly, the object type \code { < m:int > } has
a supertype \code {< >} with a different structure (no method
\code m).

\begin{version}{\Esop}
  Finally, the need for covariance of $\ty{\alpha}{exp}$ can be justified
  either by applications using subtyping on data (for example object types or
  polymorphic variants), or by the relaxed value restriction.
  If we used the \texttt{Thunk} constructor to delay a computation
  returning an object of type \code{< m : int >}, that is itself of type
  {\code{< m : int > exp}}, we may need to see it as a computation
  returning the empty object \code{< >}.
  We could also wish to define an abstract interface through a module
  boundary that would not expose any implementation detail about the
  datatype; for example, using \texttt{Product} to implement
  a \texttt{list} interface.
  \begin{lstlisting}[xleftmargin=2em]
  module Exp : sig
    type $\alpha$ exp                   
    val inj : $\alpha$ -> $\alpha$ exp  
    val pair : $\alpha$ exp -> $\beta$ exp -> $(\alpha * \beta)$ exp
    val fst : $(\alpha * \beta)$ exp -> $\alpha$ exp
  end
  \end{lstlisting}
  %% \begin{lstlisting}[xleftmargin=0em]
  %% module Expr : sig
  %%   type $\alpha$ exp                   $\hspace{1.8em}$val pair : $\alpha$ exp -> $\beta$ exp -> $(\alpha * \beta)$ exp
  %%   val inj : $\alpha$ -> $\alpha$ exp  $\hspace{2.3em}$val fst : $(\alpha * \beta)$ exp -> $\alpha$ exp
  %% end
  %% \end{lstlisting}
  What would then be the type of \texttt{Exp.inj []}? In presence of
  the value restriction, this application cannot be generalized, and
  we get a weak polymorphic type \texttt{$?\alpha$ list Exp.exp} for some
  non-generalized inference variable \texttt{$?\alpha$}. If we change
  the interface to express that \texttt{Exp.exp} is covariant, then we
  get the expected polymorphic type
  $\forall \alpha. \ty{\ty{\alpha}{list}}{Exp.exp}$.
\end{version}
\begin{version}{\Not\Esop}
\subsection*{On the importance of variance annotations}

Being able to specify the variance of a parametrized datatype is
important at abstraction boundaries: one may wish to define a program
component relying on an \emph{abstract} type, but still make certain
subtyping assumptions on this type. Variance assignments provide
a framework to specify such a semantic interface with respect to
subtyping. When this abstract type dependency is provided by an
encapsulated implementation, the system must check that the provided
implementation indeed matches the claimed variance properties.

Assume the user specifies an abstract type
\begin{lstlisting}[xleftmargin=4em]
module type S = sig
  type $(+\alpha)$ collection
  val empty : unit -> $\alpha$ collection
  val app : $\alpha$ collection -> $\alpha$ collection -> $\alpha$ collection
end
\end{lstlisting}
and then implements it with linked lists
\begin{lstlisting}[xleftmargin=4em]
module C : S = struct
  type $+\alpha$ collection =
    | Nil of $\tyc{unit}$
    | Cons of $\alpha\,*\,\ty{\alpha}{collection}$
  let empty () = Nil ()
end
\end{lstlisting}
The type-checker will accept this implementation, as it has the
specified variance. On the contrary,
\begin{lstlisting}[xleftmargin=4em]
type $+\alpha$ collection = $\ty{(\ty{\alpha}{list})}{ref}$  
let empty () = ref []
\end{lstlisting}
would be rejected, as $\tyc{ref}$ is invariant.
In the following definition:
\begin{lstlisting}[xleftmargin=4em]
let nil = C.empty ()
\end{lstlisting}
the right hand-side is not a value, and is therefore not generalized
in presence of the value restriction; we get a monomorphic type,
$\ty{?\alpha}{t}$, where $?\alpha$ is a yet-undetermined type
variable. The relaxed value restriction
\cite{relaxing-the-value-restriction} indicates that it is sound to
generalize $?\alpha$, as it only appears in covariant
positions. Informally, one may unify $?\alpha$ with $\bot$, add an
innocuous quantification over $\alpha$, and then generalize
$\forall \alpha. \ty{\bot}{t}$ into $\forall \alpha. \ty{\alpha}{t}$
by covariance---assuming a lifting of subtyping to polymorphic type
schemes.

The definition of \code{nil} will therefore get generalized in
presence of the relaxed value restriction, which would not be the case
if the interface \texttt S had specified an invariant type.
\end{version}

\section{A formal setting}\label{sec/formal_setting}

\subsection{The subtyping relation}
%\subsection{A simple ML type system with subtyping}

%% Our type system defines a subtyping relation between ground types,
%% parametrized by a reflexive transitive relation between base constant
%% types (\code{int}, \code{bool}, etc.). 
Ground types consist of base type $\tyc q$, types $\tau\; \tyc p$,
function types $\tau_1 \to \tau_2$, product
types $\tau_1 * \tau_2$, and a set of algebraic datatypes
$\ty{\bs}{t}$.  We also write $\sigma$ and $\rho$ for types,
 $\bar \sigma$ for a sequence of types
$(\sigma_i)_\iI$, and we use prefix notation for datatype parameters, as
is the usage in ML. Datatypes may be user-defined by toplevel
declarations of the form:
%% \begin{lstlisting}
%%   type $\ty{\Gva}{t}$ =
%%     | K$_1$ of $\app{\tau^1}{\ba}$
%%     | ...
%%     | K$_n$ of $\app{\tau^n}{\ba}$
%% \end{lstlisting}
$$
\type \Gva\t = \seqi\mid{\Kc[\i] \of \app{\tau^\i}\ba}
$$
This is a disjoint sum: the constructors $\mc K_c$ represent all 
possible cases and each type
%% \footnote {We write
%%   $\app \tau \alpha$ for a type expression that may contain free
%%   occurrences only of $\alpha$, and $\app \tau \sigma$ for the
%%   substitution $\tau[\sigma/\alpha]$. When no substitution is applied,
%%   $\tau$ means $\app \tau \alpha$.}
$\app {\tau^c} \ba$ is the domain of the constructor
$\mc K_c$. Applying $\mc K_c$ to an argument $e$ of a corresponding ground
type $\app{\tau}{\bs}$ constructs a term of type $\ty{\bs}{t}$. Values
of this type are deconstructed using pattern matching clauses of the
form $\mc{K}_c~x \to e$, one for each constructor.

The sequence $\Gva$ is a binding list of type variables $\alpha_i$
along with their \emph{variance annotation} $v_i$. 
Variances range in the set $\set{\vplus, \vminus, \veq, \virr}$. We may
associate  a relation $\rel{\prec_v}$ between types to each variance
$v$:
\begin{itemize}
\item $\prec_{\vplus}$ is the \emph{covariant} relation $\rel\leq$;
\item $\prec_{\vminus}$ is the \emph{contravariant} relation $\rel\geq$,
the symmetric of $\rel\leq$;
%% defined as $\sigma \geq \tau$ if and only if $\tau \leq \sigma$
\item $\prec_{\veq}$ is the \emph{invariant} relation $\rel{=}$ defined as
the intersection of $\rel\leq$ and $\rel\geq$;
 %% $\sigma = \tau$ when $\sigma \leq \tau$ and $\sigma \geq \tau$
\item $\prec_{\virr}$ is the \emph{irrelevant} relation $\rel\Join$, \ie.
  the full relation such that $\sigma \Join \tau$ holds for all types
  $\sigma$ and $\tau$.
\end{itemize}
Given a reflexive transitive relation $\rel\leqslant$ on base
types, the subtyping relation on ground types $\rel\leq$ is defined by the
inference rules of Figure~\ref {fig/subtyping}, which, in particular, give
their meaning to the variance annotations $\Gva$. The judgment
$\mc{type}~\ty{\Gva}{t}$ simply means that the type constructor
$\tyc{t}$ has been previously defined with the variance annotation
$\Gva$. 
\begin{mathparfig}{fig/subtyping}{Subtyping relation}
\inferrule[sub-Refl]{ }
  {\sigma \leq \sigma}

\inferrule[sub-Trans]
  {\sigma_1 \leq \sigma_2 \\ \sigma_2 \leq \sigma_3}
  {\sigma_1 \leq \sigma_3}

\inferrule[sub-Fun]
  {\sigma \geq \sigma' \\ \tau \leq \tau'}
  {\sigma \to \tau \leq \sigma' \to \tau'}
  \quad \and

\inferrule[sub-Prod]
  {\sigma \leq \sigma'\\ \tau \leq \tau'}
  {\sigma * \tau \leq \sigma' * \tau'}

\inferrule[sub-Constr]
  {\mc{type}~\ty{\Gva}{t} \\
   \forall i,\,\sigma_i \prec_{v_i} \sigma'_i}
  {\ty{\bs}{t} \leq \ty{\bs'}{t}}

\inferrule[sub-P]
  {\sigma \leq \sigma'}
  {\ty{\sigma}{p} \leq \ty{\sigma'}{p}}

\inferrule[sub-PQ]{ }
  {\ty{\sigma}{p} \leq \tyc{q}}
\end{mathparfig}
Notice that the rules for arrow and product types, \Rule{sub-Fun} and
\Rule{sub-Prod}, can be subsumed by the rule for datatypes
\Rule{sub-Constr}. Indeed, one can consider them as special datatypes
(with a specific dynamic semantics) of variance $(\vminus, \vplus)$
and $(\vplus, \vplus)$, respectively. For this reason, the following
definitions will not explicitly detail the cases for arrows and
products.

The rules \Rule{sub-P} and \Rule{sub-PQ} were added for the explicit
purpose of introducing some amount of non-atomic subtyping in our
relation. For two fixed type constructors $\tyc{p}$ (unary) and
$\tyc{q}$ (nullary), we have $\ty{\sigma}{p} \leq \tyc{q}$ for any
$\sigma$. Note that $\tyc{q}$ is not a top type as it is not above all
types, only above the $\ty{\sigma}{p}$. Of course, we could add other
such type constructors, but those are enough to make the system
interesting and representative of complex subtype relation.

\begin{version}{\Not\Esop}
Finally, it is routine to show that the rules for reflexivity and
transitivity are admissible, by pushing them up in the derivation
until the base cases $\tyc{b} \leqslant \tyc{c}$, where they can be
removed as $\rel\leqslant$ is assumed to be reflexive and transitive. 
Removing reflexivity and transitivity provides us with an equivalent
syntax-directed judgment having powerful inversion principles: if
$\ty \bs t \leq \ty {\bs'} t$ and
$\mc{type}~\ty{\Gva}{t}$, then one can deduce that for each
$i$, $\sigma_i \prec_{v_i} \sigma'_i$.\label{subtyping_inversion} 
\end{version}
\begin{version}{\Esop}
  As usual in subtyping systems, we could reformulate our judgment in
  a syntax-directed way, to prove that it admits good inversion
  properties: if $\ty \bs t \leq \ty {\bs'} t$ and
  $\mc{type}~\ty{\Gva}{t}$, then one can deduce that for each $i$,
  $\sigma_i \prec_{v_i} \sigma'_i$.\label{subtyping_inversion}
\end{version}

The non-atomic rule \Rule{sub-PQ} ensures that our subtyping relation
is not ``too structured'' and is a meaningful choice for a formal
study applicable to real-world languages with possibly top or bottom
types, private types, record width subtyping, etc. In particular, the
type constructor $\tyc{p}$ is \emph{not} upward-closed (and conversely
$\tyc{q}$ is not downward-closed), as used informally in the examples
and defined for arbitrary variances in the following way:
\begin{definition}[Constructor closure]
\label{def/v-closed}
A type constructor $\ty{\ba}{t}$ is $v$-closed if, for any type sequence
$\bs$ and type $\tau$ such that $\ty{\bs}{t} \prec_v \tau$ hold, then
$\tau$ is necessarily equal to $\ty{\bs'}{t}$ for some $\bs'$.
\end{definition}

\subsection{The algebra of variances}

If we know that $\ty{\bs}{t} \leq \ty{\bs'}{t}$, that is $\ty{\bs}{t}
\prec_{\vplus} \ty{\bs'}{t}$, and the constructor $\tyc{t}$ has variable
$\bar{v\alpha}$, an inversion principle tells us that
$\sigma_i \prec_{v_i} \sigma'_i$  for each $i$. But what if we only know
$\ty \bs t \prec_{u} \ty {\bs'} t$ for some variance $u$ different from
$\rel\vplus$? If $u$ is $\rel\vminus$, we get the reverse relation
$\sigma_i \succ_{v_i} \sigma'_i$. If $u$ is $\rel\virr$, we get
$\sigma_i \Join \sigma'_i$, that is, nothing. This outlines
a \emph{composition} operation on variances $\varcomp u {v_i}$, such
that if $\ty \bs t \prec_{u} \ty {\bs'} t$ then $\sigma_i
\prec_{\varcomp u {v_i}} \sigma'_i$ holds. It is defined by the
table in figure~\ref{fig:composition_table}.

This operation is associative and commutative. Such an operator, and
the algebraic properties of variances explained below, have already
been used by other authors, for example
\cite{polarized-subtyping-for-sized-types}.

There is a natural order relation between \emph{variances}, which
is the \emph{coarser-than} order between the corresponding relations:
$v \leq w$ if and only if 
$\rel{\prec_v} \supseteq \rel{\prec_w}$; \ie. if and only if, for all
$\sigma$ and $\tau$, $\sigma \prec_w \tau$ implies
$\sigma \prec_v \tau$.\footnote{The reason for this order reversal is
  that the relations occur as hypotheses, in negative position, in
  definition of subtyping: if we have $v \leq w$ and
  $\mc{type}~\ty{v\alpha}{t}$, it is safe to assume
  $\mc{type}~\ty{w\alpha}{t}$, since $\sigma \prec_w \sigma'$ implies
  $\sigma \prec_v \sigma'$, which implies
  $\ty{\sigma}{t} \leq \ty{\sigma'}{t}$. One may also see it, as Abel
  notes, as an ``information order'': knowing that
  $\sigma \prec_{\vplus} \tau$ ``gives you more information'' than
  knowing that $\sigma \prec_{\virr} \tau$, therefore
  $\virr \leq \vplus$.}
This reflexive, partial order is described by the lattice diagram in
figure~\ref{fig:order_diagram}. All variances are smaller than $\veq$
and bigger than $\virr$.

\begin{figure}
\abovedisplayskip 0em\belowdisplayskip 0em
\begin{minipage}[b]{0.45\linewidth}
\[
\def \C#1{\multicolumn{1}{C}{#1}}
\begin{tabular}{R|c@{}|C|C|C|C|l}
  \varcomp v w && \C{\veq} & \C{\vplus} & \C{\vminus} & \C{\virr} & $w$
\\ 
\cline{1-1}\cline{3-7} \noalign{\vskip \doublerulesep}\cline{1-1}\cline{3-6}
  \veq    && \veq  & \veq     & \veq    & \virr \\ \cline{3-6}
  \vplus  && \veq  & \vplus   & \vminus & \virr \\ \cline{3-6}
  \vminus && \veq  & \vminus  & \vplus  & \virr \\ \cline{3-6}
  \virr   && \virr & \virr    & \virr   & \virr \\ \cline{3-6}
  v       &\multicolumn{5}{C}{}
  \end{tabular}
\]
\label{fig:composition_table}
\caption{Variance composition table}
\end{minipage}
\hspace{0.5cm}
\begin{minipage}[b]{0.45\linewidth}
\[
\xymatrix@=1.4ex{
 & \veq \ar@{-}[dr] & \\
\vplus \ar@{-}[ur] \ar@{-}[dr] & & - \\
& \virr \ar@{-}[ur] & \\
}
\]
\label{fig:order_diagram}
\caption{Variance order diagram}
\end{minipage}
\end{figure}

\begin{version}{\False}
%% \renewcommand{\vardiv}[2]{\sfrac{#1}{#2}} 
%\sfrac comes from package{xfrac}
Conversely, if we assume that $D[C[\_]]$ has variance $u$, and know
that $D[\_]$ has variance $w$, what must be the variance of $C[\_]$
alone ? For a fixed $w$, the composition operation
$v \mapsto \varcomp v w$ has a ``quasi-inverse'', which we will write
$u \mapsto \vardiv u w$. We say ``quasi'' because it does not verify
$\vardiv {(\varcomp v w)} w = v$, but only the weaker
$\vardiv {(\varcomp v w)} w \leq v$. More generally, Abel gives the
following connection:
  \[ \forall u, v, w,\quad u \leq \varcomp v w \iff \vardiv u w \leq v \]
The operation table for $\vardiv u w$ is the following:
$$
  \begin{tabular}{r|c|c|c|c|l}
    \multicolumn{1}{r}{$\vardiv u w$}
    & \multicolumn{1}{c}{$\veq$}
    & \multicolumn{1}{c}{$\vplus$}
    & \multicolumn{1}{c}{$\vminus$}
    & \multicolumn{1}{c}{$\virr$}
    & $w$
    \\ \cline{2-5}
    $\veq$    & $\veq$  & $\veq$    & $\veq$    & $\veq$ \\ \cline{2-5}
    $\vplus$  & $\virr$ & $\vplus$  & $\vminus$ & $\veq$ \\ \cline{2-5}
    $\vminus$ & $\virr$ & $\vminus$ & $\vplus$  & $\veq$ \\ \cline{2-5}
    $\virr$   & $\virr$ & $\virr$   & $\virr$   & $\veq$ \\ \cline{2-5}
    \multicolumn{1}{r}{$u$}
    & \multicolumn{1}{c}{}
    & \multicolumn{1}{c}{}
    & \multicolumn{1}{c}{}
    & \multicolumn{1}{c}{}
    & \\
  \end{tabular}
$$
This operation is associative, and further we have that
$\vardiv{(\vardiv u w)} w' = \vardiv u {(\varcomp w w')}$. It is also
monotone, but contravariant in its right argument: if $v \leq v'$
and $w \geq w'$, then $\vardiv v w \leq \vardiv {v'} {w'}$.
\end{version}

From the order lattice on variances we can define join $\vee$ and meet
$\wedge$ of variances: $v \vee w$ is the biggest variance such that
$v \vee w \leq v$ and $v \vee w \leq w$; conversely, $v \wedge w$ is
the lowest variance such that $v \leq v \wedge w$ and
$w \leq v \wedge w$. Finally, the composition operation is monotone:
if $v \leq v'$ then $\varcomp w v \leq \varcomp w {v'}$
and $\varcomp v w \leq \varcomp {v'} w$.

We often manipulate vectors $\bar{v\alpha}$ of variable
associated with variances, which correpond to the ``context'' $\Gamma$
of a type declaration. We extend our operation pairwise on those
contexts: $\Gamma \vee \Gamma'$ and $\Gamma \wedge \Gamma'$, and the
ordering between contexts $\Gamma \leq \Gamma'$. We also extend the
variance-dependent subtyping relation $\rel{\prec_v}$, which becomes
an order $\rel{\prec_\Gamma}$ between vectors of type of the same
length: $\bs \prec_{\bar{v\alpha}} \bs'$ holds when we
have $\sigma_i \prec_{v_i} \sigma'_i$ for all $i$.

\subsection{A judgment for variance of type expressions}

We define a judgment to check the variance of a type expression. Given
a context $\G$ of the form $\bar{v\alpha}$, that is, where each
variable is annotated with a variance, the judgment $\G \der \tau : v$
checks that the expression $\tau$ varies along $v$ when the variables
of $\tau$ vary along their variance in $\G$.  For example,
$(\vplus\alpha) \vdash \app{\tau}{\alpha} : \vplus$ holds when
$\app{\tau}{\alpha}$ is covariant in its variable $\alpha$.  The
inference rules for the judgment $\G \der \tau : v$ are defined on
Figure~\ref{fig/variance}.

\begin{version}{\Not\Esop}
In the context of higher-order subtyping
\cite{polarized-subtyping-for-sized-types}, where type abstractions
are first-class and annotated with a variance
($\lambda v\alpha. \tau$), it is natural to present this check as
a kind checking of the form $\G \der \tau : \kappa$, where
$\G$ is a context $\Gva$ of type variables \emph{associated with
  variances}. For example, if $\vplus\alpha \der \tau : \star$ is provable, it
is sound to consider $\alpha$ covariant in $\tau$. In the context of
a simple first-order monomorphic type calculus, this amounts to
a \emph{monotonicity check} on the type $\tau$ as defined by
\cite{csharp-generalized-constraints}.
Both approaches use judgments of a peculiar form where the
\emph{context} changes when going under a type constructor: to check
$\G \der \sigma \to \tau$, one checks $\G \der \tau$ but
$(\vardiv \G \vminus) \der \sigma$, where $\vardiv \G \vminus$
reverses all the variances in the context $\G$
(turns $\rel\vminus$ into $\rel\vplus$ and conversely). Abel gives an
elegant presentation of this inversion $/$ as an algebraic operation
on variances, a quasi-inverse such that $\vardiv u v \leq w$ if and
only if $u \leq \varcomp w v$. This context change is also reminiscent
of the \emph{context resurrection} operation of the literature on
proof irrelevance (in the style of \cite{pfenning:intextirr}
for example).
\end{version}

\begin{mathparfig}{fig/variance}{Variance assignment}
\inferrule[vc-Var]
  {w\alpha \in \G\\ w \geq v}
  {\G \der \alpha : v}

\inferrule[vc-Constr]
  {\G \der \mc{type}~\ty{\bar{w\alpha}}{t}\\
   \forall i,\ \G \der \sigma_i : \varcomp v {w_i}}
  {\G \der \ty{\bar{\sigma}}{t} : v}
\end{mathparfig}

\begin{version}{\Not\Esop}
  We chose an equivalent but more conventional style where the context
  of subderivation does not change: instead of a judgment
  $\G \der \tau$ that becomes ${(\vardiv \G u) \der \sigma}$ when
  moving to a subterm of variance $u$, we define a judgment of the
  form $\G \der \tau : v$, that evolves into
  $\G \der \sigma : (\varcomp v u)$. The two styles are equally
  expressive: our judgment $\G \der \tau : v$ holds if and only if
  $(\vardiv \G v) \der \tau$ holds in Abel's system---but we found
  that this one extends more naturally to checking decomposability, as
  will later be necessary. The inference rules for the judgment
  $\G \der \tau : v$ are defined on Figure~\ref{fig/variance}.
\end{version}
\begin{version}{\Esop}
  The parameter $v$ evolves when going into subderivations: when
  checking $\G \der \tau_1 \to \tau_2 : v$, contravariance is
  expressed by checking
  $\G \der \tau_1 : \rel{\varcomp v \vminus}$. Previous work
  (on variance as \cite{polarized-subtyping-for-sized-types} and
  \cite{csharp-generalized-constraints}, but also on irrelevance as in
  \cite{pfenning:intextirr}) used no such parameter, but modified the
  context instead, checking $\vardiv \G \vminus \der \tau_1$ for
  some ``variance cancellation'' operation $v \vardiv w$
  (see \cite{polarized-subtyping-for-sized-types} for
  a principled presentation). Our own inference rules preserve the
  same context in the whole derivation and can be more easily adapted
  to the decomposability judgment $\G \der \tau : v \To v'$ that we
  introduce in \S\ref{sec/syntactic-decomposability}.
\end{version}

\paragraph{A semantics for variance assignment}

This syntactic judgment $\Gamma \der \tau : v$ corresponds to
a semantic property about the types and context involved, which
formalizes our intuition of ``when the variables vary along $\Gamma$,
the expression $\tau$ varies along $v$''. We also give a few formal
results about this judgment.

\begin{definition}[Interpretation of the variance checking judgment]
\label{def/vc/semantics}\indent\\
We write $\sem{\G \der \tau : v}$ for the property:
$
\forall \bs, \bs', \;
    \bs \prec_{\G} \bs'
    \implies \app \tau \bs \prec_v \app \tau {\bs'}
$.
\end{definition}
\begin{lemma}[Correctness of variance checking]
  \label{lem/variance-checking-correct}\indent\\
  $\G \der \tau : v$ is provable if and only if
  $\sem{\G \der \tau : v}$ holds.
\end{lemma}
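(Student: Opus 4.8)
The plan is to prove the two implications separately, both by induction, sharing a single sub-lemma that isolates the interaction between the composition operation $\varcomp{}{}$ and the constructor rule \Rule{sub-Constr}. Concretely, for a declared constructor $\mc{type}~\ty{\bar{w\alpha}}{t}$, any variance $v$, and any argument sequences $\bar\rho,\bar{\rho'}$, I claim
\[ \ty{\bar\rho}{t} \prec_v \ty{\bar{\rho'}}{t} \iff \forall i,\ \rho_i \prec_{\varcomp v {w_i}} \rho'_i. \]
The forward direction is the inversion principle for subtyping recalled above (page~\pageref{subtyping_inversion}) read through the definition of $\varcomp{}{}$ (figure~\ref{fig:composition_table}); the backward direction reassembles the pointwise relations using \Rule{sub-Constr}. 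I would dispatch the four values of $v$ together: for $v=\vplus$ we have $\varcomp\vplus{w_i}=w_i$ and the claim is \Rule{sub-Constr} itself; $v=\vminus$ is the same after swapping $\bar\rho$ and $\bar{\rho'}$; $v=\veq$ is the conjunction of the $\vplus$ and $\vminus$ cases, using that $\prec_\veq$ is ${\leq}\cap{\geq}$ while $\varcomp\veq{w_i}$ collapses to $\veq$ (or $\virr$ when $w_i=\virr$); and $v=\virr$ is vacuous on both sides since $\prec_\virr$ and $\varcomp\virr{w_i}$ are everywhere trivial.

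For soundness (``only if''), I would induct on the derivation of $\G\der\tau:v$. In the \Rule{vc-Var} case $\tau=\alpha$ with $w\alpha\in\G$ and $w\geq v$; given $\bs\prec_\G\bs'$ we have $\app\alpha\bs\prec_w\app\alpha{\bs'}$ by definition of $\prec_\G$, and $w\geq v$ means precisely $\rel{\prec_w}\subseteq\rel{\prec_v}$, so $\app\alpha\bs\prec_v\app\alpha{\bs'}$. In the \Rule{vc-Constr} case $\tau=\ty{\bar\sigma}{t}$ with premises $\G\der\sigma_i:\varcomp v{w_i}$; the induction hypothesis gives $\sem{\G\der\sigma_i:\varcomp v{w_i}}$, so for any $\bs\prec_\G\bs'$ and each $i$ we obtain $\app{\sigma_i}\bs\prec_{\varcomp v{w_i}}\app{\sigma_i}{\bs'}$, and the backward direction of the sub-lemma assembles these into $\app\tau\bs\prec_v\app\tau{\bs'}$. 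This covers base constants as the nullary instance of \Rule{vc-Constr}, and arrows and products as the distinguished binary constructors.

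For completeness (``if''), I would induct on the structure of $\tau$. If $\tau=\alpha$ with $w\alpha\in\G$, I instantiate the semantic hypothesis on sequences agreeing on every variable except $\alpha$ (setting $\sigma_j=\sigma'_j$ for $j\neq\alpha$, which satisfies $\sigma_j\prec_{w_j}\sigma'_j$ since each $\prec_{w_j}$ contains the diagonal) and ranging freely over $\sigma_\alpha\prec_w\sigma'_\alpha$; the conclusion yields $\sigma_\alpha\prec_v\sigma'_\alpha$, hence $\rel{\prec_w}\subseteq\rel{\prec_v}$, which is by definition $v\leq w$, so \Rule{vc-Var} applies. If $\tau=\ty{\bar\sigma}{t}$, I use the forward direction of the sub-lemma: for any $\bs\prec_\G\bs'$ the hypothesis $\sem{\G\der\tau:v}$ gives $\app\tau\bs\prec_v\app\tau{\bs'}$, from which inversion extracts $\app{\sigma_i}\bs\prec_{\varcomp v{w_i}}\app{\sigma_i}{\bs'}$ for each $i$; since this holds for all $\bs\prec_\G\bs'$ it is exactly $\sem{\G\der\sigma_i:\varcomp v{w_i}}$, and the induction hypothesis on the subterm $\sigma_i$ yields $\G\der\sigma_i:\varcomp v{w_i}$, so \Rule{vc-Constr} applies.

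The one genuinely load-bearing step is the sub-lemma, and within it the inversion (forward) direction; everything else is a mechanical unfolding of $\prec_\G$ and of the order on variances. I expect the delicate points to be the uniform treatment of $\veq$ (where mutual subtyping must be tracked in both directions and reconciled with $\varcomp\veq{w_i}$ collapsing to $\veq$ or $\virr$) and the observation, used in the completeness variable case, that $v\leq w$ is \emph{definitionally} the inclusion $\rel{\prec_w}\subseteq\rel{\prec_v}$, so no separate faithfulness argument about the richness of the subtyping relation is required here.
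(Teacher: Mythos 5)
Your proposal is correct and follows essentially the same route as the paper's proof: soundness by induction on the derivation and completeness by induction on the structure of $\tau$, with the variable cases handled directly and the constructor cases resting on the inversion principle and the definition of variance composition. The only difference is presentational---you package the inversion/recomposition step as an explicit iff sub-lemma, which the paper invokes implicitly as ``by inversion'' and ``by definition of variance composition.''
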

\begin{proof}{inline}%{hide}
  \Case{Soundness: $\G \der \tau : v$ implies
    $\sem{\G \der \tau : v}$.} By induction on the derivation. In
  the variable case this is direct. In the $\ty{\bs}{t}$ case, for
  $\br, \br'$ such that $\br \prec_\G \br'$, we get
  $\forall i, \app{\sigma_i}{\br} \prec_{\varcomp v {w_i}} \app{\sigma_i}{\br'}$
  by inductive hypothesis, which allows to conclude, by definition of
  variance composition, that
  $\app{(\ty{\bs}{t})}{\br} \prec_v \app{(\ty{\bs}{t})}{\br'}$.

  \Case{Completeness: $\sem{\G \der \tau : v}$ implies
    $\G \der \tau : v$.} By induction on $\tau$. In the variable
  case this is again direct. In the $\ty{\bs}{t}$ case, given
  $\br \prec_\G \br'$ such that
  $\app{(\ty{\bs}{t})}{\br} \prec_v \app{(\ty{\bs}{t})}{\br'}$ we 
  deduce
  $\app{\sigma_i}{\br} \prec_{\varcomp v {w_i}} \app{\sigma_i}{\br'}$
for each variable $\alpha_i$ of variance $w_i$ in $\app{\tau}{\ba}$, 
  by inversion; this allows  to inductively build the subderivations
  $\G \der \sigma_i : \varcomp v {w_i}$.
\qed
\end{proof}

\begin{lemma}[Monotonicity] 
\label{lem/monotonicity}\indent\\
  If $\G \der \tau : v$ is provable and $\G \leq \G'$ then
  $\G' \der \tau : v$ is provable.
\end{lemma}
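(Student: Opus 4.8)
The plan is to proceed by structural induction on the derivation of $\G \der \tau : v$, keeping the hypothesis $\G \leq \G'$ fixed throughout. Since the judgment of Figure~\ref{fig/variance} is defined by exactly two rules, there are two cases, and in each I re-derive the conclusion with $\G'$ substituted for $\G$.

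\textbf{Base case (\Rule{vc-Var}).} Here $\tau$ is a variable $\alpha$ and the derivation ends with $w\alpha \in \G$ and $w \geq v$. Because $\G \leq \G'$ is the pointwise extension of the variance order, the binding of $\alpha$ in $\G'$ is some $w'\alpha$ with $w \leq w'$. The only thing to verify is the side condition of \Rule{vc-Var} for $\G'$, namely $w' \geq v$, which follows by transitivity from $w' \geq w$ and $w \geq v$. I expect this direction-chasing to be the single place where the argument can go wrong: one must remember that passing to the larger context $\G'$ makes the recorded variances \emph{larger}, so the side condition only becomes easier to satisfy. Getting this order backwards is the main (and really the only) pitfall.

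\textbf{Inductive case (\Rule{vc-Constr}).} Here $\tau = \ty{\bar\sigma}{t}$ and the derivation has premises $\G \der \mc{type}~\ty{\bar{w\alpha}}{t}$ together with $\G \der \sigma_i : \varcomp v {w_i}$ for every $i$. The first premise only records that $\tyc{t}$ was previously declared with variance $\bar{w\alpha}$; it is a global fact about the declaration and does not constrain the ambient context, so it carries over to $\G'$ verbatim. For each $i$, the subderivation $\G \der \sigma_i : \varcomp v {w_i}$ is strictly smaller, so the induction hypothesis (applied with the same inequality $\G \leq \G'$) yields $\G' \der \sigma_i : \varcomp v {w_i}$. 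Re-applying \Rule{vc-Constr} to these premises gives $\G' \der \ty{\bar\sigma}{t} : v$, as required.

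Thus the proof is a routine induction with no genuine difficulty. As an alternative route—cleaner, though it leans on the preceding correctness result—I could argue semantically: $\G \leq \G'$ means $\rel{\prec_{\G'}} \subseteq \rel{\prec_\G}$, since enlarging each component variance refines the associated relation. Hence any pair with $\bs \prec_{\G'} \bs'$ also satisfies $\bs \prec_\G \bs'$; feeding this into $\sem{\G \der \tau : v}$ gives $\app\tau\bs \prec_v \app\tau{\bs'}$, which is precisely $\sem{\G' \der \tau : v}$. Lemma~\ref{lem/variance-checking-correct} then converts this semantic statement back into provability of $\G' \der \tau : v$.
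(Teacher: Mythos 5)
Your proof is correct; the paper itself dismisses this lemma as ``Obvious'' without giving any argument, and your routine induction on the derivation (with the transitivity step $w' \geq w \geq v$ in the variable case being the only point of substance) is exactly the argument the authors are eliding. The alternative semantic route via Lemma~\ref{lem/variance-checking-correct} is also valid, though it is heavier machinery than the lemma warrants.
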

\begin{proof}{hide}
Obvious. 
\end{proof}
\begin{version}{\Not\Esop}
\begin{lemma}
  If $\G \der \tau : v$ and $\G' \der \tau : v$ both hold,
  then $(\G \vee \G') \der \tau : v$ also holds.
\end{lemma}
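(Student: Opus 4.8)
The plan is to proceed by structural induction on the type expression $\tau$. Since the variance assignment rules of Figure~\ref{fig/variance} are syntax-directed --- \Rule{vc-Var} applies exactly when $\tau$ is a variable and \Rule{vc-Constr} exactly when $\tau$ is a constructor application --- the two given derivations of $\G \der \tau : v$ and $\G' \der \tau : v$ necessarily end with the same rule, so I can invert them simultaneously and recurse on matching subterms. I would first record why the tempting semantic route fails: one might hope to invoke Lemma~\ref{lem/variance-checking-correct} and argue pointwise that each hypothesis $\sigma_i \prec_{w_i \vee w'_i} \sigma'_i$ reduces to one of $\sigma_i \prec_{w_i} \sigma'_i$ or $\sigma_i \prec_{w'_i} \sigma'_i$. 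This is false, because $\rel{\prec_{v \vee w}}$ is strictly larger than $\rel{\prec_v} \cup \rel{\prec_w}$ in general: for instance $\vplus \vee \vminus = \virr$, and $\rel\Join$ strictly contains $\rel\leq \cup \rel\geq$ since our subtyping relation admits incomparable types (e.g. $\tyc q$ and a function type). Hence the argument stays syntactic.

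In the variable case $\tau = \alpha_j$, inverting \Rule{vc-Var} on each derivation shows that the annotation $w_j$ of $\alpha_j$ in $\G$ and the annotation $w'_j$ in $\G'$ satisfy $w_j \geq v$ and $w'_j \geq v$. By the pointwise definition of $\vee$ on contexts, the annotation of $\alpha_j$ in $\G \vee \G'$ is exactly $w_j \vee w'_j$. Since $\vee$ is the greatest lower bound for the variance order and $v$ is a common lower bound of $w_j$ and $w'_j$, we get $v \leq w_j \vee w'_j$, that is $w_j \vee w'_j \geq v$; applying \Rule{vc-Var} then yields $(\G \vee \G') \der \alpha_j : v$. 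In the constructor case $\tau = \ty{\bs}{t}$ with $\mc{type}~\ty{\bar{w\alpha}}{t}$, inverting \Rule{vc-Constr} gives $\G \der \sigma_i : \varcomp v {w_i}$ and $\G' \der \sigma_i : \varcomp v {w_i}$ for every $i$ --- crucially with the \emph{same} target variance $\varcomp v {w_i}$, since $v$ and the declared $w_i$ are shared between the two derivations. The induction hypothesis applied to each $\sigma_i$ then yields $(\G \vee \G') \der \sigma_i : \varcomp v {w_i}$, and re-applying \Rule{vc-Constr} concludes $(\G \vee \G') \der \ty{\bs}{t} : v$.

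The only point deserving care --- rather than a genuine obstacle --- is the variable case, and it is twofold. First, both hypotheses must really be used: the result cannot be recovered from Lemma~\ref{lem/monotonicity} alone, since $\G \vee \G' \leq \G$ and $\G \vee \G' \leq \G'$, so monotonicity only moves toward \emph{larger} contexts whereas $\G \vee \G'$ is smaller than both. Second, the lattice step $v \leq w_j \vee w'_j$ follows directly from $v \leq w_j$ and $v \leq w'_j$ by the characterization of $\vee$ as a meet, so no explicit case analysis over the four variances is needed.
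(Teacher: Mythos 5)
Your proof is correct, and since the paper dismisses this lemma with ``Obvious'', your structural induction --- inverting the syntax-directed rules \Rule{vc-Var} and \Rule{vc-Constr} and using that $v \leq w_j$ and $v \leq w'_j$ imply $v \leq w_j \vee w'_j$ because the paper's $\vee$ is by definition the greatest lower bound --- is exactly the intended argument, and your remark that the result cannot be obtained from Lemma~\ref{lem/monotonicity} (which goes the wrong way) nor from a naive pointwise semantic reduction (since $\rel{\prec_{\vplus \vee \vminus}} = \rel{\Join}$ strictly contains $\rel\leq \cup \rel\geq$) is a worthwhile clarification. The only nit is terminological: you correctly treat $\vee$ as a meet in the standard lattice-theoretic sense, which matches the paper's (nonstandard) definition of its ``join''.
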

\begin{proof}{hide}
Obvious
\end{proof}
\begin{corollary}[Principality] \label{principality}
  For any type $\tau$ and any variance $v$, there exists a minimal context
  $\Delta$ such that $\Delta \der \tau : v$ holds. That is, for any other
  context $\G$ such that $\G \der \tau : v$, we have $\Delta \leq
  \G$.
\end{corollary}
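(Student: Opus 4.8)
The plan is to exhibit $\Delta$ as the greatest lower bound, in the variance order, of the whole set of validating contexts, and then to lean on the two preceding lemmas to show that this infimum itself still validates the judgment. Concretely, fix $\tau$ and $v$, and let $S = \set{\G \mid \G \der \tau : v}$, where $\G$ ranges over contexts assigning a variance to each free variable of $\tau$ (fixing this common domain is what makes the pairwise operations $\vee$, $\wedge$, $\leq$ meaningful). Since the four variances $\set{\veq,\vplus,\vminus,\virr}$ form a finite lattice and $\tau$ has finitely many free variables, $S$ is a finite subset of a finite lattice, so I can simply take $\Delta = \bigvee_{\G \in S}\G$, the iterated infimum, well defined as a finite application of the binary operation $\vee$.

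First I would check that $S$ is nonempty, so that the infimum is taken over a nonempty set. The maximal context $\veq\ba$, assigning the invariant variance to every variable, belongs to $S$: because $\veq$ is the top of the variance lattice we have $\veq \geq v$, so rule \Rule{vc-Var} fires on each variable, and a routine induction on the structure of $\tau$ threading \Rule{vc-Constr} gives $\veq\ba \der \tau : v'$ for every target variance $v'$. (Equivalently, by Lemma~\ref{lem/variance-checking-correct}: $\bs \prec_{\veq\ba} \bs'$ forces $\bs = \bs'$, hence $\app{\tau}{\bs} = \app{\tau}{\bs'}$, and each relation $\rel{\prec_v}$ is reflexive, so $\sem{\veq\ba \der \tau : v}$ holds.)

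It then remains to verify $\Delta \in S$ and its minimality. For membership I would apply the preceding lemma---closure of validity under $\vee$---repeatedly along the finite join $\Delta = \G_1 \vee \cdots \vee \G_k$ with $\set{\G_1,\dots,\G_k} = S$, concluding $\Delta \der \tau : v$ by a trivial induction on $k$. Minimality is then immediate from the defining property of $\vee$ as the greatest lower bound: $\Delta \leq \G$ for every $\G \in S$, which is exactly the statement. I expect the only load-bearing ingredients to be the $\vee$-closure lemma together with the finiteness of the variance lattice, which is precisely what lets binary closure be iterated into closure under the full infimum; in an infinite setting one would instead need closure under arbitrary infima.

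Two points deserve care. The main one is the direction of the order: since the variance order is reversed relative to information content, $\vee$ denotes the \emph{infimum}, and the sought \emph{minimal} context is the $\vee$ of all candidates rather than their $\wedge$. Second, while Monotonicity (Lemma~\ref{lem/monotonicity}) is not strictly required for the argument above, it supplies the complementary fact that $S$ is upward closed; combined with $\Delta \in S$ and minimality it yields $S = \set{\G \mid \Delta \leq \G}$, so $\Delta$ presents $S$ exactly as the principal up-set it generates.
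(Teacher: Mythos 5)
Your proof is correct and follows exactly the route the paper intends: the result is stated as a corollary of the closure-under-$\vee$ lemma, and taking $\Delta$ as the finite infimum (the paper's $\vee$, which is indeed the greatest lower bound in its reversed order) of all validating contexts is the expected argument. Your explicit check that the all-invariant context always validates the judgment, so the infimum is over a nonempty set, is a detail the paper leaves implicit but is genuinely needed for the statement to hold.
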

\end{version}

\begin{version}{\Esop}
\begin{lemma}[Principality] \label{principality}
  For any type $\tau$ and any variance $v$, there exists a minimal context
  $\Delta$ such that $\Delta \der \tau : v$ holds. That is, for any other
  context $\G$ such that $\G \der \tau : v$, we have $\Delta \leq
  \G$.
\end{lemma}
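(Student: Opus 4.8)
The plan is to exploit the lattice structure of variances together with finiteness, reusing the two preceding lemmas. I fix the finite set of variables over which the contexts range (the free variables of $\tau$, equivalently the parameters of the ambient declaration), so that all contexts share a common domain and the pointwise order $\leq$, the meet $\vee$ and the join $\wedge$ are all well defined. I then consider the set $S = \{\Gamma \mid \Gamma \der \tau : v\}$ of contexts validating the judgment, and argue that $S$ has a least element, which will be the desired $\Delta$.

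First I would check that $S$ is nonempty: the context $\Gamma_{\veq}$ assigning $\veq$ to every variable proves $\Gamma_{\veq} \der \tau : v$ for every $v$, by a straightforward induction on $\tau$. In the variable case \Rule{vc-Var} applies because $\veq \geq v$ for all $v$ (every variance is below $\veq$); in the constructor case \Rule{vc-Constr} applies since its premises $\Gamma_{\veq} \der \sigma_i : \varcomp v {w_i}$ hold by induction hypothesis. Second, $S$ is finite, since there are finitely many variables, each carrying one of only four variances. Third, $S$ is closed under $\vee$: this is the (easily verified) stability property that if $\Gamma \der \tau : v$ and $\Gamma' \der \tau : v$ then $\Gamma \vee \Gamma' \der \tau : v$. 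Being finite, nonempty, and closed under binary $\vee$, the set $S$ is therefore closed under the $\vee$ of all its elements. I set $\Delta = \bigvee S = \Gamma_1 \vee \cdots \vee \Gamma_n$ for $S = \{\Gamma_1,\dots,\Gamma_n\}$; iterating the closure property gives $\Delta \in S$, so $\Delta \der \tau : v$ holds, and by definition of $\vee$ as the greatest lower bound we have $\Delta \leq \Gamma$ for every $\Gamma \in S$. That is exactly minimality. (Monotonicity, Lemma~\ref{lem/monotonicity}, is the companion fact ensuring $S$ is upward closed, so that $S$ is precisely the set of contexts above $\Delta$.)

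I do not expect a genuine obstacle here; the content is routine once the lattice framework is in place, and the only points needing care are fixing a common finite domain for the contexts (without which $S$ would be infinite and a least element could fail to exist) and establishing nonemptiness via $\Gamma_{\veq}$. As an alternative that additionally yields an \emph{algorithm} computing $\Delta$, I would instead prove the statement by structural induction on $\tau$: in the variable case the minimal context assigns $v$ to $\alpha$ and $\virr$ to every other variable, and in the constructor case $\ty{\bar\sigma}{t}$ the minimal context is the least upper bound $\wedge$ of the minimal contexts $\Delta_i$ obtained inductively for each $\sigma_i$ at variance $\varcomp v {w_i}$. Minimality in the constructor case follows from the invertibility of \Rule{vc-Constr}: any $\Gamma \der \ty{\bar\sigma}{t} : v$ forces $\Gamma \der \sigma_i : \varcomp v {w_i}$, hence $\Gamma \geq \Delta_i$ for each $i$ and thus $\Gamma \geq \bigwedge_i \Delta_i$.
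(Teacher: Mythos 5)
Your main argument is exactly the paper's route: the paper proves the auxiliary lemma that $\G \der \tau : v$ and $\G' \der \tau : v$ imply $(\G \vee \G') \der \tau : v$ and then states principality as an immediate corollary, which is precisely your combination of nonemptiness, finiteness, and closure under $\vee$ (you merely make the nonemptiness via $\Gamma_{\veq}$ and the finiteness of the context lattice explicit). Your alternative structural-induction construction of $\Delta$ is also correct and has the added benefit of being algorithmic, but it goes beyond what the paper records.
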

\end{version}

\begin{version}{\Not\Esop}
\paragraph{Inversion of subtyping}

We have mentioned in \ref{subtyping_inversion} the inversion
properties of our subtyping relation. From
$\ty{\bs}{t} \leq \ty{\bs'}{t}$ we can deduce subtyping relations on
the type parameters $\sigma_i, \sigma'_i$. This can be generalized to
any type expression $\app{\tau}{\ba}$:

\begin{theorem}[Inversion] \label{thm/inversion} For any type
  $\app{\tau}{\ba}$, variance $v$, and type sequences $\bs$ and
  $\bs'$, the subtyping relation
  $\app{\tau}{\bs} \prec_v \app{\tau}{\bs'}$ holds if and only if the
  judgment $\G \der \tau : v$ holds for some context $\G$ such
  that $\bs \prec_\G \bs'$.
\end{theorem}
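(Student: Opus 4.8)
The plan is to prove the two implications separately; the right-to-left direction is immediate, while left-to-right carries all the work.

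\emph{From the judgment to the subtyping ($\Leftarrow$).} This is a direct specialization of the soundness half of Lemma~\ref{lem/variance-checking-correct}. Suppose $\G \der \tau : v$ holds for some $\G$ with $\bs \prec_\G \bs'$. Soundness gives the semantic property $\sem{\G \der \tau : v}$, that is, $\app{\tau}{\br} \prec_v \app{\tau}{\br'}$ for every $\br \prec_\G \br'$. Instantiating at $\br := \bs$ and $\br' := \bs'$ yields $\app{\tau}{\bs} \prec_v \app{\tau}{\bs'}$, so this direction costs nothing beyond invoking the lemma.

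\emph{From the subtyping to the judgment ($\Rightarrow$): building the context.} Given $\app{\tau}{\bs} \prec_v \app{\tau}{\bs'}$, I would first construct the witness context $\G$ pointwise from the pair $(\bs,\bs')$ alone, independently of $\tau$: for each variable $\alpha_i$ I set $g_i$ to be the greatest variance $w$ such that $\sigma_i \prec_w \sigma'_i$, and take $\G := \bar{g\alpha}$. Before using this I would check it is well-defined. The set $\{w \mid \sigma_i \prec_w \sigma'_i\}$ is downward closed for the variance order (if $w' \leq w$ then $\rel{\prec_{w'}} \supseteq \rel{\prec_w}$), and it always has a greatest element: the only shape lacking one would be $\{\virr,\vplus,\vminus\}$ without $\veq$, which cannot occur, since $\rel{\prec_\veq} = \rel{\prec_\vplus} \cap \rel{\prec_\vminus}$ forces $\sigma_i = \sigma'_i$ as soon as both $\sigma_i \leq \sigma'_i$ and $\sigma_i \geq \sigma'_i$ hold. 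With this choice, $\bs \prec_\G \bs'$ holds by construction. Keeping $\G$ fixed across all subterms is precisely what lets me avoid combining per-subterm contexts through the variance meet and join.

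\emph{From the subtyping to the judgment ($\Rightarrow$): the induction.} It remains to prove $\G \der \tau : v$. I would show, by structural induction on $\tau$ and generalizing over $v$ (the sub-goals arise at different variances), the statement: if $\app{\tau}{\bs} \prec_v \app{\tau}{\bs'}$ then $\G \der \tau : v$, for the fixed $\G$ above. In the variable case $\tau = \alpha_j$, the hypothesis $\sigma_j \prec_v \sigma'_j$ makes $v$ a valid variance for $(\sigma_j,\sigma'_j)$, so $v \leq g_j$ by maximality and \Rule{vc-Var} applies. In the constructor case $\tau = \ty{\bar\rho}{t}$, where $\tyc{t}$ is declared with variance $w_l$ in its $l$-th parameter, I would apply the inversion principle of subtyping in its variance-composed form (the composition operation of the algebra of variances, generalizing the covariant inversion recalled at~\ref{subtyping_inversion}): from $\app{\tau}{\bs} \prec_v \app{\tau}{\bs'}$ it gives $\app{\rho_l}{\bs} \prec_{\varcomp v {w_l}} \app{\rho_l}{\bs'}$ for each $l$. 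The induction hypothesis at variance $\varcomp v {w_l}$ then yields $\G \der \rho_l : \varcomp v {w_l}$, and \Rule{vc-Constr} concludes $\G \der \tau : v$. Arrows and products are covered as the special constructors of variance $(\vminus,\vplus)$ and $(\vplus,\vplus)$.

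\emph{Main obstacle.} All the delicate points sit in the $\Rightarrow$ direction. The two that require care are, first, establishing that a greatest valid variance exists, which is exactly where the shape of the variance lattice and the identity $\rel{\prec_\veq} = \rel{\prec_\vplus} \cap \rel{\prec_\vminus}$ are used; and second, using subtyping inversion in its composed form rather than the bare covariant statement. Orientation is the easiest thing to get wrong: since $\virr$ is the bottom and $\veq$ the top of the order, ``greatest valid variance'' means the \emph{most informative} one, and \Rule{vc-Var} needs $v \leq g_j$, not the reverse. Everything else is routine bookkeeping.
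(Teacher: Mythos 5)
Your proof is correct, and the skeleton of both directions matches the paper's: the $\Leftarrow$ direction is exactly the soundness half of Lemma~\ref{lem/variance-checking-correct}, and the $\Rightarrow$ direction is a structural induction on $\tau$ using variance-composed inversion of subtyping at the head constructor, closed off by \Rule{vc-Var} and \Rule{vc-Constr}. Where you genuinely diverge is in how the witness context is produced. The paper lets the induction return one context per subterm ($(v\alpha)$ in the variable case, a family $(\G_i)$ in the constructor case) and then merges them into $\mathop{\wedge}_i \G_i$, appealing to monotonicity (Lemma~\ref{lem/monotonicity}) to lift each subderivation to the merged context. You instead fix $\G$ once and for all as the pointwise greatest context satisfying $\bs \prec_\G \bs'$, which makes the induction hypothesis uniform in $\G$ and eliminates both the merging step and the use of monotonicity; the price is the small auxiliary fact that a greatest valid variance exists for each pair $(\sigma_i,\sigma'_i)$, which you correctly reduce to the identity $\rel{\prec_\veq} = \rel{\prec_\vplus} \cap \rel{\prec_\vminus}$ ruling out the one down-set $\{\virr,\vplus,\vminus\}$ of the variance lattice that lacks a top element. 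Both arguments are sound and of comparable length; the paper's version keeps the context construction local to the induction (and its by-product, the family of per-subterm contexts, is closer in spirit to what the principal-inversion corollary needs), while yours isolates the lattice reasoning into a single up-front definition and keeps the induction itself entirely mechanical.
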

\begin{proof}{}
  The  reverse implication, 
%% $$
%%   \bigparens 
%%      {\exists \G, \uad(\G \der \tau : v) \wedge \bs \prec_\G \bs'}
%%   \wide\implies \app{\tau}{\bs} \prec_v \app{\tau}{\bs'}
%% $$
  is a direct application of the soundness of the variance judgment.

  The direct implication is proved by induction on
  $\app{\tau}{\ba}$. The variable case is direct: if
  $\app{\alpha}{\sigma} \prec_v \app{\alpha}{\sigma'}$ holds then for
  $\G$ equal to $(v\alpha)$ we indeed have $v\alpha \der \alpha : v$
  and $\sigma \prec_\G \sigma'$.

  In the $\ty{\bt}{t}$ case, we have that
  $\app{(\ty{\bt}{t})}{\bs} \prec_v \app{(\ty{\bt}{t})}{\bs'}$. Suppose
  the variance of $\ty{\ba}{t}$ is $\bar{w\alpha}$: by inversion on
  the head type constructor $\tyc{t}$ we deduce that for each $i$,
  $\app{\tau_i}{\bs} \prec_{\varcomp v {w_i}} \app{\tau_i}{\bs'}$. Our
  induction hypothesis then gives us a family of contexts
  $(\G_i)_\iI$ such that for each $i$ we have
  $\G_i \der \tau_i : \varcomp v {w_i}$. Furthermore,
  $\bs \prec_{\G_i} \bs'$ holds for all $\G_i$, which means
  that $\bs \prec_{\mathop{\wedge}_\iI \G_i} bs'$. Let's define
  $\G$ as $\mathop{\wedge}_\iI \G_i$. By construction we have
  $\G \geq \G_i$, so by monotonicity (Lemma~\ref{lem/monotonicity}) we
  have $\G \der \tau_i : \varcomp v {w_i}$ for each $i$. This
  allows us to conclude $\G \der \ty{\bt}{t} : v$ as desired.
\qed
\end{proof}

Note that this would work even for type constructors that are not
$v$-closed: we are not comparing a $\app{\tau}{\bs}$ to any type
$\tau'$, but to a type $\app{\tau}{\bs'}$ sharing the same
structure---the head constructors are always the same.

For any given pair $\bs, \bs'$ such that
$\app{\tau}{\bs} \prec_v \app{\tau}{\bs'}$ we can produce a context
$\G$ such that $\bs \prec_\G \bs'$. But is there a common
context that would work for any pair? Indeed, that is the lowest
possible context, the principal context $\G$ such that
$\G \der \tau : v$.

\begin{corollary}[Principal inversion] 
\label{cor/principal-inversion}
  If $\Delta$ is minimal for $\Delta \der \tau : v$, then for any
  type sequences $\bs$ and $\bs'$, the subtyping relation
  $\app{\tau}{\bs} \prec_v \app{\tau}{\bs'}$ implies
  $\bs \prec_\Delta \bs'$.
\end{corollary}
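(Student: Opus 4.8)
The plan is to obtain the conclusion by chaining the Inversion theorem (Theorem~\ref{thm/inversion}) with Principality (\ref{principality}), and then transporting the parameter relation along the context order. First I would feed the hypothesis $\app{\tau}{\bs} \prec_v \app{\tau}{\bs'}$ into the direct implication of Theorem~\ref{thm/inversion}, which produces \emph{some} context $\G$ simultaneously satisfying $\G \der \tau : v$ and $\bs \prec_\G \bs'$. The whole content of the corollary is that this \emph{a priori} pair-dependent $\G$ can be replaced by the single, $\tau$- and $v$-determined principal context $\Delta$.

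Next, since $\Delta$ is by assumption minimal among the contexts $\G'$ validating $\G' \der \tau : v$, and $\G$ is one such context, Principality (\ref{principality}) gives the comparison $\Delta \leq \G$ in the context order. It then remains only to deduce $\bs \prec_\Delta \bs'$ from the two facts $\bs \prec_\G \bs'$ and $\Delta \leq \G$.

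This last step is the only place requiring any thought, and it is where one must be careful about the direction of the ``coarser-than'' ordering. Writing $w_i$ and $u_i$ for the variances assigned to $\alpha_i$ by $\G$ and $\Delta$ respectively, the context inequality $\Delta \leq \G$ unfolds componentwise to $u_i \leq w_i$ for every $i$. By the very definition of the variance order, $u_i \leq w_i$ means $\rel{\prec_{u_i}} \supseteq \rel{\prec_{w_i}}$, i.e.\ every pair related by $\prec_{w_i}$ is also related by $\prec_{u_i}$. Applying this implication to each component of $\bs \prec_\G \bs'$—which unfolds to $\sigma_i \prec_{w_i} \sigma'_i$ for all $i$—yields $\sigma_i \prec_{u_i} \sigma'_i$ for all $i$, that is exactly $\bs \prec_\Delta \bs'$, as required.

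I expect no substantial obstacle here: the corollary is essentially the observation that Inversion always provides a witnessing context lying above the principal one, so the (coarser) relation induced by the principal context holds \emph{uniformly} for every pair $\bs, \bs'$. The only delicate point is tracking the order reversal inherent in the definition $v \leq w \iff \rel{\prec_v} \supseteq \rel{\prec_w}$, so that ``$\Delta$ smaller than $\G$'' correctly translates into ``$\prec_\Delta$ coarser than $\prec_\G$'', making $\bs \prec_\Delta \bs'$ derivable from the stronger $\bs \prec_\G \bs'$. Note also that Monotonicity (Lemma~\ref{lem/monotonicity}) is not needed for this direction; only the existence part of Principality and the componentwise definitions of the two context-level orders come into play.
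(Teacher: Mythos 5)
Your proof is correct and follows exactly the paper's argument: invoke the direct implication of Theorem~\ref{thm/inversion} to obtain a context $\G$ with $\G \der \tau : v$ and $\bs \prec_\G \bs'$, use minimality of $\Delta$ to get $\Delta \leq \G$, and conclude $\bs \prec_\Delta \bs'$. The only difference is that you spell out the final transport step componentwise via the coarser-than definition of the variance order, which the paper leaves implicit; your handling of the order reversal is accurate.
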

\begin{proof}{}
  Let $\Delta$ be the minimal context such that
  $\Delta \der \tau : v$ holds. For any $\bs, \bs'$ such that
  $\app{\tau}{\bs} \prec_v \app{\tau}{\bs'}$, by inversion
  (Theorem~\ref{thm/inversion}) we have some $\G$ such that
  $\G \der \tau : v$ and $\bs \prec_\G \bs'$. By definition of $\Delta$,
  $\Delta \leq \G$ so $\bs \prec_\Delta \bs'$ also holds. That is,
  $\bs \prec_\Delta \bs'$ holds for any $\bs, \bs'$ such that
  $\app{\tau}{\bs} \prec_v \app{\tau}{\bs'}$.
\qed
\end{proof}
\end{version}

\begin{version}{\Esop}
  We can generalize inversion of head type
  constructors~(\S\ref{subtyping_inversion}) to whole type
  expressions. The most general inversion is given by the principal
  context.

\begin{theorem}[Inversion]
\label{thm/inversion}
\label{cor/principal-inversion}
%\indent\\
  For any type $\app{\tau}{\ba}$, variance $v$, and type sequences
  $\bs$ and $\bs'$, the subtyping relation
  $\app{\tau}{\bs} \prec_v \app{\tau}{\bs'}$ holds if and only if the
  judgment $\G \der \tau : v$ holds for some context $\G$ such
  that $\bs \prec_\G \bs'$.
  Furthermore, if $\app{\tau}{\bs} \prec_v \app{\tau}{\bs'}$ holds, then
  $\bs \prec_\Delta \bs'$ holds, where $\Delta$ is the minimal context
  such that $\Delta \der \tau : v$. 
\end{theorem}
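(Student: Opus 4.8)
The plan is to treat the biconditional and the ``furthermore'' clause in turn, reducing both to results already established. The easy direction of the biconditional is immediate from soundness of variance checking: if $\G \der \tau : v$ holds with $\bs \prec_\G \bs'$, then Lemma~\ref{lem/variance-checking-correct} gives $\sem{\G \der \tau : v}$, and unfolding this property (Definition~\ref{def/vc/semantics}) yields exactly $\app{\tau}{\bs} \prec_v \app{\tau}{\bs'}$. All the real work is in the forward direction, which I would prove by structural induction on the type expression $\tau$.

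In the induction, the variable case $\tau = \alpha$ is direct: from $\app{\alpha}{\bs} \prec_v \app{\alpha}{\bs'}$ one takes $\G = (v\alpha)$, which satisfies $\G \der \alpha : v$ by \Rule{vc-Var} (using $v \geq v$) and $\bs \prec_\G \bs'$ by construction. The substantive case is $\tau = \ty{\bt}{t}$, where $\tyc t$ is declared at variance $\bar{w\alpha}$. Starting from $\app{(\ty{\bt}{t})}{\bs} \prec_v \app{(\ty{\bt}{t})}{\bs'}$, I would apply inversion on the head constructor (\S\ref{subtyping_inversion}) to get $\app{\tau_i}{\bs} \prec_{\varcomp v {w_i}} \app{\tau_i}{\bs'}$ for each $i$, feed each of these to the induction hypothesis to obtain contexts $\G_i$ with $\G_i \der \tau_i : \varcomp v {w_i}$ and $\bs \prec_{\G_i} \bs'$, and then set $\G = \mathop{\wedge}_\iI \G_i$. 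Monotonicity (Lemma~\ref{lem/monotonicity}), together with $\G \geq \G_i$, lifts each subderivation to $\G \der \tau_i : \varcomp v {w_i}$, so \Rule{vc-Constr} delivers $\G \der \ty{\bt}{t} : v$. The one algebraic point to verify is that $\bs \prec_\G \bs'$ is preserved by the meet: since the variance order is defined as reverse inclusion of relations, the meet $\mathop{\wedge}$ corresponds to intersection of the underlying relations, so $\bs \prec_{\G_i} \bs'$ holding for every $i$ gives $\bs \prec_{\mathop{\wedge}_\iI \G_i} \bs'$.

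For the ``furthermore'' clause I would invoke principality (Lemma~\ref{principality}) to fix the minimal $\Delta$ with $\Delta \der \tau : v$. Given any $\bs, \bs'$ with $\app{\tau}{\bs} \prec_v \app{\tau}{\bs'}$, the forward direction just proved supplies some $\G$ with $\G \der \tau : v$ and $\bs \prec_\G \bs'$; minimality gives $\Delta \leq \G$, and because a smaller variance denotes a larger relation, $\bs \prec_\G \bs'$ entails $\bs \prec_\Delta \bs'$.

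The main obstacle is not the depth of the argument but guarding the legitimacy of the head-constructor inversion in the constructor case. That inversion is applied to two types $\app{\tau}{\bs}$ and $\app{\tau}{\bs'}$ that already share the same head $\tyc t$, so it remains valid even though $\tyc t$ need not be $v$-closed: we are never inverting $\app{\tau}{\bs}$ against an unrelated supertype, only against another instance of the same constructor, whose arguments must then be pairwise related. Making this point precise is exactly what keeps the proof sound in the presence of the non-atomic rule \Rule{sub-PQ}; the remaining steps---the meet of contexts and the order-reversal between variances and relations---are routine.
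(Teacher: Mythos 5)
Your proposal is correct and follows essentially the same route as the paper: soundness of variance checking for the easy direction, structural induction with head-constructor inversion and the $\mathop{\wedge}_\iI \G_i$ construction plus monotonicity for the hard direction, and principality of $\Delta$ for the final clause. Your two side remarks---that $\prec_{\mathop{\wedge}_\iI \G_i}$ is the intersection of the $\prec_{\G_i}$, and that inversion is legitimate without $v$-closure because both sides share the same head constructor---are points the paper also makes (the latter in a note immediately after its proof).
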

\end{version}

\subsection{Variance annotations in \ADTs}

As a preparation for the difficult case of  \GADTs,  we first present our
approach in the well-understood case of algebraic datatypes. 
We exhibit a semantic criterion that justifies the
correctness of a variance annotation; then, we propose an equivalent
syntactic judgment. Of course, we recover the usual criterion that
covariant variables should only occur positively. 

In general, an \ADT definition of the form
$$
\type \Gva\t = \big|_{\cC}\; \Kc \of \app{\tau^c}\ba
$$
cannot be accepted with any variance $\ty{\bar{v\alpha}}{t}$.
For example, the declaration $(\type {v\alpha} \mc {inv} = \texttt{Fun of }
\alpha \to \alpha)$ is only sound when $v$ is invariant. Accepting
a variance assignment $\bar{v\alpha}$ determines 
the relations between closed types
$\bs$ and $\bs'$ under which the relation $\ty{\bs}{t} \leq \ty{\bs'}{t}$
is  correct.

In the definition of $\ty{\vplus\alpha}{exp}$ we justified the covariance of
\code{exp}  by the existence of a coercion function. We now
formalize this idea for the general case. To check the correctness of
$\ty{\bs}\t \leq \ty{\bs'}\t$ we check the existence of a coercion term
that turns a closed value $\val$ of type $\ty{\bs}{t}$ into one of type
$\ty{\bs'}{t}$ that is equal to $\val$ up to type information.  
%% Assuming that
%% the ADT $\ty{\bar{v\alpha}}{t}$ is defined as the family of constructors
%% \texttt{$\Kc$ of $\app{\tau^c}{\ba}$)$_c$}, 
We actually search for coercions of the form:
%% \begin{lstlisting}
%% match (t : $\ty{\bs}{t}$) with
%%   | K$_c$ (v$_c$ : $\app{\tau^c}{\bs}$) -> K$_c$ (v$_c$ :>$\app{\tau^c}{\bs'}$) 
%% \end{lstlisting}
$$
\match {(\val : \ty{\bs}{\t})}{\branch[|_{c \in C}] 
        {\Kc (x : \app{\tau^c}{\bs})}
        {\Kc (x :> \app{\tau^c}{\bs'})}
}
$$
Note that erasing types gives an $\eta$-expansion of the sum type, 
\ie. this is really a coercion.  Hence, such a coercion
exists if and only if it is well-typed, that is, each cast of the form
$(x : \app{\tau^c}{\bs} :> \app{\tau^c}{\bs'})$ is
itself well-typed. This gives our semantic criterion for \ADTs.

\begin{definition}[Semantic soundness criterion for \ADTs]
\label{def/check_adt/sem}\indent\\
We accept the \ADT definition of 
$\Gva\t$ with constructors
$(\Kc \of \app{\tau^c}{\ba})_{\cC}$ if  
\[ \forall \cC, \forall \bs, \forall \bs', \quad
   \ty{\bs}{t} \leq \ty{\bs'}{t} \implies 
     \app{\tau^c}{\bs} \leq \app{\tau^c}{\bs'}
\]  
\end{definition}

\paragraph{The syntactic criterion for \ADTs} We notice that this
criterion is exactly the semantic interpretation of the variance checking
judgment~(Definition~\ref{def/vc/semantics}): the type $\type \Gva \t$ is
accepted if 
and only if the judgment $\Gva \der \tau^c : \rel\vplus$ is derivable for each
constructor type $\app{\tau^c}{\ba}$. 

This syntactic criterion coincides with the well-known alogrithm
implemented in type checkers\footnote{One should keep in mind that
  this criterion suffers the usual bane of static typing, it can
  reject programs that do not go wrong:
  $\type {\vminus\alpha}\mathtt{weird} = \K \of \alpha * \bot$. For
  more details, see the beginning of the \S\ref{sec/checking_variance_of_gadts}\Esop{ in the
    long version of this
    article}{\S\ref{simonet-pottier-not-complete}}.}: checking positive
occurences of a variable $\alpha$ corresponds to a proof obligation of
the form $\Gva \der \alpha : \vplus$, which is valid only when
$\alpha$ has variance $\rel\vplus$ or $\rel\veq$ in $\Gamma$; checking
negative occurences correspond to a proof obligation
$\Gva \der \alpha : \vminus$, \etc. This extends seamlessly to
irrelevant variables, which must appear only under irrelevant context
$\Gva \der \alpha : \virr$---or not at all.

\subsection{Variance annotations in \GADTs}

\paragraph {A general description of \GADTs}

When used to build terms of type $\ty{\ba}{t}$, a constructor
$\mc{K}~\mc{of}~\tau$ behaves like a function of type $\forall
\ba. (\tau \to \ty{\ba}{t})$. Notice that the codomain is exactly
$\ty{\ba}{t}$, the type $\tyc{t}$ instantiated with parametric
variables. \GADTs arise by relaxing this restriction, allowing
constructors with richer types of the form $\forall \ba. (\tau \to
\ty{\bs}{t})$. See for example the declaration of constructor
\code{Prod} in the introduction:
\begin{lstlisting}[xleftmargin=4em]
  | Prod : $\forall \beta \gamma .\ \ty{\beta}{exp} * \ty{\gamma}{exp}
                                  \to \ty{(\beta*\gamma)}{exp}$
\end{lstlisting}
Instead of being just $\ty{\alpha}{exp}$, the codomain is now
$\ty{(\beta*\gamma)}{exp}$. We moved from simple algebraic datatypes
to so-called \emph{generalized} algebraic datatypes. This approach is
natural and convenient for the users, so it is exactly the syntax
chosen in languages with explicit \GADTs support, such as Haskell and
OCaml, and is reminiscent of the inductive datatype definitions of
dependently typed languages.

\begin{version}{\Not\Esop}
However, for the formal study of \GADTs, a different formulation based on
equality constraints is preferred. The idea is that we will force
again the codomain to be exactly $\ty{\alpha}{exp}$, but allow
additional type equations such as
$\alpha = \beta * \gamma$ in this example:
\begin{lstlisting}[xleftmargin=4em]
| Prod : $\forall \alpha.\ \forall \beta \gamma [ \alpha = \beta * \gamma ].\ 
            \ty{\beta}{exp} * \ty{\gamma}{exp}
            \to \ty{\alpha}{exp}$
\end{lstlisting}
This restricted form justifies the name of \emph{guarded} algebraic
datatype. The $\forall\beta\gamma[D].\tau$ notation, coming from Simonet
and Pottier, is a \emph{constrained type scheme}: $\beta,\gamma$ may
only be instantiated with type parameters respecting the constraint
$D$. Note that, as $\beta$ and $\gamma$ do not appear in the codomain
anymore, we may equivalently change the outer universal into an
existential on the left-hand side of the arrow:
\begin{lstlisting}[xleftmargin=4em]
| Prod : $\forall \alpha.\ (\exists \beta \gamma [ \alpha = \beta * \gamma ].\ 
            \ty{\beta}{exp} * \ty{\gamma}{exp})
            \to \ty{\alpha}{exp}$
\end{lstlisting}
In the general case, a \GADT definition for $\ty{\ba}{t}$ is composed of a set
of constructor declarations, each of the form:
\begin{lstlisting}[xleftmargin=4em]
| K : $\forall \ba.\
         (\exists \bb [ \ba = \app \bs \bb ] .\,\app \tau {\ba,\bb})
         \to \ty{\ba}{t}$
\end{lstlisting}
or, reusing the classic notation,
\begin{lstlisting}[xleftmargin=4em]
| K of $\exists \bb [ \ba = \app \bs \bb ] .\,
                            \app \tau {\ba,\bb}$
\end{lstlisting}
Without loss of generality, we can conveniently assume that the
variables $\ba$ do not appear in the parameter type $\tau$
anymore: if some $\alpha_i$ appears in $\tau$, one may always pick
a fresh existential variable $\beta$, add the constraint
$\alpha = \beta$ to $D$, and consider $\tau[\beta/\alpha]$. Let us
re-express the introductory example in this form, that is, 
$\mc{K of}~\exists\bb[\ba = \app \bs \bb].\, \app \tau \bb$:
\end{version}

\begin{version}{\Esop}
  However, for the formal study of \GADTs, a different formulation
  based on equality constraints is preferred. We use the
  following equivalent presentation, already present in previous
  works~\cite{simonet-pottier-hmg-toplas}. We force the codomain
  of the constructor \code{Prod}  to be  $\ty{\alpha}{t}$ again, instead of
  $\ty{(\beta * \gamma)}{t}$,  by adding an explicit equality constraint
  $\alpha = \beta * \gamma$. 
\end{version}

\begin{lstlisting}[xleftmargin=4em]
type $\ty{\alpha}{exp}$ =
  | Val of $\exists \beta [\alpha = \beta].\,\beta$
  | Int of $[\alpha = \tyc{int}].\,\tyc{int}$
  | Thunk of $\exists \beta \gamma [\alpha = \gamma].\, \ty{\beta}{exp} * (\beta \to \gamma)$
  | Prod of $\exists \beta\gamma[\alpha = \beta*\gamma].\, \ty{\beta}{exp} * \ty{\gamma}{exp}$
\end{lstlisting}

\begin{version}{\Not\Esop}
If all constraints between brackets are of the simple form
$\alpha_i = \beta_i$ (for distinct variables $\alpha_i$ and $\beta_i$), as for
the constructor \code{Thunk}, then we have a constructor with
existential types as described by Laüfer and Odersky
\cite{laufer-odersky-92}. If furthermore there are no other
existential variables than those equated with a type parameter, as in
the \code{Val} case, we have an usual algebraic type constructor; of
course the whole type is ``simply algebraic'' only if each of its
constructors is algebraic.
\end{version}

In the rest of the paper, we extend our former core language with such
definitions. This does not impact the notion of
subtyping, which is defined on \GADT type constructors with variance
$\mc{type}~\ty{\Gva}{t}$ just as it previously was on simple \ADT type
constructors. What needs to be changed, however, is the soundness
criterion for checking the variance of type definitions

\paragraph{The correctness criterion}

We must adapt our semantic criterion for datatype
declarations~(Definition~\ref{def/check_adt/sem}) from simple \ADTs to
\GADTs. 
Again, 
we check under which relations between $\bs$ and $\bs'$
the subtyping relation $\ty{\bs}{t} \leq
\ty{\bs'}{t}$ holds for some \GADT definition $\ty{\Gva}{t}$.

The difference is that a constructor \texttt{K$_c$}
that had an argument of type $\app{\tau^c}{\ba}$ in the simple \ADT
case, now has the more complex type $\exists \bb
[\app{D}{\ba,\bb}]. \app{\tau^c}{\bb}$, for a set of existential
variables $\bb$ and a set of equality constraints $D$---of the form
$(\alpha_i = \app{T_i}{\bb})_\iI$ for a family of type expressions
$(\app{T_i}{\bb})_\iI$.  
Given a closed value $\val$ of type $\ty\bs\t$, the coercion term is:
% \begin{lstlisting}
% match (t : $\ty{\bs}{t}$) with
%  | K$_c$ (v$_c$ : $\app{\tau^c}{\br_c}$) -> K$_c$ (v$_c$ :> $\app{\tau^c}{?}$)
% \end{lstlisting}
$$
\match {(\val : \ty{\bs}{\t})}{\branch[|_{c \in C}] 
        {\Kc (x : \app{\tau^c}{\br_c})}
        {\Kc (x :> \app{\tau^c}{\br_c'})}
}
$$
We do not need to consider the dead cases: we only match on the 
constructors for which there exists an instantiation $\br_c$ of the
existential variables $\bb$ such that the constraint $\app{D}{\bs,\br}$,
\ie. $\bigwedge_\iI \sigma_i = \app{T_i}{\br_c}$, holds. To
type-check this term, we need to find another instantiation $\br'_c$
that verifies the constraints $\app{D}{\bs',\br'}$. This coercion
type-checks only when $\app{\tau^c}{\br_c} \leq \app{\tau^c}{\br'_c}$
holds. This gives our semantic criterion for \GADTs:

\begin{definition}[Semantic soundness criterion for \GADTs]
\label{def/check_gadt/sem}
We accept the \GADT definition of $\mc{type}~\ty{\Gva}{t}$
with constructors $(\Kc \of \exists \bb
[\app{D}{\ba,\bb}]. \app{\tau^c}{\ba})_\cC$, if for all $c$ in $C$ we have:
$$
\forall \bs, \bs', \br, \uad
\bigparens{
   \ty{\bs}{t} \leq \ty{\bs'}{t}
     \wedge \app D{\bs,\br}
   \implies
   \exists \br',\; \app D{\bs',\br'}
     \wedge \app\tau\br \leq \app\tau{\br'}
}
\eqno 
(\DefRule {Req})
$$
\end{definition}
As for \ADTs, this criterion ensures soundness: if, under some
variance annotation, a datatype declaration satisfies it, then the
implied subtyping relations are all expressible as coercions in the
language, and therefore correct. Whereas the simpler \ADT criterion
was already widely present in the literature, this one is less
known; it is however present in the previous work of Simonet and
Pottier~\cite{simonet-pottier-hmg-toplas} (presented as a constraint
entailment problem). 

Another way to understand this criterion would be to define
constrained existential types of the form $\exists \bb
[\app{D}{\ba,\bb}].\app{\tau}{\bb}$ as first-class types and, with the
right notion of subtyping for those, require that $\ty{\bs}{t} \leq
\ty{\bs'}{t}$ imply $(\exists \bb [\app{D}{\bs,\bb}].\app{\tau}{\bb})
\leq (\exists \bb [\app{D}{\bs',\bb}].\app{\tau}{\bb})$. The (easy)
equivalence between those two presentations is detailed in the work of
Simonet and Pottier~\cite{simonet-pottier-hmg-toplas}. 

\section{Checking variances of \GADT}
\label{sec/checking_variance_of_gadts}

\begin{version}{\Not\Esop}
\paragraph{A remark on the non-completeness}
\label{simonet-pottier-not-complete}

Note that while the criterion \Rule {Req} is \emph{sound}, it is
not \emph{complete}---even in the simple \ADT case. Consider indeed
the following example, using an uninhabited type $\bot$:
\begin{lstlisting}
type $\ty{+\alpha}{t}$ =
  | T of $\tyc{int}$
  | Empty of $\bot * (\alpha \to \tyc{bool})$
\end{lstlisting}
Despite $\alpha$ occurring in a contravariant position in the
\code{Empty} constructor, we know that we will never encounter a value
with this constructor as its argument type in uninhabited. The
contravariant occurrence will therefore never make a program ``go
wrong''. The definition is correct, yet rejected by \Rule {Req} (or
the classic \ADT criterion), which is therefore ``incomplete''.

In term of coercions, the semantic criterion fails because the
\texttt{Empty} clause cannot be type-checked. We would need the type
system to realize that, because of the uninhabited argument type, it
is a dead branch anyway. Deciding type inhabitation in the general
case is a very complex question, which is mostly orthogonal to the
presence and design of \GADTs in the type system. There is, however,
one clear interaction between the type inhabitation question and
\GADTs. If a \GADT $\ty{\ba}{t}$ is instantiated with type variables
$\bs$ that satisfy none of the constraints $\app{D}{\alpha}$ of its
constructors $\mc{K}~\mc{of}~\exists\bb[\bar{D}].\tau$, then we know
that $\ty{\bs}{t}$ is not inhabited. This is related to the idea of
``domain information'' that we discuss in the Future Work
section~(\S\ref{future_work:domain_information}).

In the rest of the article, the discussion on \emph{completeness}
should be understood with respect to the fixed semantic criterion: we
look for a syntactic check that is as good as the criterion -- which
is accepted as the correct check in the existing literature.
\end{version}

\subsection{Expressing decomposability}
\label{sec/expressing_decomposability}

If we specialize \Rule {Req} to the \code{Prod} constructor
of the $\ty{\alpha}{exp}$ example datatype, \ie. 
$\mc{Prod}~\mc{of}~\exists \beta \gamma[\alpha = \beta * \gamma]
\ty{\beta}{exp} * \ty{\gamma}{exp}$, we get:
\[
\begin{array}{l}
  \forall \sigma, \sigma', \rho_1, \rho_2, \\\uad
  \bigparens{
  \ty{\sigma}{exp} \leq \ty{\sigma'}{exp}
    \wedge
    \sigma = \rho_1 * \rho_2
  \implies
  \exists \rho'_1, \rho'_2, \parens {\sigma' = \rho'_1 * \rho'_2
    \wedge
    \rho_1 * \rho_2 \leq \rho'_1 * \rho'_2}
  }
\end{array}
\]
We can substitute equalities and use the (user-defined) covariance to
simplify the subtyping constraint
$\ty{\sigma}{exp} \leq \ty{\sigma'}{exp}$ into $\sigma \leq
\sigma'$:
\[
  \forall \sigma', \rho_1, \rho_2,\uad
  \bigparens
    {\rho_1 * \rho_2 \leq \sigma'
  \implies
  \exists \rho'_1, \rho'_2, \uad  
    \parens 
      {\sigma' = \rho'_1 * \rho'_2
       \wide\wedge
       \rho_1 \leq \rho'_1
       \wide\wedge
       \rho_2 \leq \rho'_2
      }
  }
\eqno \llabel 1
\]
This is the \emph{upward closure} property mentioned in the
introduction. 
The preceeding transformation is safe only if any supertype
$\sigma'$ of a product $\rho_1 * \rho_2$ is itself a product, \ie. is
of the form $\rho'_1 * \rho'_2$ for some $\rho'_1$ and $\rho'_2$.

More generally, for a type $\G \der \sigma$ and a variance $v$, we are
interested in a closure property of the following form, where the notation
$(\br : \G)$ simply classifies type vectors $\br$ that have exactly
one type $\rho_i$ for each variable in $\Gamma$: \[
  \forall (\br : \G), \sigma',\quad
    \app \sigma \br \prec_{v} \sigma' \implies
    \exists (\br' : \G),\,\sigma' = \app \sigma {\br'}
\]
Here, the context $\G$ represents the set of existential variables
of the constructor ($\beta$ and $\gamma$ in our example). We can
easily express the condition $\rho_1 \leq \rho'_1$ and
$\rho_2 \leq \rho'_2$ on the right-hand side of the implication by
considering a context $\G$ annotated with variances
$(\vplus\beta, \vplus\gamma)$, and using the context ordering
$\rel{\prec_\G}$. Then, \lref 1 is equivalent to: 
$$
  \forall (\br : \G), \sigma',\quad
    \app \sigma \br \prec_{v} \sigma' \implies
    \exists (\br' : \G),\uad\br \prec_\G \br'
      \wedge \sigma' = \app \sigma {\br'}
$$
Our aim is now to find a set of inference rules to check decomposability; we
will later reconnect it  to \Rule {Req}.
In fact, we study a slightly more general relation, where the
equality $\app\sigma{\br'}~=~\sigma'$ on the right-hand side is
relaxed to an arbitrary relation
$\app{\sigma}{\br'} \prec_{v'} \sigma'$:

\begin{definition}[Decomposability]
\label{def/decomposability}
Given a context $\G$, a type expression $\app{\sigma}{\bb}$ and two
variances $v$ and $v'$, we say that $\sigma$ is \emph{decomposable} under
$\G$ from variance $v$ to variance $v'$, which we write $\G \Der \sigma : v
\vto v'$, if the following property holds: $$
\forall (\br : \G), \sigma',\uad
      \app \sigma \br \prec_{v} \sigma' \implies
      \exists (\br' : \G),\uad
        \br \prec_\G \br'
        \wide\wedge \app \sigma {\br'} \prec_{v'} \sigma'
$$
\end{definition}
We use the symbol $\Der$ rather than $\der$ to highlight the fact that
this is just a logic formula, not the semantic interpretation of
a syntactic judgment---we will introduce one later in
section~\ref{sec/syntactic-decomposability}.
%%  to define the 
%% meaning of the full closure-checking judgment
%% $\G \der \tau : v \To v'$.

Remark that, due to the \emph{positive} occurrence of the relation
$\prec_\G$ in the proposition ${\G \Der \tau : v \vto v'}$ and the
anti-monotonicity of $\prec_\G$, this formula is ``anti-monotone''
with respect to the context ordering $\G \leq \G'$. This corresponds
to saying that we can still decompose, but with less information on
the existential witness $\br'$.

\begin{lemma}[Anti-monotonicity]
\label{lem/anti-monotonicity}\indent\\
If $\G \Der \tau : v \vto v'$ holds
and $\G' \leq \G$, then $\G' \Der \tau : v \vto v'$ also holds.
\end{lemma}
\begin{version}{\Not\Esop}
Our final decomposability criterion, given below in Figure~\ref{fig/decomposability},
requires both correct variances and a decomposability property, so it
will be neither monotone nor anti-monotone with respect to the
context argument.
\end{version}

\subsection{Variable occurrences}

In the \code{Prod} case, the type whose decomposability was considered
is $\beta * \gamma$ (in the context $\beta, \gamma$).  In this very
simple case, decomposability depends only on the type constructor for
the product. In the present type system, with very strong
invertibility principles on the subtyping relation, both upward and
downward closures hold for products. In the general case, we require that
this specific type constructor be upward-closed.

In general, the closure of the head type constructor alone is
not enough to ensure decomposability of the whole type. For example,
in a complex type expression with subterms, we should consider the
closure of the type constructors appearing in the subterms as
well. Besides, there are subtleties when a variable occurs several
times.

For example, while $\beta * \gamma$ is decomposable from $\rel\vplus$
to $\rel\veq$, $\beta * \beta$ is not: $\bot * \bot$ is an
instantiation of $\beta * \beta$, and a subtype of, \eg., $\tyc{int} *
\tyc{bool}$, which is not an instance\footnote {We use the
term \emph{instance} to denote  the replacement  of all the free variables
of a type expression under context by   closed types---not the
specialization of an ML type scheme.} of $\beta * \beta$. The same variable 
occurring twice in covariant 
position (or having one covariant and one invariant or contravariant
occurence) breaks decomposability.

On the other hand, two invariant occurrences are possible:
$\ty{\beta}{ref} * \ty{\beta}{ref}$ is upward-closed (assuming the
type constructor $\tyc{ref}$ is invariant and upward-closed): if
$(\ty{\sigma}{ref} * \ty{\sigma}{ref}) \leq \sigma'$, then by
upward closure of the product, $\sigma'$ is of the form
$\sigma'_1 * \sigma'_2$, and by its covariance
$\ty{\sigma}{ref} \leq \sigma'_1$ and
$\ty{\sigma}{ref} \leq \sigma'_2$. Now by invariance of $\tyc{ref}$ we
have $\sigma'_1 = \ty{\sigma}{ref} = \sigma'_2$, and therefore
$\sigma'$ is equal to $\ty{\sigma}{ref} * \ty{\sigma}{ref}$, which is
an instance of $\ty{\beta}{ref} * \ty{\beta}{ref}$. 

Finally, a variable may appear in irrelevant positions without
affecting closure properties; $\beta * (\ty{\beta}{irr})$
(where $\tyc{irr}$ is an upward-closed irrelevant type, defined for
example as $\mc{type}~\ty{\alpha}{irr} = \tyc{int}$) is upward closed:
if $\sigma * (\ty{\sigma}{irr}) \leq \sigma'$, then $\sigma'$ is of
the form $\sigma'_1 * (\ty{\sigma'_2}{irr})$ with
$\sigma \leq \sigma'_1$ and $\sigma \Join \sigma'_2$, which is
equiconvertible to $\sigma'_1 * (\ty{\sigma'_1}{irr})$ by irrelevance,
an instance of $\beta * (\ty{\beta}{irr})$.

\subsection{Context zipping}

The intuition to think about these different cases is to consider
that, for any~$\sigma'$, we are looking for a way to construct a ``witness''
$\bs'$ such that ${\app \tau {\bs'} = \sigma'}$ 
from the hypothesis $\app \tau \bs \prec_v \sigma'$.
When a type variable appears only once, its witness can be determined
by inspecting the corresponding position in the type~$\sigma'$. For
example, in $\alpha * \beta \leq \tbool * \tint$, the mapping $\alpha
\mapsto \tbool, \beta \mapsto \tint$ gives the witness pair $\tbool,
\tint$.

However, when a variable appears twice, the two witnesses
corresponding to the two occurrences may not coincide. (Consider for
example $\beta * \beta \leq \tyc{bool} * \tyc{int}$.) If a variable
$\beta_i$ appears in several \emph{invariant} occurrences, the witness
of each occurrence is forced to be equal to the corresponding subterm
of $\app \tau \bs$, that is $\sigma_i$, and therefore the various
witnesses are themselves equal, hence compatible.  On the contrary,
for two covariant occurrences (as in the $\beta * \beta$ case), it is
possible to pick a $\sigma'$ such that the two witnesses are
incompatible---and similarly for one covariant and one invariant
occurrence. Finally, an irrelevant occurrence will never break closure
properties, as all witnesses (forced by another occurrence) are
compatible.

To express these merging properties, we define a \emph{zip}
%% \footnote {The
%%   idea of context merging and the term ``zipping'' are inspired by
%%   Montagu and Remy~\cite{montagu-remy-09}} 
operation $v_1 \zip v_2$, that
formally expresses which combinations of variances are possible for
several occurrences of the same variable; it is a partial operation
(for example, it is not defined in the covariant-covariant case, which
breaks the closure properties) with the following table:
$$
\def \C#1{\multicolumn{1}{C}{#1}}
\begin{tabular}{R|c@{}|C|C|C|C|l}
  v \zip w && \C{\veq} & \C{\vplus} & \C{\vminus} & \C{\virr} & $w$
\\ 
\cline{1-1}\cline{3-7}\noalign{\vskip \doublerulesep}\cline{1-1}\cline{3-6}
  \veq     && \veq   &          &           & \veq    \\ \cline{3-6}
  \vplus   &&        &          &           & \vplus  \\ \cline{3-6}
  \vminus  &&        &          &           & \vminus \\ \cline{3-6}
  \virr    && \veq   & \vplus   & \vminus   & \virr   \\ \cline{3-6}
  v        &\multicolumn{5}{C}{}
\end{tabular}
$$

\begin{version}{\Not\Esop}
The following lemma uses zipping to merge together the results of the
decomposition of several subterms $(\sigma_i)_i$ into a ``simultaneous
decomposition''.

\begin{definition}[Simultaneous decomposition]
\label{def/simultaneous-decomposition}
Given a context $\G$, and families of type expressions $(\sigma_i)_\iI$ and
variances $(v_i)_\iI$ and $(v'_i)_\iI$, we define the following
``simultaneous closure property'' ${\G \Der (T_i : v_i \vto v'_i)_\iI}$
defined as :
$$
    \forall (\br : \G), \bs',\uad
      \parens 
        {\forall \iI, \app{\sigma_i}{\br} \prec_{v_i} \sigma'_i}
      \implies
      \exists (\br' : \G),\uad \br \prec_\G \br'
        \wedge \parens 
         {\forall \iI, \app {\sigma_i} {\br'} \prec_{v'_i} \sigma'_i}
\nobelowdisplayskip
$$
\end{definition}
\begin{lemma}[Soundness of zipping]
\label{lem/zip-soundness}
Suppose we have families of type expressions $(\app{T_i}{\bb})_\iI$, contexts
$(\G_i)_\iI$ and variances $(v_i)_\iI$ and $(v'_i)_\iI$ such that
$\mathop{\zip}_\iI \G_i$ exists and for all $i$ we have both
$\G_i \der T_i : v_i$ and 
$\G_i \Der T_i : v_i \vto v'_i$.
Then, we have $(\mathop{\zip}_\iI \G_i) \Der (\sigma_i : v_i \vto v'_i)_\iI$.
\end{lemma}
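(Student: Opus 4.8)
The plan is to unfold the definition of simultaneous decomposition, extract one witness per index from the individual decompositions, and then \emph{glue} these witnesses into a single common one, reading off the shape of the zipped context. Write $\Delta$ for $\mathop{\zip}_\iI \G_i$ and $\sigma_i$ for the type expressions $T_i$ of the statement. Fix $(\br : \Delta)$ together with a family $(\sigma'_i)_\iI$ such that $\app{\sigma_i}{\br} \prec_{v_i} \sigma'_i$ for every $i$. Since $\Delta$ and each $\G_i$ range over the same variables $\bb$, the vector $\br$ is also a witness for each $\G_i$, so applying the hypothesis $\G_i \Der \sigma_i : v_i \vto v'_i$ to $\br$ and $\sigma'_i$ yields a witness $\br'_i$ with $\br \prec_{\G_i} \br'_i$ and $\app{\sigma_i}{\br'_i} \prec_{v'_i} \sigma'_i$. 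The difficulty is that the various $\br'_i$ need not agree, so none of them can serve directly as the common witness demanded by the conclusion.

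I would build the combined witness $\br'$ component by component, guided by the zip table. Existence of $\Delta$ means that at each variable $\beta_j$ the variances $((\G_i)_j)_\iI$ fall into exactly one of three patterns: all irrelevant; some invariant with the rest irrelevant; or a single covariant (resp.\ contravariant) occurrence with the rest irrelevant. In the first two cases I set the $j$-th component of $\br'$ to that of $\br$; note that when $(\G_i)_j = \veq$ the relation $\br \prec_{\G_i} \br'_i$ forces the $j$-th component of $\br'_i$ to equal that of $\br$, so this choice is consistent with every invariant occurrence. In the third case I copy the $j$-th component from the unique $\br'_{i_0}$ whose variance at $\beta_j$ is non-irrelevant (uniqueness is guaranteed because any competing $\vplus$, $\vminus$ or $\veq$ at $\beta_j$ would make the zip undefined). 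Checking $\br \prec_\Delta \br'$ is then immediate by cases on $(\Delta)_j$: the irrelevant case is vacuous, the invariant case holds because the components are equal, and the covariant/contravariant case is exactly the $j$-th component of $\br \prec_{\G_{i_0}} \br'_{i_0}$.

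The crux is verifying $\app{\sigma_i}{\br'} \prec_{v'_i} \sigma'_i$ for a fixed $i$, since $\br'$ was assembled partly from foreign witnesses and differs in general from $\br'_i$. By construction $\br'$ and $\br'_i$ coincide on every variable whose variance in $\G_i$ is \emph{not} irrelevant (invariant positions both equal the corresponding component of $\br$, and a covariant/contravariant position of $\sigma_i$ is precisely the one taken from $\br'_i$ itself), so they may differ only on variables that are irrelevant in $\sigma_i$. Hence both $\br' \prec_{\G_i} \br'_i$ and $\br'_i \prec_{\G_i} \br'$ hold: on the agreeing positions each relation follows from reflexivity of $\prec_{(\G_i)_j}$, and on the differing positions $\prec_{\virr}$ holds in either direction. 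This is exactly where the variance hypothesis $\G_i \der \sigma_i : v_i$ is needed: by its semantic reading (Lemma~\ref{lem/variance-checking-correct}, Definition~\ref{def/vc/semantics}) the two relations give $\app{\sigma_i}{\br'} \prec_{v_i} \app{\sigma_i}{\br'_i}$ together with its converse.

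When $v_i$ is $\vplus$, $\vminus$ or $\veq$, these two facts combine into $\app{\sigma_i}{\br'} \prec_{\veq} \app{\sigma_i}{\br'_i}$, i.e.\ the two instantiations of $\sigma_i$ are invariant-equal; since $\prec_{\veq}$ is contained in every $\prec_{v'_i}$ and $\prec_{v'_i}$ is transitive, chaining with $\app{\sigma_i}{\br'_i} \prec_{v'_i} \sigma'_i$ yields the desired $\app{\sigma_i}{\br'} \prec_{v'_i} \sigma'_i$. The remaining case $v_i = \virr$ is where I expect the real subtlety to concentrate: there the variance judgment gives no information relating $\app{\sigma_i}{\br'}$ to $\app{\sigma_i}{\br'_i}$, and one must instead exploit the way the pair $(v_i, v'_i)$ is generated to argue that an irrelevant subterm is only ever decomposed to an irrelevant target, i.e.\ that $v_i = \virr$ forces $v'_i = \virr$, whereupon the goal $\app{\sigma_i}{\br'} \prec_{\virr} \sigma'_i$ is trivially true. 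Pinning down this last link between the two variance parameters, and the precise argument that differing only on $\sigma_i$-irrelevant positions is harmless, is the main obstacle; the rest is a routine case analysis on the zip table.
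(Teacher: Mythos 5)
Your strategy is the paper's own: extract one witness $\br'_i$ per index, assemble a common witness by cases on the zip table (keep the original component where all occurrences are invariant or irrelevant, import the component of the unique non-irrelevant witness otherwise), then use the semantic reading of $\G_i \der T_i : v_i$ to transport across the positions where the assembled witness differs from $\br'_i$, which are all $\G_i$-irrelevant. The paper reduces WLOG to two expressions over a single variable and does the same case analysis on $(w_1,w_2)$; your component-wise gluing is just the several-variable rendering of that argument, and everything you establish for $v_i \in \{\vplus,\vminus,\veq\}$ is correct and matches the paper's reasoning.

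On the case you flag as open: the patch you propose ($v_i = \virr$ forces $v'_i = \virr$) is \emph{not} derivable from the lemma's hypotheses, and the statement actually fails in that corner. Take $I=\{1,2\}$, $T_1 = T_2 = \beta$, $\G_1 = \G_2 = (\virr\beta)$, $v_i = \virr$ and $v'_i = \veq$. Then $\G_i \der \beta : \virr$ holds, $\G_i \Der \beta : \virr \vto \veq$ holds (the premise $\rho \prec_{\virr} \sigma'$ is vacuous and one may take $\rho' := \sigma'$, since the irrelevant context places no constraint on the witness), and $\virr \zip \virr$ is defined; yet the simultaneous decomposition would demand a single $\rho'$ with $\rho' = \sigma'_1$ and $\rho' = \sigma'_2$ for arbitrary distinct $\sigma'_1, \sigma'_2$. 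The paper's proof does not close this case either: in the sub-case where both $w_i$ are $\virr$ it asserts $\app{T_i}{\rho'_1} = \app{T_i}{\rho'_2}$ ``by irrelevance of $v_i$'', which is exactly your two-directional $\prec_{v_i}$ argument and is valid only when $v_i \neq \virr$. So your diagnosis of where the subtlety sits is accurate, but the missing link cannot be supplied from inside the lemma; it has to come from the use site. At the lemma's unique application (soundness of \Rule{sc-Constr}) the pairs are $(\varcomp v{w_i}, \varcomp{v'}{w_i})$, for which $w_i = \virr$ makes both components $\virr$, and the only other way to get $v_i = \virr$ with $v'_i \neq \virr$ is $v = \virr$ with $w_i \neq \virr$ --- which requires the head constructor to be $\virr$-closed, i.e.\ every type to be one of its instances, a condition no actual constructor meets. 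The clean fix is to add the side condition ($v_i = \virr \Rightarrow v'_i = \virr$) to the lemma's hypotheses; with it, your proof goes through completely.
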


\begin{proof}{}
\locallabelreset
Without loss of generality, 
\XXX[TODO]{Say why...}
we can consider that there are only two
type expressions $\app{T_1}{\bb}$ and $\app{T_2}{\bb}$, and that the
free variables $\bb$ is reduced to  a single variable $\beta$. 
Let $w_1, w_2$ be the respective variances of $\beta$ in $\G_1,
\G_2$. 
We know that $(w_1 \zip w_2)$ exists  and is equal to the variance $w$
of the variable $\beta$ in $\G_1 \zip \G_2$. 

\let \br \rho

Our further assumptions are 
$\G_i \der T_i : v_i$~\llabel 1 and 
$\G_i \Der T_i : v_i \vto v'_i$~\llabel 2 
for $i$ in $\set {1,2}$. 
The  expansion of \lref 2 is:
%% the definition of the decomposability and
%% simultaneous decomposability judgments. 
%% We have as hypothesis that
$$
  \forall \iI,\, \forall \br,\sigma'_i,\quad
     \app{T_i}{\br} \prec_{v_i} \sigma'_i
     \wide\implies \exists \br', \uad \br \prec_{\G_i} \br'
       \;\wedge\;\app{T_i}{\br'} \prec_{v'_i} \sigma'_i
\eqno \llabel 3
$$
Our goal is to prove  $\G_1 \zip \G_2 \Der (\sigma_i : v_i\vto
v'_i)_\iI$, which is equivalent to:  
$$
  \forall \br,\bs',\quad
    (\forall \iI, (\app{T_i}{\br} \prec_{v_i} \sigma'_i))
    \wide\implies \exists \br',\uad
        \br \prec_{\G} \br' \wide\wedge
        \forall \iI, \app{T_i}{\br'} \prec_{v'_i} \sigma'_i
\eqno \llabel C
$$
Assume given $\rho$ and $(\sigma'_1, \sigma'_2)$ such that
$\app{T_1}{\br} \prec_{v_1} \sigma'_1$~\llabel {H1} and
$\app{T_2}{\br} \prec_{v_2} \sigma'_2$~\llabel {H2}. 
Applying~\lref 3 with $i$ equal to $1$ and \lref {H1} ensures the existence of 
a $\rho'_1$ such that $\rho \prec_{w_1} \rho'_1$~\llabel {rw1} and
$\app{T_1}{\rho'_1} \prec_{v_1} \sigma'_1$~\llabel {rp1}. 
Similarly, there exists  $\rho'_2$ such that
$\rho \prec_{w_2} \rho'_2$~\llabel {rw2} and
$\app{T_2}{\rho'_2} \prec_{v_2} \sigma'_2$~\llabel {rp2}. 
To establish \lref C, it remains to build a single $\rho'$ that satisfies
$\rho \prec_w \rho'$~\llabel {r0}, $\app{T_1}{\rho'} \prec_{v_1}
\sigma'_1$~\llabel {r1} and $\app{T_2}{\rho'} \prec_{v_2} \sigma'_2$~\llabel
{r2}, simultaneously.
We reason by case analysis on $w_1$ and $w_2$ (restricted to the cases where
the zip exists).

If both $w_1$ and $w_2$ are $\rel\virr$, we take either $\rho'_1$ or
$\rho'_2$ for: since $w$ is $\virr$, we \lref {r0} trivially holds; 
Furthermore, $\app{T}{\rho'} = \app{T}{\rho'_1} = \app{T}{\rho'_2}$ by 
irrelevance of $v_i$ and \lref 1; therefore \lref {r1} and \lref {r2} 
follow from \lref {rp1} and \lref {rp2}.

If only one of the $w_i$ is $\rel\virr$, we'll suppose that it is $w_1$. We
then take $\rho_2$ for $\rho'$.  Since $\virr_1 \zip w_2$ is $w_2$, \lref
{r0} follows from \lref {rw2}; Furthermore, $\app{T}{\rho'} =
\app{T}{\rho'_1}$ by irrelevance of $v_1$ and \lref 1 while
$\app{T_2}{\rho'} = \app{T_2}{\rho'_2}$ holds by construction; Hence, as in
the previous case, \lref {r1} and \lref {r2} follow from \lref {rp1} and
\lref {rp2}.

Finally, if both $w_1$ and $w_2$ are $\rel\veq$, then 
\lref {rw1} and \lref {rw2} implies $\rho'_1 = \rho = \rho'_2$.
We take $\rho$ for $\rho'$ and all three conditions are obviously satisfied.
\qed
\end{proof}
This lemma admits a kind of converse lemma stating completeness of
zipping, that says that if ${\G \Der (T_i : v_i \vto v'_i)_i}$ holds, then
$\G$ is indeed related to a zip of contexts $(\G_i)_i$ that pairwise
decompose each of the $(T_i)_i$. However, the proof of completeness is more
delicate and
\Esop{can only be found in the extended version}
{we prove it separately in \S\ref{sec/zip-completeness}}.
\end{version}

\subsection{Syntactic decomposability}

\label {sec/syntactic-decomposability}

\begin{mathparfig}{fig/decomposability}{Syntactic decomposablity}
\inferrule[sc-Triv]
  {v \geq v'\\\G \der \tau : v}
  {\G \der \tau : v \To v'}

\inferrule[sc-Var]
  {w\alpha \in \G \\ w = v}
  {\G \der \alpha : v \To v'}

\inferrule[sc-Constr]
  {\G \der \mc{type}~\ty{\bar{w\alpha}}{t} : v\textrm{-closed}\\
   \G = \mathop{\zip_i} \G_i\\
   \forall i,\ \G_i \der \sigma_i : \varcomp v {w_i} \To \varcomp{v'}{w_i}}
  {\G \der \ty{\bar{\sigma}}{t} : v \To v'}
\end{mathparfig}

Equipped with the zipping operation, we introduce a judgment
$\G \der \tau : v \To v'$ to express decomposability, syntactically,
defined by the inference rules on Figure~\ref{fig/decomposability}. We
also define its semantic interpretation
$\sem{\G \der \tau : v \To v'}$. The judgment and its interpretation
were co-designed, so keeping the interpretation in mind is
the best way to understand the subtleties of the inference rules. We
\Esop
{use zipping, which requires correct variances, to}
{rely on the zip soundness (Lemma~\ref{lem/zip-soundness}) to}
merge sub-derivations into larger ones, so, in addition to
decomposability, the interpretation also ensures that $v$ is a correct
variance for $\tau$ under $\G$. This subtlety is why we have two
different properties for decomposability, $\G \Der \tau : v \vto v'$
and $\sem{\G \der \tau : v \To v'}$.

\begin{definition}[Interpretation of syntactic decomposability]
\label{def/cc/semantics}\noindent\\
We write $\sem{\G \der \tau : v \To v'}$ for the conjunction of
properties $\sem{\G \der \tau : v}$ and $\G \Der \tau : v \vto v'$.
\end{definition}

To understand the inference rules, the first thing to notice is that
the present rules are not completely syntax-directed: we first check
whether $v \geq v'$ holds, and if not, we apply syntax-directed
inference rules; existence of derivations is still easily
decidable. If $v \geq v'$ holds, satisfying
$\G \Der \tau : v \vto v'$~(Definition~\ref{def/decomposability}) is
trivial: $\app \tau \bs \prec_v \tau'$ implies
$\app \tau \bs \prec_{v'} \tau'$, so taking $\bs$ for $\bs'$ is always
a correct witness, which is represented by Rule \Rule{sc-Triv}.  The
other rules then follow the same structure as the variance-checking
judgment.

Rule \Rule{sc-Var} is very similar to \Rule {vc-Var}, except that the
condition $w \geq v$ is replaced by a stronger equality $w = v$.
\begin{version}{\Not\Esop}
  The reason why the variance-checking judgment has an inequality
  $w \geq v$ is to make it monotone in the environment---as
  requested by its corresponding semantic
  criterion~(Definition \ref{def/vc/semantics}). Therefore, the
  condition $w \geq v$ is necessary for completeness---and admissible.
  On the contrary, the present judgment ensures, according the
  semantic criterion~(Definition~\ref{def/cc/semantics}), that both
  the variance is correct (monotone in the environment) and the type
  is decomposable, a property which is \emph{anti-monotonic} in the
  environment~(Lemma~\ref{lem/anti-monotonicity}).  Therefore, the
  semantic criterion $\sem{\G \der \tau : v \To v'}$ is invariant
  in $\G$ and, correspondingly, the variable rule must use a strict
  equality.
\end{version}
\begin{version}{\Esop}
  This difference comes from the fact that the semantic condition for
  closure checking~(Definition \ref{def/vc/semantics}) includes both
  a variance check, which is monotonic in the
  context~(Lemma \ref{lem/monotonicity}) and the decomposability property,
  which is anti-monotonic~(Lemma~\ref{lem/anti-monotonicity}), so the
  present judgment must be invariant with respect to the context.
\end{version}

The most interesting rule is \Rule{sc-Constr}. It checks first that
the head type constructor is $v$-closed (according to Definition~\ref
{def/v-closed}); then, it checks that each subtype is decomposable from
$v$ to $v'$, \emph{with compatible witnesses}, that is, in an
environment family $(\G_i)_\iI$ that can be zipped into a unique
environment $\G$.

% Note that our interpretation $\Gamma \der \tau : v \To v'$ does not
% coincide with our previous decomposability formula
% $\Gamma \Der \tau : v \leadsto v'$, because of the additional
% variance-checking hypothesis that makes it composable. The distinction
% between those two notions of decomposition is not useful to have
% a sound criterion, but is crucial to be complete with respect to the
% semantic criterion \Rule{Req}, which imposes no variance checking
% condition.

\begin{lemma}[Soundness of syntactic decomposability]
\label {lem/decomposability-soundness}\noindent\\
If the judgment $\G \der \tau : v \To v'$ holds, then $\sem{\G \der
\tau : v \To v'}$ holds.
\end{lemma}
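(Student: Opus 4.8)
The plan is to proceed by \emph{induction on the derivation} of $\G \der \tau : v \To v'$, establishing in each case the two conjuncts of Definition~\ref{def/cc/semantics}: the variance-correctness $\sem{\G \der \tau : v}$ and the decomposability formula $\G \Der \tau : v \vto v'$. For the variance conjunct I will systematically reduce to provability of the \emph{variance} judgment $\G \der \tau : v$ and invoke the correctness of variance checking (Lemma~\ref{lem/variance-checking-correct}); for the decomposability conjunct I will unfold Definition~\ref{def/decomposability} and exhibit an explicit witness $\br'$.

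The two leaf rules are immediate. For \Rule{sc-Triv} the premise $\G \der \tau : v$ gives the first conjunct directly; and since $v \geq v'$ means $\rel{\prec_v} \subseteq \rel{\prec_{v'}}$, from $\app\tau\br \prec_v \sigma'$ I take $\br' = \br$, so that $\br \prec_\G \br'$ holds by reflexivity of each $\prec_{w_i}$ and $\app\tau{\br'} = \app\tau\br \prec_{v'} \sigma'$. For \Rule{sc-Var} with $w\alpha \in \G$ and $w = v$, the first conjunct follows because $w = v$ entails $w \geq v$, so \Rule{vc-Var} proves $\G \der \alpha : v$; for the second, given $\app\alpha\br \prec_v \sigma'$ I build $\br'$ by replacing the $\alpha$-component of $\br$ with $\sigma'$ and keeping the others untouched. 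The equality $w = v$ is exactly what makes $\br \prec_\G \br'$ hold at $\alpha$ (it demands $\app\alpha\br \prec_w \sigma'$, which here is $\prec_v$), while $\app\alpha{\br'} = \sigma' \prec_{v'} \sigma'$ holds by reflexivity.

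The real work is the \Rule{sc-Constr} case, whose inductive hypotheses give, for each $i$, both $\sem{\G_i \der \sigma_i : \varcomp v {w_i}}$ and $\G_i \Der \sigma_i : \varcomp v {w_i} \vto \varcomp{v'}{w_i}$. For the variance conjunct I first observe that every defined binary zip dominates both of its arguments in the variance order (a finite check of the zip table against the lattice of Figure~\ref{fig:order_diagram}), hence $\G_i \leq \G = \mathop{\zip_i}\G_i$; monotonicity (Lemma~\ref{lem/monotonicity}) then lifts each $\G_i \der \sigma_i : \varcomp v{w_i}$ to $\G \der \sigma_i : \varcomp v{w_i}$, and \Rule{vc-Constr} concludes $\G \der \ty{\bar\sigma}{t} : v$. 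For the decomposability conjunct I take $(\br : \G)$ and $\sigma'$ with $\app{(\ty{\bar\sigma}{t})}{\br} \prec_v \sigma'$ and chain four steps: (i) since $\tyc t$ is $v$-closed (Definition~\ref{def/v-closed}), $\sigma'$ must have head constructor $\tyc t$, say $\sigma' = \ty{\bar{\sigma'}}{t}$ with parameters $(\sigma'_i)_\iI$; (ii) inversion of the head constructor~(\S\ref{subtyping_inversion}) turns this into $\app{\sigma_i}\br \prec_{\varcomp v{w_i}} \sigma'_i$ for each $i$; (iii) the soundness of zipping (Lemma~\ref{lem/zip-soundness}), fed with the two inductive conjuncts for every $i$, assembles them into the simultaneous decomposition $\G \Der (\sigma_i : \varcomp v{w_i} \vto \varcomp{v'}{w_i})_\iI$ of Definition~\ref{def/simultaneous-decomposition}, which applied to the relations of step (ii) yields a single $\br'$ with $\br \prec_\G \br'$ and $\app{\sigma_i}{\br'} \prec_{\varcomp{v'}{w_i}} \sigma'_i$ for all $i$; (iv) recomposing these componentwise relations by definition of variance composition gives $\app{(\ty{\bar\sigma}{t})}{\br'} \prec_{v'} \sigma'$, which together with $\br \prec_\G \br'$ is exactly the witness required by Definition~\ref{def/decomposability}.

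I expect the crux to be this \Rule{sc-Constr} case, and specifically the orchestration of its four steps: the $v$-closure hypothesis is precisely what step~(i) needs to guarantee that $\sigma'$ shares the structure of $\ty{\bar\sigma}{t}$ so that inversion applies, and the zip condition $\G = \mathop{\zip_i}\G_i$ is what lets step~(iii) reconcile the per-occurrence witnesses into one coherent $\br'$. The remaining effort is bookkeeping: verifying that the zip dominates its components (used for monotonicity in the variance conjunct) and that the recomposition of step~(iv) is valid for an arbitrary target variance $v'$---in particular it is vacuous when $\varcomp{v'}{w_i} = \virr$ and relies only on reflexivity when $v'$ forces equalities. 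No step requires a genuinely new idea beyond the lemmas already established, which matches the paper's remark that soundness is the comparatively easy half.
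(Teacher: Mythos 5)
Your proof is correct and follows essentially the same route as the paper's: induction on the derivation, immediate witnesses for \Rule{sc-Triv} and \Rule{sc-Var}, and for \Rule{sc-Constr} the same chain of $v$-closure, head-constructor inversion, zip soundness (Lemma~\ref{lem/zip-soundness}) applied to the inductive conjuncts, and recomposition. The only difference is that you spell out why $\G_i \leq \mathop{\zip}_i \G_i$ (so that monotonicity yields the variance conjunct), a step the paper leaves implicit.
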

\begin{proof}{}
\locallabelreset
The proof is by induction on the derivation $\G \der \tau : v \To
v'$~\llabel H.  Expanding $\sem{\G \der \tau : v \To v'}$, we must show both
$\sem{\G \der \tau : v}$, or equivalently $\G \der \tau : v$~\llabel {C1}, 
and $\G \Der \tau : v \vto v'$~\llabel {C2}, which itself expands to:
$$
    \forall (\br : \G), \sigma',\quad
      \app \sigma \br \prec_{v} \sigma' \implies
      \exists (\br' : \G),\uad 
        \br \prec_\G \br'
        \wedge \app \sigma {\br'} \prec_{v'} \sigma'
$$
Let  $\br$, $\tau'$ be such that $\app{\tau}{\br} \prec_v \tau'$. 
We must exhibit a sequence $\br'$ such
that $\br \prec_\G \br'$~\llabel {Cr} and 
$\app{\tau}{\br'} \prec_{v'} \tau'$~\llabel {Ct}.
Cases where the derivation of \lref H ends with \Rule{sc-Triv} and
\textsc{sc-Var} cases are direct: take $\br$ and $(\dots, \sigma', \dots)$
for $\br'$, respectively.

In the remaining cases, the derivation ends with Rule \Rule{sc-Constr} and
$\tau$ is of the form $\ty{\bs}{t}$.  
\begin{itemize}
\item 
The $v$-closure assumption of the left
premise ensures that $\tau'$ is itself of the form $\ty{\bs'}{t}$ for some
sequence of closed types $\bs'$.
By inversion on the variance $\bar{w\alpha}$ of the head constructor
$\tyc{t}$, we deduce $\app{\sigma_i}{\br} \prec_{\varcomp v {w_i}}
\sigma'_i$ for all $i$~\llabel A. 
\item
The middle premise is the zipping assumption on the contexts $\G =
\mathop{\zip}_\iI \G_i$~\llabel C. 
\item
The right premises give us subderivations $\G_i \der
\sigma_i : \varcomp v {w_i} \To \varcomp {v'} {w_i}$ for all $i$.
This implies $\G_i \der \sigma_i : \varcomp v {w_i}$ for all $i$, 
which implies $\G \der \ty{\bs}{t} : v$, \ie.~\lref {C1}.
By induction hypothesis, this also implies 
$\G_i \der \sigma_i : \varcomp v {w_i}$~\llabel D and 
$\G_i \Der \sigma_i : \varcomp v {w_i} \To \varcomp {v'} {w_i}$ for
all $i$~\llabel E. 
\end{itemize}
We may now apply zip soundness (Lemma~\ref{lem/zip-soundness}) with
hypotheses \lref C, \lref D and \lref E, which  gives us the simultaneous
decomposition 
$\G \Der (\sigma'_i : \varcomp v {w_i} \vto \varcomp {v'}
{w_i})_\iI$.   Expanding this property (Definition \ref
{def/simultaneous-decomposition}), we may apply to \lref A to get 
to get a witness $\br'$ such that both $\br' \prec_\G \br$, \ie.
our first goal \lref {Cr},
and $(\forall \iI, \app{\sigma_i}{\br'} \prec_{\varcomp {v'} {w_i}}
\sigma'_i)$, which implies 
$\app{(\ty{\bs}{t})}{\br'} \prec_{v'} \ty{\bs'}{t}$, \ie. our second goal
\lref {Ct}.
\qed
\end{proof}

Completeness is the general case is however much more difficult and we
only prove it when the right-hand side variance $v'$ is $\rel\veq$. In
other words, we take back the generality that we have introduced
in~\S\ref{sec/expressing_decomposability} when defining
decomposability.
\begin{version}{\Not\Esop}
The proof requires several auxiliary lemmas; it is the
subject of the next subsection.
\end{version}

\begin{version}{\Esop}
\begin{lemma}[Completeness of syntactic decomposability]\noindent\\
\label{lem/decomposability-completeness}
If $\sem{\G \der \tau : v \To v'}$ holds for $v' \in \{\veq,\virr\}$,
then $\G \der \tau : v \To v'$ is provable.
\end{lemma}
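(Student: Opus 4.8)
The plan is to dispose of the easy cases with \Rule{sc-Triv} and reduce everything to a single structural induction. First, when $v' = \virr$ there is nothing to do: since $\virr$ is the least variance, $v \geq \virr$ holds, and by definition $\sem{\G \der \tau : v \To v'}$ contains $\sem{\G \der \tau : v}$, equivalently $\G \der \tau : v$ by Lemma~\ref{lem/variance-checking-correct}, so \Rule{sc-Triv} applies. The same rule handles $v \geq v'$ in general, in particular $v' = \veq$ with $v = \veq$. The only genuinely interesting case is therefore $v' = \veq$ with $v \neq \veq$ (hence $v \not\geq v'$), which I treat by induction on the type expression $\tau$, with $\G$, $v$ and $v' \in \{\veq,\virr\}$ universally quantified in the induction hypothesis.

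In the variable case $\tau = \alpha$, I unfold $\sem{\G \der \alpha : v \To \veq}$ (Definition~\ref{def/cc/semantics}). The variance part gives $\G \der \alpha : v$, hence $w\alpha \in \G$ with $w \geq v$ by inversion of \Rule{vc-Var}. The decomposability part $\G \Der \alpha : v \vto \veq$ collapses, because $v' = \veq$ forces the witness $\rho'$ in Definition~\ref{def/decomposability} to equal $\sigma'$: it becomes ``$\rho \prec_v \sigma'$ implies $\rho \prec_w \sigma'$'', i.e.\ $\rel{\prec_v} \subseteq \rel{\prec_w}$, which is exactly $w \leq v$. Combined with $w \geq v$ and antisymmetry of the variance order, this yields $w = v$, so \Rule{sc-Var} applies.

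For the constructor case $\tau = \ty{\bs}{t}$, with $\tyc t$ declared at variance $\bar{w\alpha}$, I must produce the three premises of \Rule{sc-Constr}. For $v$-closure (Definition~\ref{def/v-closed}), I use that $\G \Der \ty{\bs}{t} : v \vto \veq$ forces every $\prec_v$-supertype of any instance $\app{(\ty{\bs}{t})}{\br}$ to again be headed by $\tyc t$, since the equality witness reads $\sigma' = \app{(\ty{\bs}{t})}{\br'}$; a single instance suffices to conclude full $v$-closure of $\tyc t$, because in our relation the set of possible head constructors of the $\prec_v$-supertypes of $\ty{\br}{t}$ depends only on $\tyc t$ and $v$, not on $\br$ (the one non-structural rule, \Rule{sub-PQ}, has head-determined applicability; when $v = \virr$ the hypothesis is simply unsatisfiable, so vacuous). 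For the other two premises I first extract, from whole decomposability, the simultaneous decomposition of the arguments at variances $\varcomp v {w_i} \vto \varcomp \veq {w_i}$: given componentwise data on some $\br$, the inversion principle~(\S\ref{subtyping_inversion}) repackages it as $\app{(\ty{\bs}{t})}{\br} \prec_v \ty{\bs'}{t}$, whole decomposability returns $\br'$ with an honest equality $\app{(\ty{\bs}{t})}{\br'} = \ty{\bs'}{t}$ (here both $v' = \veq$ and $v$-closure are essential), and inverting this equality recovers the component relations. Together with the variance checks $\G \der \sigma_i : \varcomp v {w_i}$ obtained by inverting \Rule{vc-Constr}, I then invoke completeness of zipping (the converse of the zip soundness lemma) to split $\G = \mathop{\zip}_i \G_i$ with $\G_i \der \sigma_i : \varcomp v {w_i}$ and $\G_i \Der \sigma_i : \varcomp v {w_i} \vto \varcomp \veq {w_i}$, that is $\sem{\G_i \der \sigma_i : \varcomp v {w_i} \To \varcomp \veq {w_i}}$. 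Since $\varcomp \veq {w_i} \in \{\veq,\virr\}$ and each $\sigma_i$ is structurally smaller, the induction hypothesis yields $\G_i \der \sigma_i : \varcomp v {w_i} \To \varcomp \veq {w_i}$, and assembling these with the $v$-closure and the zip gives the \Rule{sc-Constr} conclusion.

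I expect the context-splitting step to be the main obstacle: unlike the soundness of zipping, its converse is not a simple case analysis, since one must reconstruct per-occurrence variances $(\G_i)_i$ that both zip back to the given $\G$ and remain simultaneously compatible with the variance check and with the componentwise decompositions. The restriction to $v' = \veq$ is exactly what keeps the rest tractable: the collapse of the decomposition witness to a genuine equality is what licenses the clean inversion above and keeps the recursive targets $\varcomp \veq {w_i}$ inside $\{\veq,\virr\}$, so that the induction closes; for general $v'$ this equality is lost and completeness becomes much harder.
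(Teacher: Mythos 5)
Your proposal is correct and follows essentially the same route as the paper's proof: dispatch $v \geq v'$ (hence all of $v' = \virr$) with \Rule{sc-Triv}, then induct structurally on $\tau$ for $v' = \veq$, forcing $w = v$ in the variable case, and in the constructor case deriving $v$-closure of the head from the head-determined shape of the subtyping relation, establishing the simultaneous decomposition of the arguments, and invoking zip completeness (strengthened to $\mathop{\zip}_i \G_i = \G$ via the variance checks) before closing with the induction hypothesis and \Rule{sc-Constr}. You even isolate the same subtlety the paper relegates to a remark (deducing $v$-closure of $\tyc t$ from decomposability of one instance); the only cosmetic divergence is that the paper treats the irrelevant argument positions separately before zipping, whereas you fold them into the uniform targets $\varcomp{\veq}{w_i} \in \{\veq,\virr\}$.
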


Lemma~\ref{lem/decomposability-completeness} is an essential 
piece to finally turn the semantic criterion \Rule{Req} into a purely
syntactic form.
\end{version}

\begin{version}{\Not\Esop}

\subsection{Completeness of syntactic decomposability}

\label {sec/zip-completeness}

We first show a few auxiliary results that will serve in the proof of
zip completeness, and later, to reconnect our closure-checking
criterion~(Definition \ref{def/cc/semantics}) with the
full criterion of Simonet and Pottier~(\Rule{Req}).

\begin{lemma}[Intermediate value]\label{intermediate_value_lemma}
  Let  $\app{\tau}{\ba}$ be a type expression and 
  $\br_1$, $\br_2$, $\br_3$ three type families such that
  $\app{\tau}{\br_1} \leq \app{\tau}{\br_2} \leq \app{\tau}{\br_3}$ and 
  $\br_1 \prec_\G \br_3$ holds for some $\G$.
Then, there   exists a type family $\br_2'$ such that both
$\br_1 \prec_\G \br_2' \prec_\G \br_3$ and
$\app{\tau}{\br_2'} = \app{\tau}{\br_2}$ hold.
\end{lemma}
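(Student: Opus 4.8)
The plan is to avoid a structural induction on $\tau$, which would force a delicate recombination of the per-subterm witnesses whenever a variable occurs several times, and instead to read the witness off $\br_2$ directly, correcting it only at the variables that $\tau$ does not actually observe. First I would invoke the principality property~(\ref{principality}) to obtain the minimal context $\Delta$ with $\Delta \der \tau : \vplus$, writing $o_k$ for the variance of $\alpha_k$ in $\Delta$. Since $\app{\tau}{\br_1} \leq \app{\tau}{\br_2}$ and $\app{\tau}{\br_2} \leq \app{\tau}{\br_3}$ are instances of $\prec_{\vplus}$, principal inversion~(\ref{cor/principal-inversion}) gives $\br_1 \prec_\Delta \br_2$ and $\br_2 \prec_\Delta \br_3$, that is $\rho_{1,k} \prec_{o_k} \rho_{2,k} \prec_{o_k} \rho_{3,k}$ for every $k$. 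Thus $\Delta$ pins down, uniformly in one object, both the effective variance $o_k$ of the occurrences of $\alpha_k$ in $\tau$ and the fact that the middle witness $\rho_{2,k}$ already lies $o_k$-between $\rho_{1,k}$ and $\rho_{3,k}$.

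I would then define $\br_2'$ componentwise: take $\rho'_{2,k} = \rho_{2,k}$ when $o_k \neq \virr$, and $\rho'_{2,k} = \rho_{1,k}$ when $o_k = \virr$ (the latter being exactly where $\alpha_k$ occurs only irrelevantly, so $\rho_{2,k}$ is unconstrained). The heart of the proof is checking $\br_1 \prec_\G \br_2' \prec_\G \br_3$, i.e. $\rho_{1,k} \prec_{w_k} \rho'_{2,k} \prec_{w_k} \rho_{3,k}$ for each $k$, where $w_k$ is the variance of $\alpha_k$ in the given $\G$. When $o_k = \virr$ this is immediate, since $\rho'_{2,k} = \rho_{1,k}$ makes the left half reflexive and the right half is precisely the hypothesis $\br_1 \prec_\G \br_3$. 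When $o_k \neq \virr$ I would argue by a short case analysis on the pair $(o_k, w_k)$, and this is the step I expect to be the main obstacle to phrase crisply. The key is a collapse phenomenon: $\rho_{2,k}$ is $o_k$-between $\rho_{1,k}$ and $\rho_{3,k}$, while the hypothesis forces $\rho_{1,k} \prec_{w_k} \rho_{3,k}$; whenever $o_k$ and $w_k$ disagree in direction, or $w_k = \veq$, these two facts squeeze $\rho_{1,k}, \rho_{2,k}, \rho_{3,k}$ into a single mutual-subtyping class, so the required $w_k$-relations hold trivially, and in the remaining aligned cases the desired relation is just a weakening already in hand. This is exactly where the reading of $\veq$ as $\leq \,\cap\, \geq$ is needed.

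Finally I would establish $\app{\tau}{\br_2'} = \app{\tau}{\br_2}$ with no syntactic bookkeeping, again routing through $\Delta$. By construction $\br_2'$ and $\br_2$ coincide at every variable with $o_k \neq \virr$ and are $\virr$-related elsewhere, so both $\br_2' \prec_\Delta \br_2$ and $\br_2 \prec_\Delta \br_2'$ hold (reflexivity at the non-irrelevant variables, triviality at the irrelevant ones). Soundness of variance checking (Lemma~\ref{lem/variance-checking-correct}, through the interpretation of Definition~\ref{def/vc/semantics}) applied to $\Delta \der \tau : \vplus$ then yields $\app{\tau}{\br_2'} \leq \app{\tau}{\br_2}$ and $\app{\tau}{\br_2'} \geq \app{\tau}{\br_2}$, hence the claimed equality. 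The whole argument therefore reduces the intermediate-value phenomenon to three already-established facts about the principal context---its existence, its inversion property, and its soundness---with the only genuinely new ingredient being the collapse case analysis of the middle paragraph.
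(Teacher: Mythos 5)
Your proof is correct and takes essentially the same route as the paper's: both pass through the principal context $\Delta$ with $\Delta \der \tau : \vplus$, use principal inversion to place $\br_2$ between $\br_1$ and $\br_3$ under $\Delta$, and then argue variable by variable that whenever the $\Delta$-variance and the $\G$-variance disagree (or $\G$ demands $\veq$) the three components collapse into a single mutual-subtyping class. The only cosmetic difference is that you fix the witness uniformly up front ($\rho_{2,k}$ at non-irrelevant variables, $\rho_{1,k}$ at irrelevant ones) and justify $\app{\tau}{\br_2'} = \app{\tau}{\br_2}$ via soundness of variance checking, where the paper first dispatches the case $\Delta \geq \G$ globally and otherwise picks $\rho_1$ in each disagreement case, arguing the equality directly.
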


\begin{proof}{}
\locallabelreset
We reuse the notations of the definition and assume
$\app{\tau}{\br_1} \leq \app{\tau}{\br_2} \leq \app{\tau}{\br_3}$~\llabel {T123} and 
$\br_1 \prec_\G \br_3$~\llabel {G13}. 
We just have to exhibit $\rho_2'$ such that both
$\br_1 \prec_\G \br_2' \prec_\G \br_3$\llabel {C123} and
$\app{\tau}{\br_2'} = \app{\tau}{\br_2}$~\llabel {C22'} hold.
Let $\Delta$ be the most general variance of $\tau$, \ie.  the
lowest context such that $\Delta \der \tau : \vplus$~\llabel {Delta} holds. 
By principal inversion~(Corollary \ref{cor/principal-inversion})
applied to \lref {T123} twice thanks to \lref {Delta}, we have
$\br_1 \prec_\Delta \br_2 \prec_\Delta \br_3$~\llabel {br123}.

If $\Delta \geq \G$, the result is immediate, as 
$\br_1 \prec_\G \br_2 \prec_\G \br_3$ follows from
\lref {br123} by anti-monotonicity and both \lref {C123} and \lref {C22'}
hold when we take $\br_2$ for $\br_2'$.
Otherwise, we  reason on each variable of the context $\G$ independently. 
\XXX{If we moved this \wlg. assumption up front, we would have simpler notations}
We may assume, \wlg., that $\tau$ is defined over
a single free variable $\alpha$, and $\G$ and $\Delta$ are
single variances $v_\Delta, v_\G$ with
$v_\Delta \ngeq v_\G$. We reason by case analysis on the
possible variances for $(v_\Delta, v_\G)$, which are
$\{(\any,\veq), (\vplus,\vminus), (\vminus,\vplus), (\virr,\any)\}$.

If $v_\G$ is $\rel\veq$, the hypotheses \lref {G13} 
and  \lref {T123} become $\rho_1 = \rho_3$ and
$\app{\tau}{\rho_1} \leq \app{\tau}{\rho_2} \leq \app{\tau}{\rho_1}$, 
which implies $\app{\tau}{\rho_1} = \app{\tau}{\rho_2}$. Thus, taking
$\rho_1$ for $\rho_2'$ satisfies \lref {C123} and~\lref {C22'}. 

If $v_\Delta$ is $\rel\virr$, then by irrelevant of $\virr$ and \lref
{Delta}, we have $\app{\tau}{\rho_1} = \app{\tau}{\rho_2}$. Thus, taking
$\rho_1$ for $\rho'_2$ satisfies \lref {C123} and~\lref {C22'},  as above. 

Finally, the cases $(\vplus,\vminus)$ and $(\vminus,\vplus)$ are
symmetric and we will only work out the first one, \ie. 
$v_\Delta$ is $\rel\vplus$ and $v_\G$ is $\rel\vminus$. From
$\app{\tau}{\rho_1} \leq \app{\tau}{\rho_3}$ (which follows from \lref {T123}
by transitivity) and \lref {Delta}, we have 
$\rho_1 \prec_{v_\Delta} \rho_3$, \ie. $\rho_1 \leq \rho_3$.
Since the hypothesis~\lref {G13} becomes $\rho_1 \geq \rho_3$, we have
$\rho_1 = \rho_3$~.  Then, taking $\rho_1$ for $\rho_2'$, 
\lref {C123} trivially holds while \lref {C22'} follows from \lref {T123}. 
\qed
\end{proof}

The next lemma connects the monotonicity of the variance-checking
judgment (checking variance at a lower context provides more
information, and is therefore harder) and the anti-monotonicity of the
decomposability formula (decomposing to a higher context provides more
information, and is therefore harder): for a fixed type expression,
the contexts at which you can check variance are higher than the
contexts at which you can decompose. This property, however, only
holds for non-trivial decomposability results (otherwise any context
can decompose): we must decompose from a $v$ to a $v'$ that do not
verify $v \geq v'$, and no variable of the typing context must be
irrelevant.

\begin{lemma}
\label{lem/order-decomposition-variance}
Let $\app{\tau}{\ba}$ be a type and $v$ and $v'$ be variances such that $v
\ngeq v'$.
If ${\G \Der \tau : v \vto v'}$ and $\Delta$ is the principal
context such that $\Delta \der \tau : v$, then, for each non-irrelevant
variable $\alpha$ of $\Delta$, we have $\G(\alpha) \leq \Delta(\alpha)$.
\end{lemma}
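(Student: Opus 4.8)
The plan is to argue by contradiction, reducing everything to a single non-irrelevant variable $\alpha$ of $\Delta$ and deriving a numerical contradiction on the decomposition witness via principal inversion. Fix such an $\alpha$ and write $v_\Delta = \Delta(\alpha)$ and $v_\G = \G(\alpha)$; I must show $v_\G \leq v_\Delta$. Since $\alpha$ is non-irrelevant, $v_\Delta \in \{\vplus, \vminus, \veq\}$. When $v_\Delta = \veq$ there is nothing to prove, as $\veq$ is the top of the variance lattice; the cases $v_\Delta = \vplus$ and $v_\Delta = \vminus$ are symmetric, so I would spell out $v_\Delta = \vplus$. Here $v_\G \leq \vplus$ fails exactly when $v_\G \in \{\veq, \vminus\}$, and in both of these bad cases $\prec_{v_\G}$ is contained in $\rel\geq$, so the witness produced by decomposition is forced to point \emph{downwards} at $\alpha$.

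Concretely, assume $v_\G \in \{\veq, \vminus\}$. Using the non-atomic part of the subtyping relation (e.g. $\ty{\tyc r}{p} \leq \tyc q$, which is strict since $\tyc q$ is not of the form $\ty{\cdot}{p}$), I pick a strict pair $a < b$ and instantiate the remaining variables with arbitrary closed types $\bar c$, obtaining $\br_a = [\alpha \mapsto a, \dots = \bar c]$ and $\br_b = [\alpha \mapsto b, \dots = \bar c]$. Since $a \prec_{v_\Delta} b$ and the two vectors agree elsewhere, so that $\br_a \prec_\Delta \br_b$, soundness of $\Delta \der \tau : v$ (Lemma~\ref{lem/variance-checking-correct}) gives $\app\tau{\br_a} \prec_v \app\tau{\br_b}$. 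Feeding $\br = \br_a$ and $\sigma' = \app\tau{\br_b}$ into decomposability $\G \Der \tau : v \vto v'$ (Definition~\ref{def/decomposability}) yields a witness $\br'$ with $\br_a \prec_\G \br'$ and $\app\tau{\br'} \prec_{v'} \app\tau{\br_b}$. The first relation, read at $\alpha$, gives $a \prec_{v_\G} \rho'_\alpha$, hence $\rho'_\alpha \leq a$ in both bad cases. The second relation has the same head shape on both sides, so principal inversion at variance $v'$ (Corollary~\ref{cor/principal-inversion}, applied to the principal context $\Delta'$ of $\tau : v'$) gives $\rho'_\alpha \prec_{\Delta'(\alpha)} b$.

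It then remains to show that $\Delta'(\alpha)$ forces $\rho'_\alpha \geq b$, contradicting $\rho'_\alpha \leq a < b$. This is where $v \ngeq v'$ enters, through a short computation in the variance algebra. Writing $o$ for the variance of $\alpha$ in $\tau$ (its principal variance at $\vplus$), the principal-context construction gives $\Delta(\alpha) = \varcomp v o$ and $\Delta'(\alpha) = \varcomp{v'} o$. From $\varcomp v o = \vplus$ the composition table (Figure~\ref{fig:composition_table}) forces $v$ and $o$ to have the same non-trivial sign; then $v \ngeq v'$ restricts $v'$ (to $\{\veq,\vminus\}$ when $v=\vplus$, to $\{\veq,\vplus\}$ when $v=\vminus$), and in each of these cases another glance at the table shows $\varcomp{v'} o \in \{\veq,\vminus\}$, i.e. $\prec_{\Delta'(\alpha)} \subseteq \rel\geq$. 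Thus $\rho'_\alpha \geq b$, giving $b \leq \rho'_\alpha \leq a < b$, the desired contradiction. I expect the main obstacle to be precisely this last algebraic step: one must check that the identity $\Delta'(\alpha) = \varcomp{v'} o$ survives the meet over several occurrences of $\alpha$ (composition distributes over meet in exactly the sign cases that arise here), and one must secure a genuinely strict pair $a < b$, since without strictness the squeeze $b \leq \rho'_\alpha \leq a$ would not be contradictory.
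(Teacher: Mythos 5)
Your proof is correct, but it takes a genuinely different route from the paper's. The paper's argument is direct: it fixes an arbitrary pair $\rho_1 \prec_\Delta \rho_2$ and shows $\rho_1 \prec_\G \rho_2$, which is exactly the relational containment defining $\G(\alpha) \leq \Delta(\alpha)$; its case analysis is on the pair $(v, v')$ with $v \ngeq v'$, the case $v' = \veq$ being closed by Lemma~\ref{lem/non-irrelevant-equality} and the mixed-sign cases by the Intermediate Value Lemma~\ref{intermediate_value_lemma}. You instead reason by contradiction on the syntactic variances, feeding decomposability a single strictly ordered pair and squeezing the witness between $a$ and $b$; your case analysis is on $\Delta(\alpha)$ and $\G(\alpha)$, and the role of the Intermediate Value Lemma is played by principal inversion at $v'$ combined with the algebraic fact that the principal context for $\tau : v'$ assigns $\varcomp{v'}{o}$ to $\alpha$, where $o$ is the principal variance of $\alpha$ at $\vplus$. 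The two ingredients you flag as potential obstacles do check out: composition distributes over the least upper bound of the per-occurrence variances (a finite check on the table of Figure~\ref{fig:composition_table}), and \Rule{sub-PQ} supplies a genuinely strict pair, e.g.\ $\ty{\tyc{q}}{p} \leq \tyc{q}$ with $\tyc{q} \not\leq \ty{\tyc{q}}{p}$, so the squeeze $b \leq \rho'_\alpha \leq a$ really contradicts $a < b$ by transitivity alone, without appealing to antisymmetry of $\rel\leq$. What your route buys is that it bypasses Lemma~\ref{intermediate_value_lemma} entirely; what it costs is an explicit reliance on the concrete non-atomic part of the subtyping relation (the existence of a strict pair) and on the occurrence-wise description of principal contexts, neither of which the paper's purely relational argument has to make explicit.
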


\begin{proof}{}
\locallabelreset
Without loss of generality, we consider the case where $\tau$ has
a single, non-irrelevant variable $\alpha$, and $\G$ and $\Delta$ are
singleton contexts over a single variance, respectively $w_\G$ and
$w_\Delta$, with $w_\Delta \neq \rel\Join$.

Therefore, we assume ${(w_\G\alpha) \Der \tau : v \vto v'}$~\llabel
{H1} and 
$(w_\Delta\alpha) \der \tau : v$~\llabel {H2}. 
We  prove that $w_\G \leq w_\Delta$. 
We actually show that for any
$\rho_1, \rho_2$ such that $\rho_1 \prec_\Delta \rho_2$ we also have
$\rho_1 \prec_\G \rho_2$~\llabel C.

From $\rho_1 \prec_\G \rho_2$, we can deduce
$\app{\tau}{\rho_1} \prec_v \app{\tau}{\rho_2}$~\llabel {T12}. 
Applying \lref {H1}, 
%% our decomposability hypothesis ${\G \Der \tau : v \vto v'}$
we get a $\rho'$ such that $\rho_1 \prec_\G \rho'$~\llabel r and
$\app{\tau}{\rho'} \prec_{v'} \app{\tau}{\rho_2}$~\llabel {Tp2}. We then reason by case
analysis on $v \ngeq v'$, considering the different cases
$\{(\virr, \any), (\vplus, \vminus), (\vminus, \vplus), (\any, \veq)\}$.

If $v$ is $\rel\virr$, the most general  $w_\Delta$ is
$\virr$, a case we explicitly ruled out: there is
nothing to prove.

If $v'$ is $\rel\veq$, then \lref {Tp2} implies 
$\app{\tau}{\rho'} = \app{\tau}{\rho_2}$, and in particular
$\rho' = \rho_2$; our goal \lref C follows from \lref r.

If $(v,v')$ is $(\vplus,\vminus)$ (we won't repeat the
symmetric case $(\vminus,\vplus)$), then \lref {T12} and \lref {Tp2}
becomes
$\app{\tau}{\rho_1} \leq \app{\tau}{\rho_2}$ and 
$\app{\tau}{\rho_2}  \leq\app{\tau}{\rho'}$.
Given \lref r,  the intermediate value lemma (Lemma
\ref{intermediate_value_lemma}) 
ensures the existence of a $\rho''$ such that $\rho \prec_\G \rho''
\prec_\G \rho'$ and $\app{\tau}{\rho''} = \app{\tau}{\rho_2}$. From there, we
deduce $\rho'' = \rho_2$, our goal \lref C follows from \lref r, as in the
previous case.
\qed
\end{proof}

Finally, the following auxiliary lemmas will be useful in the proof of
completeness.

\begin{lemma}
  \label{lem/non-irrelevant-equality}
  If the principal variance $w$ such that 
  $(w\alpha) \der \app{\tau}{\alpha} : v$ holds is not the irrelevant variance
  $\virr$, then
  $\app\tau{\rho_1} = \app\tau{\rho_2}$ implies $\rho_1 = \rho_2$.
\end{lemma}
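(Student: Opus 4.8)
The plan is to argue by structural induction on the type expression $\tau$ (which, by the statement, ranges over the single variable $\alpha$), the induction hypothesis being the lemma itself read at a subterm and a shifted variance; intuitively, a non-irrelevant principal variance witnesses that $\alpha$ genuinely occurs in $\tau$, and genuine occurrence forces injectivity. The key structural fact I would establish up front is how the principal variance decomposes along the head constructor. Since the variance judgment is syntax-directed on $\tau$, for $\tau = \ty{\bar\sigma}{t}$ with head constructor $\tyc{t}$ of variances $(u_i)_i$, the judgment $(w\alpha) \der \ty{\bar\sigma}{t} : v$ holds exactly when $(w\alpha) \der \sigma_i : \varcomp{v}{u_i}$ holds for every $i$; hence, writing $w_i$ for the principal variance of $\sigma_i$ at $\varcomp{v}{u_i}$, the principal variance of $\tau$ at $v$ is the least context dominating every $w_i$, namely $w = \bigwedge_i w_i$ (see~\ref{principality}).

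In the base case $\tau = \alpha$, the principal variance at $v$ is $v$ itself, and since $\app{\alpha}{\rho_1} = \rho_1$, injectivity is immediate. For the inductive case $\tau = \ty{\bar\sigma}{t}$, I would use that $\virr$ is the bottom of the variance lattice: as $w = \bigwedge_i w_i$ is assumed non-irrelevant, at least one $w_j$ is non-irrelevant. I would then note that this forces the constructor variance $u_j$ to be non-irrelevant too, for if $u_j = \virr$ then $\varcomp{v}{u_j} = \virr$ and the principal variance of any expression at $\virr$ is $\virr$, contradicting $w_j \neq \virr$. The induction hypothesis applied to $\sigma_j$ at variance $\varcomp{v}{u_j}$ then yields that $\sigma_j$ is injective.

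It remains to descend through the head constructor. Reading the hypothesis $\app\tau{\rho_1} = \app\tau{\rho_2}$ as mutual subtyping between $\ty{\app{\bar\sigma}{\rho_1}}{t}$ and $\ty{\app{\bar\sigma}{\rho_2}}{t}$, the inversion principle for subtyping~(\S\ref{subtyping_inversion}) gives $\app{\sigma_j}{\rho_1} \prec_{u_j} \app{\sigma_j}{\rho_2}$ in both directions; since $u_j \in \{\vplus, \vminus, \veq\}$, these combine into the equality $\app{\sigma_j}{\rho_1} = \app{\sigma_j}{\rho_2}$, and injectivity of $\sigma_j$ delivers the desired $\rho_1 = \rho_2$. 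The step I expect to be most delicate is this coordination of three distinct \emph{relevance} conditions---the non-irrelevant principal variance of the whole expression, a non-irrelevant constructor variance $u_j$, and a non-irrelevant principal variance $w_j$ of the subterm---and verifying that they line up at one and the same position $j$; the fact that the equality in play is the subtyping-induced equivalence, not syntactic identity, is exactly what makes the inversion step necessary and this alignment load-bearing.
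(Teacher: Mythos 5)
Your proof is correct, but it takes a genuinely different route from the paper's. The paper dispatches this lemma in three lines: the equality $\app\tau{\rho_1} = \app\tau{\rho_2}$ gives $\prec_v$ in both directions, Principal Inversion (Corollary~\ref{cor/principal-inversion}) applied twice then yields $\rho_1 \prec_w \rho_2$ and $\rho_2 \prec_w \rho_1$ at the principal context $w$, and for $w \neq \virr$ the intersection of $\rel{\prec_w}$ with its converse is equality. You instead run a structural induction on $\tau$, threading a non-irrelevant position $j$ down through the head constructor via the syntax-directed inversion principle of \S\ref{subtyping_inversion}, the characterization of the principal context as the least upper bound $\bigwedge_i \Delta_i$ of the subterms' principal contexts, and the observation that $\varcomp{v}{\virr} = \virr$ forces the relevant path to avoid irrelevant constructor arguments. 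All of these steps check out (including your careful alignment of the three relevance conditions at the same index $j$, and your correct reading of ``$=$'' as mutual subtyping rather than syntactic identity). What you have effectively done is inline the special case of the Inversion Theorem~(\ref{thm/inversion}) that the paper has already proved and packaged: the induction you perform mirrors almost exactly the induction in the proof of that theorem (head-constructor inversion plus combining subterm contexts by $\bigwedge$). Your version is more self-contained and makes visible \emph{why} non-irrelevance propagates to an actual variable occurrence, at the cost of redoing work; the paper's version is shorter because Principal Inversion already concentrates all of that bookkeeping into a single reusable statement.
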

\begin{proof}{}
  Whatever $v$ is, $\app{\tau}{\rho_1} = \app{\tau}{\rho_2}$ implies
  both $\app{\tau}{\rho_1} \prec_v \app{\tau}{\rho_2}$ and its converse
  $\app{\tau}{\rho_2} \prec_v \app{\tau}{\rho_1}$. 
This holds in particular for the principal variance  $w$ such that 
$(w\alpha) \der \app{\tau}{\alpha} : v$. 
Moreover, by principal inversion (Corrolary~\ref{cor/principal-inversion})
applied twice, we have both $\rho_1 \prec_w \rho_2$ and $\rho_2 \prec_w
\rho_1$. 
If $w$ is distinct from $\virr$ this implies $\rho_1 = \rho_2$.
\qed
\end{proof}

\begin{lemma}
\label{lem/principal-equal-decomposability}
If $\Gamma \Der \tau : v \vto \rel\veq$ holds for some $\Gamma$, and
$\Delta$ is the minimal context such that $\Delta \der \tau : v$
holds, then $\Delta \Der \tau : v \vto \rel\veq$ also hold.
\end{lemma}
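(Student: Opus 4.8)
The plan is to avoid any case analysis on the structure of $\tau$ and instead lean entirely on the defining property of the principal context $\Delta$: since $\Delta$ is minimal for $\Delta \der \tau : v$, principal inversion (Corollary~\ref{cor/principal-inversion}) guarantees that $\app{\tau}{\br} \prec_v \app{\tau}{\br'}$ implies $\br \prec_\Delta \br'$ for all type vectors. This single fact will let me upgrade a decomposition witness supplied by $\Gamma$ into a $\Delta$-decomposition witness.

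First I would unfold the goal $\Delta \Der \tau : v \vto \veq$ following Definition~\ref{def/decomposability}: given $(\br : \Delta)$ and a type $\sigma'$ with $\app{\tau}{\br} \prec_v \sigma'$, I must exhibit $(\br' : \Delta)$ with $\br \prec_\Delta \br'$ and $\app{\tau}{\br'} = \sigma'$ (recall $\prec_\veq$ is equality). Since $\Gamma$ and $\Delta$ range over the same variables, namely the free variables of $\tau$, the very same $\br$ is also a valid $(\br : \Gamma)$, and $\app{\tau}{\br} \prec_v \sigma'$ is precisely the premise needed to instantiate the hypothesis $\Gamma \Der \tau : v \vto \veq$. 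This produces a witness $\br'$ with $\br \prec_\Gamma \br'$ and, more importantly, $\app{\tau}{\br'} = \sigma'$.

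The crux is then to throw away the relation $\br \prec_\Gamma \br'$ --- useless here, as $\Gamma$ bears no a priori order relationship with $\Delta$ --- and to reconstruct the needed relation at $\Delta$ from the equality alone. Rewriting $\sigma'$ as $\app{\tau}{\br'}$ in the hypothesis turns $\app{\tau}{\br} \prec_v \sigma'$ into $\app{\tau}{\br} \prec_v \app{\tau}{\br'}$, a comparison between two instances of the \emph{same} type expression; principal inversion then yields exactly $\br \prec_\Delta \br'$. Together with $\app{\tau}{\br'} = \sigma'$, this $\br'$ is the sought $\Delta$-witness, and the goal is discharged.

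The step I expect to require the most care is the appeal to inversion on $\app{\tau}{\br} \prec_v \app{\tau}{\br'}$: I must justify that it applies even though $\tau$ need not be $v$-closed. This is exactly the content of the remark following Theorem~\ref{thm/inversion} --- inversion is available whenever we compare an instance $\app{\tau}{\br}$ not to an arbitrary supertype but to another instance $\app{\tau}{\br'}$ of the same expression --- and the equality $\app{\tau}{\br'} = \sigma'$ furnished by $\Gamma$-decomposability is precisely what puts us in that situation. Worth noting is that this argument needs no case split on $v$ and does not invoke Lemma~\ref{lem/order-decomposition-variance}, which seems destined instead for the harder general completeness result that follows.
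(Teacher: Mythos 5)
Your proof is correct and follows exactly the same route as the paper's: instantiate the $\Gamma$-decomposability hypothesis to obtain a witness $\br'$ with $\app{\tau}{\br'} = \sigma'$, discard the $\prec_\Gamma$ relation, rewrite to get $\app{\tau}{\br} \prec_v \app{\tau}{\br'}$, and conclude $\br \prec_\Delta \br'$ by principal inversion (Corollary~\ref{cor/principal-inversion}). Your observation that inversion applies because we compare two instances of the same type expression is precisely the justification the paper gives after Theorem~\ref{thm/inversion}.
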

\begin{proof}{}
\locallabelreset
Assume $\Gamma \Der \tau : v \vto \rel\veq$, \ie. 
$$
    \forall \br \bs', \app{\tau}{\br} \prec_v \bs' \implies
      \exists \br',\uad \br \prec_\G \br' \wedge \app{\tau}{\br'} = \sigma'
\eqno \llabel{Hdecomp}
$$
Assume that $\Delta$ is minimal for  $\Delta \der \tau : v$~\llabel D.
We show  $\Delta \Der \tau : v \vto \rel\veq$, \ie. 
$$
    \forall \br \bs', \app{\tau}{\br} \prec_v \bs' \implies
      \exists \br',\uad \br \prec_\Delta \br' \wedge \app{\tau}{\br'} =
      \sigma'
\eqno \llabel 0
$$
Let $\br$, $\bs'$ be such that have 
$\app{\tau}{\br} \prec_v \sigma'$~\llabel{Hprec}.  
By~\lref{Hdecomp}, there exists
$\br'$ such that $\app{\tau}{\br'} = \sigma'$~\llabel =.
To prove~\lref 0, it only remains to prove that  $\br \prec_\Delta
\br'$~\llabel C.  
Given~\lref =, the inequality~\lref {Hprec} becomes $\app{\tau}{\br}
\prec_v \app{\tau}{\br'}$~\llabel 4.  Then \lref C follows 
by  principal inversion (Corrolary~\ref{cor/principal-inversion})
applied to \lref 4, given \lref D. 
\qed
\end{proof}

\begin{lemma}
\label{lem/principal-context-for-eq-or-irr}
Let $\Delta$ be the minimal context such that $\Delta \der \tau : v$
holds. If $v$ is $\rel=$, then only variances $\rel\veq$ or $\rel\virr$ may
appear in $\Delta$.  If $v$ is $\virr$, then only $\rel\virr$ may appear in 
$\Delta$.
\end{lemma}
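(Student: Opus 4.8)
The plan is to prove both claims simultaneously by structural induction on $\tau$, relying on a recursive description of the principal context. First I would make explicit how the minimal $\Delta$ such that $\Delta \der \tau : v$ is assembled from the inference rules of Figure~\ref{fig/variance}. When $\tau$ is a variable $\alpha$, Rule~\Rule{vc-Var} only requires $\Delta(\alpha) \geq v$, so the minimal context maps $\alpha$ to $v$ and every other variable to $\virr$. When $\tau$ is an application $\ty{\bar\sigma}{t}$ whose head constructor has variances $\bar{w\alpha}$, Rule~\Rule{vc-Constr} makes $\Delta \der \ty{\bar\sigma}{t} : v$ equivalent to the conjunction of $\Delta \der \sigma_i : \varcomp v {w_i}$ over all $i$; writing $\Delta_i$ for the principal context of $\sigma_i$ at variance $\varcomp v {w_i}$, this conjunction holds exactly when $\Delta \geq \Delta_i$ for every $i$, using principality~(\ref{principality}) and monotonicity~(Lemma~\ref{lem/monotonicity}). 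The least such $\Delta$ is therefore the least upper bound $\bigwedge_i \Delta_i$, exactly as in the inversion argument (Theorem~\ref{thm/inversion}).

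With this characterization in hand, the inductive step reduces to two facts about the variance algebra. The first is read off the composition table: $\varcomp \veq w$ is $\veq$ when $w \neq \virr$ and $\virr$ when $w = \virr$, while $\varcomp \virr w = \virr$ for every $w$. The second is that $\veq$ is the top element of the variance lattice, so the least upper bound of any family of variances drawn from $\{\veq, \virr\}$ stays in $\{\veq, \virr\}$, and the least upper bound of a family of $\virr$'s is $\virr$.

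For the case $v = \veq$, the variable subcase assigns $\veq$ and is immediate. In the constructor subcase, each $\Delta_i$ is principal for $\sigma_i$ at variance $\varcomp \veq {w_i}$, which is either $\veq$ or $\virr$; applying the induction hypothesis---the $v = \veq$ statement in the former case, the $v = \virr$ statement in the latter---shows each $\Delta_i$ mentions only $\veq$ and $\virr$, and taking the least upper bound $\bigwedge_i \Delta_i$ preserves this by the second fact above. For the case $v = \virr$, each $\varcomp \virr {w_i}$ collapses to $\virr$, so by the induction hypothesis every $\Delta_i$ mentions only $\virr$, and so does their least upper bound.

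The main obstacle is not the arithmetic of variances, which is routine, but cleanly justifying the recursive description of the principal context---in particular that the least upper bound $\bigwedge_i \Delta_i$ is the correct construction in the constructor case; this rests on reading Rule~\Rule{vc-Constr} together with monotonicity and the existence of principal contexts. One should also be careful with the paper's convention that $\bigwedge$ denotes the least upper bound (the join in the variance order), so that $\bigwedge$ of $\veq$ and $\virr$ is $\veq$ rather than $\virr$. Once the construction is pinned down, the simultaneous induction goes through directly.
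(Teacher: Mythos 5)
Your proof is correct, but it takes a genuinely different route from the paper's. You argue syntactically, by structural induction on $\tau$, after first characterizing the principal context recursively: a single variance $v$ at a variable, and the least upper bound $\mathop{\wedge}_i \Delta_i$ of the subterms' principal contexts at a constructor (this characterization is indeed justified by \Rule{vc-Constr} together with monotonicity~(Lemma~\ref{lem/monotonicity}) and principality~(\ref{principality}), and it is implicitly how $\G$ is built in the proof of Theorem~\ref{thm/inversion}); the result then falls out of the composition table ($\varcomp \veq w \in \{\veq,\virr\}$ and $\varcomp \virr w = \virr$) and closure of $\{\veq,\virr\}$ under joins. The paper instead argues semantically and avoids any induction: for $v = \virr$ it observes that the all-$\virr$ context satisfies $\sem{\G \der \tau : \virr}$ trivially, so principality pins $\Delta$ at the bottom of the order; for $v = \veq$ it reduces to a single variable, notes that $\rho \prec_w \rho'$ forces $\app\tau\rho = \app\tau{\rho'}$, and invokes Lemma~\ref{lem/non-irrelevant-equality} to conclude that a non-irrelevant $w$ must be $\veq$. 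Your approach buys independence from the semantic interpretation and from Lemma~\ref{lem/non-irrelevant-equality}, and is effectively an algorithm for computing $\Delta$; its cost is that the recursive description of the principal context must be stated and justified as a lemma in its own right, which is the step you rightly flag as the real obstacle. You are also right to be careful that the paper's $\wedge$ denotes the least upper bound in the variance order, so the join of $\veq$ and $\virr$ is $\veq$.
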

\begin{proof}{}
\locallabelreset
  If $v$ is $\rel\virr$, the context $\Gamma$ with all variances set
  to $\virr$ satifies $\Gamma \der \tau : v$ (as $\sem{\G \der
    \tau : v}$ holds). By principality we have $\Delta \leq
  \Gamma$, so $\Delta$ also has only irrelevant variances as $\rel\virr$
  is the minimal variance.

  If $v$ is $\rel\veq$, we handle each variable of the context
  independently, that is we can assume, \wlg.\XXX{Need justification}, that
  $\tau$ has only one variable $\alpha$. So $\Delta$ is of the form
  $(w\alpha)$ and we know that for any $\rho$, $\rho'$ such that $\rho
  \prec_w \rho'$ we have $\app{\tau}{\rho} = \app{\tau}{\rho'}$~\llabel H. 
If $w$ is not $\rel\virr$, by lemma~\ref{lem/non-irrelevant-equality}
applied to \lref H, we have $\rho = \rho'$ for any $\rho \prec_w \rho'$,
which means that $w$ is $\rel\veq$.
Summing up, we have shown that $w$ is either $\rel\virr$ or $\rel\veq$.
Reasoning similarly in the general case, any variance $w$ of $\Delta$ is
either $\rel\virr$ or $\rel\veq$.
\qed
\end{proof}

We can now prove the converse of the zip soundness (Lemma
\ref{lem/zip-soundness}) that is the core of the future proof of
completeness of the decomposability judgment $\G \der \tau : v \To \vprime{}$.
\begin{theorem}[Zip completeness] 
\label{thm/zip-completeness}
Given any context $\G$, a family of type expressions
$(\app{T_i}{\bs})_\iI$ and a family of variances $(v_i)_\iI$, if the simultaneous
decomposition $\G \Der (T_i : v_i \vto \vprime{i})_\iI$ holds, then there
exists a family of contexts $(\G_i)_\iI$ such that 
$\G \leq \mathop{\zip}_\iI \G_i$ and 
both $\G_i \der T_i : v_i$ and $\G_i \Der T_i : v_i \vto \vprime{i}$
hold for all $i$.

If furthermore $\G \der T_i : v_i$ holds for all $i$, then
$\mathop{\zip}_\iI \G_i$ is precisely $\G$.
\end{theorem}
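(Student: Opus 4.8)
The goal is to recover, from a semantic simultaneous-decomposition fact $\G \Der (T_i : v_i \vto \vprime{i})_\iI$, a family of contexts $(\G_i)_\iI$ whose zip dominates $\G$ and which individually witness both correct variance and decomposability of each $T_i$. The natural candidate for $\G_i$ is the \emph{principal} context: let $\Delta_i$ be the minimal context such that $\Delta_i \der T_i : v_i$ (Corollary~\ref{principality}). By construction $\Delta_i \der T_i : v_i$ holds, giving half of what we need for each $i$. The plan is to take $\G_i := \Delta_i$ and to show (a) that each $\Delta_i$ individually decomposes, and (b) that $\G \leq \mathop{\zip}_\iI \Delta_i$.

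\textbf{Single decomposability from the principal context.} First I would isolate a single $T_i$ and argue $\Delta_i \Der T_i : v_i \vto \vprime{i}$. For $\vprime{i} = \rel\veq$ this is exactly Lemma~\ref{lem/principal-equal-decomposability}, since simultaneous decomposition restricted to one index implies $\G \Der T_i : v_i \vto \rel\veq$, and that lemma transports decomposability from $\G$ to the minimal $\Delta_i$. The case $\vprime{i} = \rel\virr$ is trivial (any supertype relation is subsumed), so the nontrivial content is the $\veq$-target, matching the scope of completeness announced just before the statement. This dispatches the ``$\G_i \Der T_i : v_i \vto \vprime{i}$'' conjunct.

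\textbf{The zip bound.} The harder half is $\G \leq \mathop{\zip}_\iI \Delta_i$, computed variablewise. Fix a variable $\alpha$; let $w_i = \Delta_i(\alpha)$ and $w = \G(\alpha)$. I must show $w \leq (\mathop{\zip}_i w_i)(\alpha)$, which first requires that the zip be \emph{defined}. Here I would use Lemma~\ref{lem/order-decomposition-variance}: for each $i$ with $v_i \ngeq \vprime{i}$, every non-irrelevant variable of $\Delta_i$ satisfies $\G(\alpha) \leq \Delta_i(\alpha)$, i.e. $w \leq w_i$. Since the target variances are $\veq$ or $\virr$, Lemma~\ref{lem/principal-context-for-eq-or-irr} pins down the shape of each $\Delta_i$: a $\veq$-target forces $w_i \in \{\veq,\virr\}$ and a $\virr$-target forces $w_i = \virr$. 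So each $w_i$ is either $\virr$ (contributing nothing and always zip-compatible) or $\veq$. The anticipated obstacle is a careful case split: when some $w_i = \veq$, the bound $w \leq w_i = \veq$ is automatic; when all $w_i = \virr$ the zip is $\virr$ and I must still bound $w$, which is where I would fall back on the trivial ($v_i \geq \vprime{i}$) branch or re-examine irrelevance. Verifying that the $\{\veq,\virr\}$ entries always zip (they do: $\veq \zip \veq = \veq$, $\veq \zip \virr = \veq$, $\virr \zip \virr = \virr$, all defined) closes definedness.

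\textbf{The equality refinement.} For the final sentence, assume additionally $\G \der T_i : v_i$ for all $i$. By minimality of $\Delta_i$ we get $\Delta_i \leq \G$ for each $i$, hence $\mathop{\zip}_\iI \Delta_i \leq \G$ by monotonicity of zip in each argument; combined with the $\G \leq \mathop{\zip}_\iI \Delta_i$ already established, this yields $\mathop{\zip}_\iI \Delta_i = \G$ by antisymmetry of the variance order. I expect the monotonicity-of-zip step and the earlier definedness case analysis to be the two places demanding the most care, the latter being the true crux since it is where the structural restriction to $\vprime{} \in \{\veq,\virr\}$ is genuinely consumed.
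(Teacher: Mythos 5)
Your overall strategy---take each $\G_i$ to be the principal context $\Delta_i$ for $\Delta_i \der T_i : v_i$, transport decomposability to it via Lemma~\ref{lem/principal-equal-decomposability}, and then bound $\G$ by the zip variablewise---is the right starting point and matches the paper's opening moves. But there are two genuine gaps. The first is a misapplication of Lemma~\ref{lem/principal-context-for-eq-or-irr}: that lemma constrains the principal context for the judgment $\Delta \der \tau : v$ according to the \emph{source} variance $v$ of that judgment, not according to the \emph{target} variance $\vprime{}$ of the decomposition. In the theorem the $v_i$ are arbitrary (typically $\vplus$ after composition with the datatype's parameter variances), so the entries $w_i$ of $\Delta_i$ can perfectly well be $\vplus$ or $\vminus$; your claim that ``each $w_i$ is either $\virr$ or $\veq$'' is unjustified and false in general (take $T_i = \beta$ with $v_i = \vplus$: the principal context is $(\vplus\beta)$). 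Consequently your definedness argument for the zip collapses. The actual crux of the paper's proof is precisely to show that when two entries $w_1, w_2$ are both non-irrelevant they must both be $\veq$, and this cannot be read off the principal contexts individually: it requires instantiating the \emph{simultaneous} hypothesis with a cross-pair of targets ($\app{T_1}{\rho''}$ against $\app{T_2}{\rho}$) so that the single shared witness $\rho'$ is forced, via Lemma~\ref{lem/non-irrelevant-equality}, to equal both $\rho''$ and $\rho$. That is the step where interference between occurrences is genuinely ruled out, and it is absent from your proposal; everything you use downstream of the individual decomposabilities would equally prove a false ``zip completeness'' for independent (non-simultaneous) decompositions.

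The second gap is the all-irrelevant case, which you correctly flag but do not resolve. If a variable is irrelevant in every $\Delta_i$, then $\bigl(\mathop{\zip}_i \Delta_i\bigr)(\alpha) = \virr$, the bottom of the variance order, while $\G(\alpha)$ may be anything (the simultaneous decomposition is insensitive to $\G$ on such a variable, since the witness $\br' = \br$ works there). So $\G \leq \mathop{\zip}_i \Delta_i$ genuinely fails for the uniform choice $\G_i := \Delta_i$; neither the \Rule{sc-Triv}-style ``trivial branch'' (which concerns $v_i \geq \vprime{i}$, not the context entry) nor ``re-examining irrelevance'' repairs it. The paper's fix is to abandon the principal context for one index in that case and substitute $\G$ itself (say $\G_2 := \G$, keeping $\G_1 := \Delta_1$): monotonicity of variance checking and anti-monotonicity of decomposability keep both per-index conjuncts true, and the zip with an irrelevant partner returns $\G$ exactly. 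Your final equality refinement is fine as stated, modulo the fact that after this substitution one case already yields $\mathop{\zip}_i \G_i = \G$ on the nose rather than by antisymmetry.
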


\begin{proof}{}
\locallabelreset
Let us assume the simultaneous decomposition ${\G \Der (T_i : v_i
\vto \vprime{i})_\iI}$, which expands to: 
$$
\forall \bs', \br,\;
   (\forall i, \app{T_i}\br \prec_{v_i} \sigma'_i)
   \implies 
   \exists \br',\uad
    \br \prec_{\G} \br' \wedge
    (\forall i, \app{T_i}{\br'} \precprime{i} \sigma'_i)
\eqno \llabel{Hyp}
$$
We construct a family of contexts $(\G_i)_\iI$ such that the following holds:
\begin{mathpar}
\G \leq \mathop{\zip}_i \G_\iI 
~\llabel {zip}

\forall i,\,\G_i \der T_i : v_i
~\llabel {ResZip}

\forall i, \G_i \Der T_i : v_i \vto \vprime{i}
~\llabel {ResDecomp}
\end{mathpar}
where \lref {ResDecomp} is equivalent to
$$
\forall i,\uad  \forall \sigma'_i, \br,\uad
   \bigparens {
   \app{T_i}\br \prec_{v_i} \sigma'_i
   \implies
   \exists \br',\uad 
        \br \prec_{\G_i} \br'
        \wedge
        \app{T_i}{\br'} \precprime{i} \sigma'_i
   }
\eqno\llabel{ResDecomp'}
$$
The first step is to move from the entailment~\lref {Hyp} of the form
$\forall \br, (\forall \iI, \dots) \implies (\exists \br', (\forall \iI, \dots))$
to the weaker form 
$(\forall \iI, \forall \br, \dots \implies \exists \br', \dots)$, 
but closer to \lref {ResDecomp'}.
More precisely, we show that 
\[
  \forall i, \uad \forall \sigma'_i,\br,\uad
  \bigparens {
  \app{T_i}\br \prec_{v_i} \sigma'_i \implies
  \exists\br',\uad
  \br \prec_{\G} \br'  \wedge
   \app{T_i}{\br'} \precprime{i} \sigma'_i 
  }
  \eqno\llabel{EasyPart}
\]
that is, $\forall i, {\G \Der T_i : v_i \vto \vprime{i}}$~\llabel {Easy}.
Let $i$, $\sigma'_i$, and $\br$ be such that 
$\app{T_i}{\br} \prec_{v_i} \sigma'_i$~\llabel {Tisi'}.  
We show that there exists
a $\br'$ such that $\app{T_i}{\br} = \sigma'_i$~\llabel {Ti=si'} and
$\br \prec_{\G} \br'$~\llabel{r<Gr'}. 
Let us extend our type $\sigma'_i$ to
a family $\bs'$,  defined by taking
 $\sigma'_j$ equal to $\app{T_j}{\br}$ for $j$ in $I \setminus \set i$. 
By construction, we have
$(\forall \jI, \app{T_j}{\br} \prec_{v_j} \sigma'_j)$.
Therefore, we may apply~\lref{Hyp} to get a $\br'$ such that
$\br \prec_{\G} \br'$, \ie. our first goal \lref {r<Gr'}, and 
$(\forall j, \app{T_j}{\br'} = \sigma'_i)$, which implies our second goal
\lref {Ti=si'} when $j$ is $i$. This proves \lref{EasyPart}. 

We now prove that we can refine this to have $\br \prec_{\G_i} \br'$
for $(\G_i)_\iI$ such that $\G \leq \mathop{\zip}_i \G_i$. 

Let $\D_1$ and $\D_2$ be the minimal contexts such both that $\D_1 \der T_1
: v_1$ and $\D_2 \der T_2 : v_2$ hold~\llabel D.  
Let \lref {zip}', \lref {ResZip}', and \lref {ResDecomp}'  be
obtained by replacing $\Gamma_i$'s by $\Delta_i$'s in our three goals \lref
{zip}, \lref {ResZip}, and \lref {ResDecomp}.  In fact \lref {ResZip}' is just
\lref D.
By Lemma~\ref{lem/principal-equal-decomposability} applied
to~\lref {Easy} twice, given~\lref D, we have both
$\D_1 \Der T_1 : v_1 \vto \vprime{1}$ and
$\D_2 \Der T_2 : v_2 \vto \vprime{2}$, that is, \lref {ResDecomp}.

Hence $\D_1$ and $\D_2$ are correct choices for $\G_1$ and $\G_2$
if they also satisfy the goal~\lref{zip}', \ie. $\D_1 \zip \D_2
\geq \G$. 
We now study when remaining goal \lref {zip}' holds and, when it does not,
propose a different choice for $\G_1$ and $\G_2$ that respect all three
goals.

\Wlg., we assume that $I$ is reduced to $\set {1, 2}$ and that 
there is only one free variable $\beta$ in $T_1, T_2$.
%% there is only  consider a situation
%% with only one existential variable $\beta$ free in the
%% $(\app{T_i}{\bs})_\iI$ and only two terms $\app{T_1}{\beta}$ and
%% $\app{T_2}{\beta}$, \ie. that $I$ is $\set{1,2}$.  
Since we focus on a single variable of the context we name $w_1$ and $w_2$
the variances of $\beta$ in $\D_1$ and $\D_2$, \XXX{Changed $\G_i$ to
$\D_i$} respectively. We now reason by case analysis on the variances $w_1$
and $w_2$.

If both of them are $\virr$, we have $\D_1 \zip \D_2 = (\virr\beta)$,
so we do not necessarily have $\G \leq \D_1 \zip \D_2$. 
Instead, we make a different choice for $G_2$. Namely, we  pick $\G$ for
$\G_2$ and keep $\D_1$ for  $\G_1$.  
As $\D_2$ is $\virr$ we have $\D_2 \leq \G_2$, so by
monotonicity of the variance checking judgment we have $\G_2 \der
T_2 : v_2$ from~\lref D, and we still have $\G_2 \Der T_2 : v_2 \vto
\rel\veq$ from~\lref{Easy}. Hence~\lref {ResZip} and \lref {ResDecomp}
are reestablished.
Finally, we have $\G_1 \zip \G_2 = \G$, so in
particular $\G \leq \G_1 \zip \G_2$, \ie. \lref {zip}. 

If only one of the $w_i$ is $\virr$, we may assume, \wlg., that it is
$w_1$.  Then $\D_1 \zip \D_2$ is $\D_2$ and we only need to show that 
$\G \leq \D_2$~\llabel {GD2}. 
From \lref {Easy}, we have $\G \Der T_2 : v_2 \To
\vprime{2}$. We then make a case analysis on $v_2$:
if $v_2$ is not $\rel\veq$, then 
by 
since $\D_2$ is most general and $w_2 \neq \virr$, we may apply
Lemma~\ref{lem/order-decomposition-variance} 
to get \lref {GD2};
Otherwise, $v_2$ is $\rel\veq$;
Lemma~\ref{lem/principal-context-for-eq-or-irr} applied to \lref D implies
that $\D_2$ is itself $(\veq\beta)$, and \lref {GD2} trivially
holds. \XXX{Because $\rel\veq$ is the top variance. Recheck as this was not
clearly the argument used.}
%% so $\D_2 \geq \G$ also holds and~$\lref{zip}$ follows
%% as in the first case with $\G_1 \eqdef \D_1$ and $\G_2 \eqdef \D_2$.

Finally, if none of the $w_i$ is $\virr$, we first prove that they are
both $\rel\veq$. In fact, we only prove that $w_1$ is $\rel\veq$~\llabel
{w1}, as the other case follows  by symmetry. 
To prove~\lref {w1}, we assume that $\rho''$ be such that $\rho
\prec_{w_1} \rho''$  and we show that $\rho = \rho''$~\llabel {LC} holds.  
By~\lref D, we have $\app{T_1}{\rho} \prec_{v_1} \app{T_1}{\rho''}$. 
By reflexivity, we have $\app{T_2}{\rho} \prec_{v_2} \app{T_2}{\rho}$.
We can use those
two inequalities to invoke our simultaneous decomposability
hypothesis~$\lref{Hyp}$ with $\app{T_1}{\rho''}$ for $\sigma'_1$ 
and $\app{T_2}{\rho}$ for $\sigma'_2$ 
to get a $\rho'$ such that both
$\app{T_1}{\rho'} \precprime{1} \app{T_1}{\rho''}$ and 
$\app{T_2}{\rho'} \precprime{2} \app{T_2}{\rho}$ hold.
By Lemma~\ref {lem/non-irrelevant-equality}
applied with \lref D, this implies both $\rho' = \rho''$ and $\rho' = \rho$, 
and therefore~\lref {LC}. 

Therefore, the only remaining case is when $w_1$ and $ w_2$ are both
$\rel\veq$. Then $\D_1 \zip \D_2$ is $(\veq\beta)$, which is the highest
single-variable context. So our goal \lref {zip}' trivially holds. 

Of these several cases, one ($\virr,\virr$) has $\G_1 \zip \G_2 =
\G$ directly, and in the others $\G_1$ and $\G_2$ were defined as the
minimal contexts such that $\G_1 \der T_1 : v_1$ and $\G_2 \der
T_2 : v_2$. If we add the further hypothesis that for each $i$, $\G
\der T_i : v_i$ holds, then by principality of the $\G_i$, we have that
$\G_i \leq \G$ for each $i$. This implies that we have
$(\mathop{\zip}_\iI \G_i) \leq \G$ (when it is defined, $\zip$
coincides with the lowest upper bound $\wedge$).
By combination with \lref {zip}, we  get  $(\mathop{\zip}_\iI \G_i) = \G$.
\qed
\end{proof}

\begin{lemma}[Completeness of syntactic decomposability]
\label{lem/decomposability-completeness}\noindent\\
If $\sem{\G \der \tau : v \To v'}$ holds for $v' \in \{\veq,\virr\}$,
then $\G \der \tau : v \To v'$ is provable.
\end{lemma}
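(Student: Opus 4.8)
The plan is to prove completeness by structural induction on the type expression $\tau$, following the (non-syntax-directed) shape of the rules of Figure~\ref{fig/decomposability}. At each step I first unfold the hypothesis $\sem{\G \der \tau : v \To v'}$ according to Definition~\ref{def/cc/semantics} into its two conjuncts: the variance-correctness $\sem{\G \der \tau : v}$, which by Lemma~\ref{lem/variance-checking-correct} is equivalent to provability of $\G \der \tau : v$, and the decomposability formula $\G \Der \tau : v \vto v'$. The first conjunct will directly supply the variance premises demanded by the rules, while the second drives the search for existential witnesses. The outermost case split mirrors the premise $v \geq v'$ of \Rule{sc-Triv}: since $\virr$ is the least variance, $v \geq v'$ holds automatically whenever $v' = \virr$, so in that case (and more generally whenever $v \geq v'$) I apply \Rule{sc-Triv}, whose only remaining premise $\G \der \tau : v$ is exactly the first conjunct. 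This disposes of the whole $v' = \virr$ case, so in every remaining, genuinely non-trivial case I may assume $v' = \veq$ and $v \neq \veq$.

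For the variable case $\tau = \alpha$, let $w$ be the variance of $\alpha$ in $\G$. Variance-correctness gives $w \geq v$, and I must reach $w = v$ to fire \Rule{sc-Var}. The converse inequality $w \leq v$ comes from decomposability: unfolding $\G \Der \alpha : v \vto \veq$ shows that $\rho \prec_v \sigma'$ forces $\rho \prec_w \sigma'$ for all $\rho,\sigma'$, that is $\rel{\prec_v} \subseteq \rel{\prec_w}$, which is exactly $w \leq v$ (in the irrelevant subcase $v = \virr$ this reads off directly, and otherwise one may alternatively invoke Lemma~\ref{lem/order-decomposition-variance}). Hence $w = v$ and \Rule{sc-Var} applies.

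The constructor case $\tau = \ty{\bs}{t}$, with head variance $\bar{w\alpha}$, is where the real work lies, and I must produce the three premises of \Rule{sc-Constr}. First, $v$-closure of $\tyc t$ (Definition~\ref{def/v-closed}): decomposability with $v' = \veq$ forces every $\prec_v$-supertype of an instance $\app\tau\br$ to equal some $\app\tau{\br'}$, hence to be $\tyc t$-headed; since in our subtyping relation the only non-structural step, \Rule{sub-PQ}, depends on the head constructor alone and not on its arguments, this extends from the reachable instances to all instances, giving full $v$-closure. Second, using this closure together with the inversion principle for head constructors (\S\ref{subtyping_inversion}), I rewrite $\G \Der \ty{\bs}{t} : v \vto \veq$ into the simultaneous decomposition $\G \Der (\sigma_i : \varcomp v{w_i} \vto \varcomp\veq{w_i})_\iI$: taking $\tyc t$-headed witnesses $\sigma' = \ty{\bs'}{t}$ and inverting componentwise turns the comparison of $\ty{\bs}{t}$-instances into the pointwise comparisons of the $\sigma_i$. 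I then apply Zip completeness (Theorem~\ref{thm/zip-completeness}) to obtain contexts $(\G_i)_\iI$ with $\G_i \der \sigma_i : \varcomp v{w_i}$ and $\G_i \Der \sigma_i : \varcomp v{w_i} \vto \varcomp\veq{w_i}$; and since variance-correctness, inverted through \Rule{vc-Constr}, already gives $\G \der \sigma_i : \varcomp v{w_i}$ for each $i$, the ``furthermore'' clause of that theorem yields $\G = \mathop{\zip}_\iI \G_i$ exactly, the middle premise. Third, each pair of facts about $\sigma_i$ recombines into $\sem{\G_i \der \sigma_i : \varcomp v{w_i} \To \varcomp\veq{w_i}}$, to which I apply the induction hypothesis---legitimate precisely because $\varcomp\veq{w_i} \in \{\veq,\virr\}$---obtaining the provable judgments $\G_i \der \sigma_i : \varcomp v{w_i} \To \varcomp\veq{w_i}$. \Rule{sc-Constr} then concludes.

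The main obstacle is this constructor case, and specifically the interplay between the $v$-closure premise and Zip completeness: establishing $v$-closure from a decomposability statement about one particular type expression, and invoking Theorem~\ref{thm/zip-completeness} so as to recover the exact equality $\G = \mathop{\zip}_\iI \G_i$ rather than a mere inequality, are the delicate points. It is worth stressing that the restriction $v' \in \{\veq,\virr\}$ is doing real work twice: it is what lets me force supertypes to share the head constructor (both for $v$-closure and for the componentwise inversion), and it is what keeps the composed target variance $\varcomp{v'}{w_i}$ inside $\{\veq,\virr\}$ so that the induction hypothesis remains applicable to the subterms.
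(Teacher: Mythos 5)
Your proof follows essentially the same route as the paper's: \Rule{sc-Triv} whenever $v \geq v'$ (which disposes of $v' = \virr$), the two-sided variance argument forcing $w = v$ in the variable case, and in the constructor case the derivation of $v$-closure from the fact that the only non-structural subtyping step depends on the head constructor alone, followed by componentwise inversion into a simultaneous decomposition, zip completeness with its ``furthermore'' clause to recover $\G = \mathop{\zip}_\iI \G_i$ exactly, and the induction hypothesis on the subterms (kept applicable because $\varcomp{\veq}{w_i} \in \{\veq,\virr\}$). The only cosmetic difference is that the paper first splits off the indices with $w_i = \virr$, where the decomposability obligation is vacuous and an all-irrelevant context suffices, before invoking zip completeness on the remaining indices whose target variance is genuinely $\veq$.
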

\begin{proof}{}
\locallabelreset
Assume $\sem{\G \der \tau : v \To v'}$ holds for $v' \in \{\veq,\virr\}$,
\ie. 
$\G \der \tau : v$~\llabel d
and
$\G \Der \tau : v \vto v'$~\llabel D, which 
expands to
$$
\let \sigma \tau
\forall (\br : \G), \sigma',\uad
      \app \sigma \br \prec_{v} \sigma' \implies
      \exists (\br' : \G),\uad
        \br \prec_\G \br'
        \wide\wedge \app \sigma {\br'} \prec_{v'} \sigma'
\eqno \llabel {D'}
$$
We show  $\G \der \tau : v \To v'$~\llabel C  by 
structural induction on $\tau$

If $v \geq v'$ holds, then~\lref C
directly follows from Rule \Rule{sc-Triv}.
This applies in particular when $v' = \virr$. 
Hence, we only need to consider the remaining cases where
$v'$ is $\rel\veq$ and $v \not \geq v'$

%% \XXX[je ne comprends rien a ce paragraph]{
%%  directly: from $\sem{\G \der \tau : v \To v'}$ we have
%% $\sem{\G \der \tau : v}$, so by completeness of the variance checking
%% judgment (Lemma~\ref{lem/variance-checking-correct}), there is a derivation
%% $\Gamma \der \tau : v$ needed by the \textsc{triv} rule. All the other cases
%% have $v \geq v'$ and $v' = \rel\veq$.}

We now reason by cases on $\tau$. 

\Case {Case $\tau$ is a variable $\alpha$.} \lref {D'} becomes
%% In the variable case, $\sem{\G \der \alpha : v \To \rel\veq}$ 
%% gives us $\G \Der \alpha : v \vto \rel\veq$ : 
$$
\forall \rho, \tau',\uad
      \rho \prec_{v} \tau' \implies
      \exists \rho',\uad
        \rho \prec_\G \rho'
        \wide\wedge \rho' = \tau'
$$
%% for any $\rho$ and $\sigma'$ such that $\rho \prec_v \sigma'$, there is a
%% $\rho'$ such that $\rho' = \sigma'$ with $\rho \prec_\G \rho'$.
This means that if $\rho \prec_v \tau'$ holds then $\rho \prec_{\G} \tau'$ also
holds: the variance $w\alpha \in \G$ satisfies $v \geq w$. 
Since, the hypothesis \lref d implies $v \leq w$, we have
$v = w$. Therefore, \lref C follows by Rule \Rule{sc-Var}.

\Case {Case $\tau$ is of the form $\ty{\bs}{t}$.}
By inversion, the derivation of~\lref d must end with rule \Rule
{vc-Constr}, hence we have  $\G \der \sigma_i : \varcomp v
{w_i}$~\llabel{HypVar} for each $i \in I$ with  $\G \der
\mc{type}~\ty{\bar{w\alpha}}{t}$~\llabel t.   
 
%% , let's first remark that our hypothesis
%%   $\sem{\Gamma \der \ty{\bs}{t} : v \To v'}$ implies $\sem{\Gamma \der
%%     \ty{\bs}{t} : v}$, so by completeness of the variance-checking
%%   judgment (Lemma~\ref{lem/variance-checking-correct}) we have
%%   a derivation for   $\G \der \sigma_i : \varcomp v {w_i}$~\llabel{HypVar}
%%   for each $i \in I$ where $\G \der \mc{type}~\ty{\bar{w\alpha}}{t}$. 

  \newcommand{\Iirr}{{I_{\virr}}}
  \newcommand{\Inirr}{I_{\not \virr}}

Let us show that $\G \Der (\sigma_i : \varcomp v {w_i} \vto {\varcomp \veq
{w_i}})_\iI$~\llabel S, \ie. 
$$
    \forall \br, \bs',\uad
      \parens 
        {\forall \iI, \app{\sigma_i}{\br} \prec_{v_i} \sigma'_i}
      \implies
      \exists \br',\uad \br \prec_\G \br'
        \wedge \parens 
         {\forall \iI, \app {\sigma_i} {\br'} \prec_{v'_i} \sigma'_i}
$$
Let $\br$ and $\bs'$ be such that  
$\app{\sigma_i}{\br} \prec_{v_i} \sigma'_i$ holds for all $i$ in $I$. 
From this and \lref {HypVar}, we have $\app {(\ty\bs t)} \br \prec_{v}
\ty{\bs'} t$. By applitcation of \lref D, there exists
$\br'$ such that $\br \prec_\G \br'$ and
$\app {(\ty\bs t)} {\br'} \prec_{\veq} \ty{\bs'} t$. 
%% simultaneous decomposability  holds. Indeed, for any $\br$ and
%%   $\bs'$ such that $(\forall \iI, \app{\sigma_i}{\br} \prec_{\varcomp
%%     v {w_i}} \sigma'_i)$ we have $\app{(\ty{\bs}{t})}{\br} \prec_v
%%   \ty{\bs'}{t}$, so by hypothesis there is a $\br'$ with $\br
%%   \prec_\Gamma \br'$ such that $\app{(\ty{\bs}{t})}{\br'} \prec_{\veq}
%%   \ty{\bs'}{t}$. 
By inversion of subtyping\XXX{Attention, not by the inversion lemma...}
\XXX{This is just mentioned in plain text, but it be a lemma},
this implies $\app{\sigma_i}{\br'} \prec_{\varcomp \veq {w_i}} \sigma'_i$, 
for all $i$ in $I$. This proves~\lref S.
We also note that the constructor $\tyc{t}$ is $v$-closed~\llabel {Note}.

To prove our goal~\lref C, we construct a family $(\G_i)_\iI$ of contexts that satisfies
$\mathop{\zip}_\iI \G_i = \G$~\llabel G and subderivations $\G_i \der
  \sigma_i : \varcomp v {w_i} \To \varcomp \veq {w_i}$~\llabel i, since then
the conclusion~\lref C follows by an application of
rule \Rule {sc-Constr} with~\lref t, \lref G, and \lref i.

We will handle separately the arguments $\sigma_j$ that are irrelevant, \ie.
when $w_j$ is $\virr$,  from the rest. 
Let $\Iirr$ be the set of indices with irrelevant variances and $\Inirr$ the
others. 
 
For any $i \in \Iirr$, $\varcomp v {w_i}$ and $\varcomp {\veq} {w_i}$ are
both $\virr$ so the condition~\lref i, which becomes $\G_i \der \sigma_i :
\virr \To \virr$, is void of content ($\app{\sigma_i}{\br}
\prec_{\virr} \sigma'_i \implies \exists \br',\uad \app{\sigma_i}{\br'}
\prec_{\virr} \sigma'_i$ is always true). More precisely, let $\Gamma_i$ be
the irrelevant context having only irrelevant variances.  Then \lref i
follows by Rule \Rule {sc-Triv}.
%% ---$\Gamma_i \der \sigma_i
%% : \virr$ would be true for any context $\Gamma_i$. 

Since the decomposability constraints for $i \in \Iirr$ such that
$w_i = \virr$ are trivial, \lref S  is equivalent to $\G \Der (\sigma_i : \varcomp
v {w_i} \vto {\varcomp \veq {w_i}})_{i \in \Inirr}$~\llabel {Snirr}. 
\XXX{Better show it, formally.}

For each $i \in \Inirr$, $\rel{\varcomp \veq {w_i}}$ equals $\rel\veq$, so
\lref {Snirr} becomes $\G \Der (\sigma_i : \varcomp v {w_i} \vto
\rel\veq)_{i \in \Inirr}$.
By zip completeness (Theorem~\ref{thm/zip-completeness}), there is a family
$(\Gamma_i)_{i \in \Inirr}$ such that
$\G \leq \mathop{\zip}_{i \in \Inirr} \Gamma_i$ and
both $\G_i \der \sigma_i : \varcomp v {w_i}$ and $\G_i \Der \sigma_i :
\varcomp v {w_i} \vto \rel\veq$, \ie. 
$\sem{\Gamma_i \der \sigma_i :\varcomp v {w_i} \To \rel\veq}$~\llabel h hold for any
$i \in \Inirr$, 
Furthermore, since we also
have~$\lref{HypVar}$, we can strengthen our result into $\mathop{\zip}_{i \in \Inirr} 
\Gamma_i = \G$.  By induction hypothesis applied to \lref h, we have 
\lref i for $i \in \Inirr$. 

We have two families of contexts over domains $\Iirr$ and $\Inirr$ that
partition $I$; we can union them in a family $(\Gamma_i)_{\iI}$ that has
subderivations $\forall \iI, \G_i \der \sigma_i : \varcomp v {w_i} \To
\varcomp {v'} {w_i}$. As the contexts in $\Iirr$ are all irrelevant, they
are neutral for the zipping operation: $\mathop{\zip}_{i \in I} \Gamma_i$
is equal to $\mathop{\zip}_{i \in \Inirr} \Gamma_i$, that is $\G$. 
This proves \lref G while \lref i has already been proved separately for 
$i \in \Iirr$ and $i \in \Inirr$.  
%% We can conclude by using the rule
%%   \textsc{constr} to build the desired derivation of $\Gamma \der
%%   \ty{\bs}{t} : v \To v'$.
\qed
\end{proof}

\begin{remark}[Note \lref {Note}]
  The reason why the head constructor $\tyc{t}$ is closed in our
  system is a bit subtle. The statement of $v$-closure of
  $\ty{\ba}{t}$ can be formulated in term of decomposability $\Gamma
  \Der \ty{\ba}{t} : v \vto \rel\veq$. It is very close from our
  decomposability hypothesis $\Gamma \Der \ty{\bs}{t} : v \vto
  \rel\veq$, but uses variables $\ba$ instead of full type expressions
  $\bs$. In the general case, it is in fact a stronger result: we
  cannot logically derive it from our hypothesis. However, the shape
  of our subtyping relation (even with the non-atomic rule
  \Rule{sub-PQ}) verifies the property that subtyping is only
  determined by the head constructors (with variance conditions on the
  parameters as a whole), which allows to derive this property.

  All the subtyping systems we are aware of in the literature also
  satisfy this property, which to our knowledge has no name, but it is
  still a potential limit on the applicability of our completeness
  result. One could imagine a binary type $\ty{(\alpha,\beta)}{weird}$
  such that $\ty{(\alpha,\tyc{int})}{weird}$, as a type expression
  depending only on $\alpha$, is upward-closed, while
  $\ty{(\alpha,\beta)}{weird}$ in general is not, because of
  a specific subtyping relation $\ty{(\alpha,\tyc{bool})}{weird} \leq
  \tyc{q}$ for example. Such non-uniform subtyping relations could
  require a different analysis, but to our knowledge they are not
  present in any of the programming language for which we might want
  to apply this work.
\end{remark}

\subsection{Back to the correctness criterion}

\locallabelreset

Remember the correctness criterion: 
\[
\forall \bs, \bs', \br, \quad
\left(
   \ty{\bs}{t} \leq \ty{\bs'}{t}
     \wedge \app D{\bs,\br}
   \implies
   \exists \br',\; \app D{\bs',\br'}
     \wedge \app\tau\br \leq \app\tau{\br'}
\right) 
\eqno (\Rule {Req})
\]
We now show how the closure judgment $\G \der \tau : v \To v'$ can be
used to verify that this criterion holds: we will express this
criterion in an equivalent form that uses the interpretation of our
judgments.

The first step is to rewrite the property $\ty{\bs}{t} \leq
\ty{\bs'}{t}$ using the variance annotation $\Gva$ of
$\tyc{t}$. Again, we are taking the variance annotation for the
datatype $\tyc{t}$ as granted (this is why we can use it in this
reasoning step), and checking that the definitions of the constructors
of $\tyc{t}$ are sound with respect to this annotation. \[
\forall \bs, \bs', \br, \quad
\left(
   (\forall i, \sigma_i \prec_{v_i} \sigma'_i)
      \wedge \app D{\bs,\br}
   \implies
   \exists \br',\; \app D{\bs',\br'}
     \wedge \app\tau\br \leq \app\tau{\br'}
\right) 
\eqno \llabel 2
\]
Since, the constraint $\app D{\ba,\bb}$ is a set of equalities of the form
$\alpha_i = \app{T_i}\bb$ (where $T_i$ is a type), \lref 2 is actually:
\[
\forall \bs, \bs', \br, \uad
   \parens {\forall i, \sigma_i \prec_{v_i} \sigma'_i}
     \wedge \parens {\forall i, \bs_i = \app{T_i}{\br}}
   \implies
   \exists \br',\; (\forall i, \app{T_i}{\br'} = \sigma'_i)
     \wedge \app\tau\br \leq \app\tau{\br'}
%\eqno\llabel {3'}
\]
Substituting the equalities and, in particular, removing the quantification
on the $\bs$, which are fully determined by the equality constraints $\bs =
\app{\bar{T}}\br$, we get:
\[
\forall \bs', \br, \quad
   (\forall i, \app{T_i}\br \prec_{v_i} \sigma'_i)
   \implies
   \exists \br',\; (\forall i, \app{T_i}{\br'} = \sigma'_i)
     \wedge \app\tau\br \leq \app\tau{\br'}
\eqno\llabel 3
\]
By inversion (Theorem~\ref{thm/inversion}), we may replace the goal $\app \tau
\br \leq \app \tau {\br'}$ by the formula
$\exists \G, (\G \der \tau : \vplus) \wedge (\br \prec_\G \br')$.
Moreover, since $\G \der \tau : \vplus$ always for for some $\G$, we may 
move  this quantification in front. Hence, \lref 3  is equivalent to:
\[
\exists \G, \uad\bigwedge 
\begin{cases}
\G \der \tau : \vplus
\\
\forall \bs', \br,\;
   (\forall i, \app{T_i}\br \prec_{v_i} \sigma'_i)
   \implies
   \exists \br',\; (\forall i, \app{T_i}{\br'} = \sigma'_i)
     \wedge \br \prec_{\G} \br'
\end{cases}
\eqno\llabel 4
\]
We may recognize in second clause the simultaneous 
decomposability judgment (Definition~\ref{def/decomposability})
$\G \der (T_i : v_i \vto \veq)_{\iI}$. Hence, \lref 4 is in fact: 
$$
\exists \G, \uad
\G \der \tau : \vplus
\wide\wedge \G\der (T_i : v_i \vto \veq)_{\iI}
\eqno\llabel {4b}
$$
Then, comes the delicate step of this series of equivalent rewriting:
\[
\exists \G, (\G_i)_\iI, \uad
\bigwedge
\begin{cases}
\G \der \tau : \vplus
\\
     \G = \mathop{\zip}_\iI \G_i
     \wide\wedge
      \forall i, \uad
      \bigparens {\G_i \der T_i : v_i
      \wide\wedge
      \G_i \der T_i : v_i \vto \veq}
\end{cases}
\eqno\llabel 5
\]
The reverse imiplication from \lref 5 to \lref {4b} is is the zip soundness
(Lemma \ref{lem/zip-soundness}).  

The direct implication, from \lref {4b} to
\lref 5 is more involved: let $\G_0$ be such that $\G_0 \der \tau
: \vplus$.  By zip completeness (Theorem
\ref{thm/zip-completeness}), with the hypotheses of $\lref{4}$, there
exists a family $(\G_i)_\iI$ satisfying the typing, zipping and
decomposability of second line of \lref{5} with $\G_0 \leq \mathop{\zip}_\iI
\G_i$. We take $\mathop{\zip}_\iI \G_i$ for $\G$. Then, from $\G_0 \leq \G$
we get $\G \der
\tau : \vplus$ by monotonicity (Lemma~\ref {lem/monotonicity}).

As a last step, the last conjuncts of \lref 5 are equivalent to $\forall
i,\,\G_i \der T_i : v_i \To \rel\veq$ by interpretation of syntactic
decomposability (Definition~\ref {def/cc/semantics}) and soundness and
completeness of zipping (lemmas~\ref {lem/decomposability-soundness}
and~\ref {lem/decomposability-completeness}).  Therefore, \lref 5 is
equivalent to:
\[
\exists \G, (\G_i)_\iI,\quad
  \G \der \tau : \rel\vplus
  \ \ \wedge\ \ 
  \G = \mathop{\zip}_\iI \G_i
  \ \ \wedge\ \ 
  \forall \iI,\,\G_i \der T_i : v_i \To \rel\veq
  \qquad
\]
which is our final criterion. 

\end{version}

\begin{theorem}[Algorithmic criterion]
  Given a variance annotation $(v_i\alpha_i)_\iI$ and a constructor
  declaration of type
  $(\exists \bb\left[\mathop{\bigwedge}_\iI \alpha_i = \app{T_i}{\bb}\right].\,\app \tau \bb)$,
  the soundness criterion \Rule{Req} for this constructor is
  equivalent to
$$
\exists \G, (\G_i)_\iI,\quad
  \G \der \tau : \rel\vplus
  \ \ \wedge\ \ 
  \G = \mathop{\zip}_\iI \G_i
  \ \ \wedge\ \ 
  \forall \iI,\,\G_i \der T_i : v_i \To \rel\veq
$$
\end{theorem}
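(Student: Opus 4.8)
The plan is to transform the semantic criterion \Rule{Req} into the stated syntactic form through a chain of equivalence-preserving rewritings, each justified by a result established above; since every step is an equivalence, both directions of the ``if and only if'' follow at once. First I would exploit the fact that the variance annotation $(v_i\alpha_i)_\iI$ of $\tyc{t}$ is fixed and given: by inversion of subtyping on the head constructor, the hypothesis $\ty{\bs}{t} \leq \ty{\bs'}{t}$ is equivalent to $\forall i, \sigma_i \prec_{v_i} \sigma'_i$. Next I would expand the constraint $\app D{\ba,\bb}$ as the conjunction of equalities $\alpha_i = \app{T_i}{\bb}$ and substitute them, eliminating the quantification over $\bs$ entirely, since they are fully determined by $\bs = \app{\bar T}{\br}$. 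This turns the antecedent into $\forall i, \app{T_i}{\br} \prec_{v_i} \sigma'_i$ and the existential conclusion into $\forall i, \app{T_i}{\br'} = \sigma'_i$ together with the residual goal $\app\tau\br \leq \app\tau{\br'}$.

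The next key step handles that residual subtyping goal. Here I would apply the Inversion theorem (Theorem~\ref{thm/inversion}), which replaces $\app\tau\br \leq \app\tau{\br'}$ by $\exists \G, (\G \der \tau : \vplus) \wedge (\br \prec_\G \br')$. Since such a $\G$ always exists, I can pull this existential to the front. The remaining clause is then exactly the simultaneous decomposability property $\G \der (T_i : v_i \vto \veq)_\iI$ of Definition~\ref{def/decomposability}. The crucial point is that inversion introduced an \emph{equality} $\app{T_i}{\br'} = \sigma'_i$, so the target variance is $\veq$ rather than $\vplus$; this is precisely the regime in which a completeness result for decomposition will later be available.

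The delicate core is to move from this combined statement---$\G \der \tau : \vplus$ together with the simultaneous decomposition---to the zipped form $\exists (\G_i)_\iI,\ \G = \mathop{\zip}_\iI \G_i \wedge \forall i, (\G_i \der T_i : v_i \wedge \G_i \Der T_i : v_i \vto \veq)$. The easy direction, reconstructing simultaneous decomposition from the zip, is zip soundness (Lemma~\ref{lem/zip-soundness}). The hard direction is zip completeness (Theorem~\ref{thm/zip-completeness}): given a $\G_0$ with $\G_0 \der \tau : \vplus$, it produces a family $(\G_i)_\iI$ with $\G_0 \leq \mathop{\zip}_\iI \G_i$ that individually types and decomposes each $T_i$; taking $\G = \mathop{\zip}_\iI \G_i$ and recovering $\G \der \tau : \vplus$ by monotonicity (Lemma~\ref{lem/monotonicity}) closes the step. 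I expect this to be the main obstacle, as zip completeness is the deepest lemma of the development, itself resting on the intermediate value lemma and on the order relationship between the decomposition context and the variance context.

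Finally I would collapse the pair $\G_i \der T_i : v_i$ and $\G_i \Der T_i : v_i \vto \veq$ into the single syntactic judgment $\G_i \der T_i : v_i \To \veq$, using the interpretation of Definition~\ref{def/cc/semantics} together with soundness (Lemma~\ref{lem/decomposability-soundness}) and completeness (Lemma~\ref{lem/decomposability-completeness}) of syntactic decomposability. Completeness is applicable precisely because the target variance is $\veq$, the case covered by Lemma~\ref{lem/decomposability-completeness}. Assembling these equivalences yields the stated criterion and establishes that \Rule{Req} holds exactly when the final syntactic condition does.
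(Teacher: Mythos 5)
Your proposal is correct and follows essentially the same route as the paper's own proof: the same chain of equivalences (inversion of head subtyping, substitution of the equality constraints, the Inversion theorem to introduce $\G$ with $\br \prec_\G \br'$, zip soundness and zip completeness for the delicate step, and finally the soundness and completeness of syntactic decomposability to collapse into $\G_i \der T_i : v_i \To \rel\veq$). You correctly identify zip completeness plus monotonicity as the hard direction, exactly as the paper does.
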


The three parts of this formula can be explained to a user, as soon
as the underlying semantic phenomenons (variable interference through
zipping, and upward- and downward-closure) have been
understood---there is no way to get around that. They are best read
from right to left. The last part on the $(T_i)_\iI$ is the
decomposability requirement that failed in our example with \code{<
  m : int >}: the type expressions equated with a covariant variable
should be upward-closed, and those equated with a contravariant one downward-closed. The
zipping part checks that the equations do not create interference
through shared existential variables, as in \lstinline{type ($\vplus\alpha$, $\veq\beta$) eq = Refl of $\exists \gamma [\alpha = \gamma, \beta = \gamma]$}. Finally, the variance check corresponds to the classic
variance check on argument types of \ADTs. One can verify that in
presence of a simple \ADT, this new criterion reduces to the simple
syntactic criterion.

\Esop{}{\paragraph{Pragmatic evaluation of this criterion}}

This presentation of the correctness criterion only relies on
syntactic judgments. It is pragmatic in the sense that it suggests
a simple and direct implementation, as a generalization of the check
currently implemented in type system engines---which corresponds to the
$\Gamma \der \tau : \rel\vplus$ part.

To compute the contexts $\Gamma$ and $(\Gamma_i)_\iI$ existentially
quantified in this formula, one can use a variant of our syntactic
judgments where the environment $\Gamma$ is not an input, but an
output of the judgment; in fact, one should return for each variable
$\alpha$ the \emph{set} of possible variances for this judgment to
hold. For example, the query $(? \der \alpha * \ty{\beta}{ref}: \vplus)$
should return $(\alpha \mapsto \{\vplus, \veq\}; \beta \mapsto
\{\veq\})$. Defining those algorithmic variants of the judgments is
routine. 
The sets of variances corresponding to the decomposability of the
$(T_i)_\iI$ ($? \der T_i : v_i \To \rel\veq$) should be zipped
together and intersected with the possible variances for $\tau$,
returned by
($? \der \tau : \vplus$). The algorithmic criterion is satisfied if
and only if the intersection is not empty; this can be decided in
a simple and efficient way.

\section{Discussion}
\label{sec/discussion}

%\begin{version}{\Not\Esop}
\subsection{Upward and downward closure in a ML type system}

In the type system we have used so far, all type constructors but $p$
and $q$ are both upward and downward-closed.  This simple situation,
however, does not hold in general: richer subtyping relations will
have weaker invertibility properties. As soon as a bottom type $\bot$
is introduced, for example, such that that for all type $\sigma$ we
have $\bot \leq \sigma$, downward-closure fails for all types -- but
$\bot$ itself. For example, products are no longer downward-closed:
$\G \der \sigma * \tau \geq \bot$ does not implies that $\bot$ is
equal to some $\sigma' * \tau'$. Conversely, if one adds a top type
$\top$, bigger than all other types, then most type are not
upward-closed anymore.

In OCaml, there is no $\bot$ or $\top$ type\footnote{A bottom type
  would be admissible, but a top type would be unsound in OCaml, as
  different types may have different runtime
  representations. Existential types, that may mix values of different
  types, are constructed explicitly through a boxing step.}. 
However, object
types and polymorphic variants have subtyping,  so they are, in general,
neither upward nor downward-closed. Finally, subtyping is also used
in private type definitions, which were demonstrated in the example.
Our closure-checking relation therefore degenerates into the
following, quite unsatisfying, picture:
\begin{itemize}
\item no type is downward-closed because of the existence of private types;
\item no object type  but the empty object type is upward-closed;
\item no arrow type is upward-closed because its left-hand-side would
  need to be downward-closed;
\item datatypes are upward-closed if their components types are.
\end{itemize}
From a pragmatic point of view, the situation is not so bad; as our
main practical motivation for finer variance checks is the relaxed
value restriction, we care about upward-closure (covariance) more than
downward-closure (contravariance). This criterion tells us that
covariant parameters can be instantiated with covariant datatypes
defined from sum and product types (but no arrow), which would satisfy
a reasonable set of use cases.

\subsection{A better control on upward and downward-closure}

There is a subtle design question here. Decomposability is
fundamentally a \texttt{negative} statement on the subtyping relation,
guaranteeing that some types have no supertypes of a different
structure. It is therefore not necessarily preserved by addition to
the subtyping relation -- our system, informally, is
{\tt non\--monotone} in the subtyping relation.

This means that if we adopt the correctness criterion above, we must
be careful in the future not to enrich the subtyping relation too
much. Consider \code{private} types for example: one could imagine
a symmetric concept of a type that would be strictly \emph{above}
a given type $\tau$; we will name those types \code{invisible} types
(they can be constructed, but not observed). Invisible types and \GADT
covariance seem to be working against each other: if the designer adds
one, adding the other later will be difficult.

A solution to this tension is to allow the user to \emph{locally}
guarantee negative properties about subtyping (what is \emph{not}
a subtype), at the cost of selectively abandoning the corresponding
flexibility.  Just as object-oriented languages have \code{final}
classes that cannot be extended any more, we would like to be able to
define some types as \code{downward-closed} (respectively
\code{upward-closed}), that cannot later be made \code{private}
(resp. \code{invisible}). Such declarations would be rejected if the
defining type, for example an object type, already has subtypes
(resp. supertypes), and would forbid further declarations of types
below (resp. above) the defined type, effectively guaranteeing
downward (resp. upward) closure.

Finally, upward or downward closure is a semantic aspect of a type
that we must have the freedom to publish through an interface:
abstract types could optionally be declared \code{upward-closed} or
\code{downward-closed}.

\subsection{Subtyping constraints and variance assignment}
\label{sec/gadts-with-subtyping-constraints}

We will now revisit our example of strongly typed
expressions in the introduction.
\begin{version}{\Not\Esop}
It is written, using the guarded existential notation:
\begin{lstlisting}
  type $\ty{\alpha}{exp}$ =
    | Val of $\exists \beta [\alpha = \beta].\, \beta$
    | Int of $[\alpha = \tyc{int}].\,\tyc{int}$
    | Thunk of $\exists \beta \gamma [\alpha = \gamma].\, \ty{\beta}{exp} * (\beta \to \gamma)$
    | Prod of $\exists \beta\gamma[\alpha = \beta*\gamma].\, \ty{\beta}{exp} * \ty{\gamma}{exp}$
\end{lstlisting}
\end{version}
A simple way to get such a type to be covariant would be, instead of
proving delicate, non-monotonic upward-closure properties on the tuple
type involved in the equation $\alpha = \beta * \gamma$, to
\emph{change} this definition so that the resulting type is obviously
covariant:
\begin{lstlisting}
type $\ty{\vplus\alpha}{exp}$ =
  | Val of $\exists \beta [\alpha \geq \beta].\, \beta$
  | Int of $[\alpha \geq \tyc{int}].\,\tyc{int}$
  | Thunk of $\exists \beta \gamma [\alpha \geq \gamma].\, \ty{\beta}{exp} * (\beta \to \gamma)$
  | Prod of $\exists \beta\gamma[\alpha \geq \beta*\gamma].\, \ty{\beta}{exp} * \ty{\gamma}{exp}$
\end{lstlisting}
We have turned each equality constraint $\alpha = \app T \bb$ into
a subtyping constraint $\alpha \geq \app T \bb$. For a type $\alpha'$
such that $\alpha \leq \alpha'$, we get by transitivity that
$\alpha' \geq \app T \bb$. This means that $\ty{\alpha}{exp}$
trivially satisfies the correctness criterion \Rule {Req}. Formally,
instead of checking $\G \der T_i : v_i \To \rel\veq$, we are now
checking $\G \der T_i : v_i \To \rel\vplus$, which is significantly
easier to satisfy\Esop{}{\footnote{ Note that the formal proofs of the
    precedent section were, in some cases, specialized to the equality
    constraint. More precisely, our decomposability criterion is still
    sound when extended to arbitrary subtyping constraints, but its
    completeness is unknown and left to future work.  }}: when $v_i$
is itself $\vplus$ we can directly apply the \Rule{sc-Triv} rule. Note
that this only works in the easy direction: while
$\G \der T_i : \rel\vplus \To \rel\vplus$ is easy to check,
$\G \der T_i : \rel\vplus \To \rel\vminus$ is just as hard as
$\G \der T_i : \rel\vplus \To \rel\veq$. In particular, an equality
($\sigma = \sigma'$) is already equivalent to a pair of inequalities
($\sigma \leq \sigma' \wide\wedge \sigma \geq \sigma'$).

While this different datatype gives us a weaker
subtyping assumption when pattern-matching, we are still able
to write the classic function
$\mc{eval} : \ty{\alpha}{exp} \to {\alpha}$, because the constraints
$\alpha \geq \tau$ are in the right direction to get an $\alpha$ as
a result.
\begin{lstlisting}
let rec eval : $\ty{\alpha}{exp} \to \alpha$ = function
  | Val $\beta$ (v : $\beta$) -> (v :> $\alpha$)
  | Int (n : int) -> (n :> $\alpha$)
  | Thunk $\beta$ $\gamma$ ((v : $\ty{\beta}{exp}$), (f : $\beta \to \gamma$)) ->
    (f (eval v) :> $\alpha$)
  | Prod $\beta$ $\gamma$ ((b : $\ty{\beta}{exp}$), (c : $\ty{\gamma}{exp}$)) ->
    ((eval b, eval c) :> $\alpha$)
\end{lstlisting}

This variation on \GADTs, using subtyping instead of equality
constraints, has been studied by Emir
\etal.~\cite{csharp-generalized-constraints} in the context of the
\csharp programming language---it is also expressible in
Scala. However, using subtyping constraints in \GADTs has important
practical drawbacks in a ML-like language. While typed object-oriented
programming languages tend to use  explicit polymorphism and
implicit subtyping, ML uses implicit polymorphism and explicit
subtyping (when present).  Thus in ML, equality
constraints can be implicitly used while subtyping constraints 
must be explicitly used: unification-based inference favors
bidirectional equality over unidirectional subtyping. This makes \GADT
definitions based on single subtyping constraints less convenient to
use, because of the corresponding syntactic burden, and this is
probably the reason why the notion of \GADTs found in
functional languages use only equality constraints. Subtyping
constraints need also be explicit in the type declaration, forcing the
user out of the convenient ``generalized codomain type'' syntax.

Finally, weakening equality constraints into a subtyping constraint in
one direction is not always possible; sometimes the strictly weaker
expressivity of the type forbids important uses. One must then use an
equality constraint, and use our decomposability-based reasoning to
justify the variance annotation. Consider the following example:
\begin{lstlisting}[xleftmargin=2em]
type $\ty{\vplus\alpha}{tree}$ =
  | Node of $\exists \beta [\alpha = \ty{\beta}{list}].\,\ty{(\ty{\beta}{tree})}{list}$
let append : $\ty{\alpha}{tree} * \ty{\alpha}{tree} \to \ty{\alpha}{tree}$ = function
  | Node $\beta_1$ (l1 : $\ty{\ty{\beta_1}{tree}}{list}$), Node $\beta_2$ (l2 : $\ty{\ty{\beta_2}{tree}}{list}$) ->
       Node (List.append l1 l2)
\end{lstlisting}

We know that the two arguments of \texttt{append} have the same type
$\ty{\alpha}{tree}$. When matching on the \texttt{Node} constructors,
we learn that $\alpha$ is equal to both $\ty{\beta_1}{list}$ and
$\ty{\beta_2}{list}$, from which we can deduce that $\beta_1$ is equal
to $\beta_2$ by non-irrelevance of \texttt{list}. The concatenation of
the lists \texttt{l1} and \texttt{l2} type-checks because this
equality holds. If we used a type system without the decomposability
criterion, we would need to turn the constructor constraint into
$\exists \beta [\alpha \geq \ty{\beta}{list}]$ to preserve covariance
of $\ty{\alpha}{tree}$ . We wouldn't necessarily have $\beta_1$ and
$\beta_2$ equal anymore, so \texttt{(List.append l1 l2)}, hence the
definition of \texttt{append} would not
type-check\Esop{}{\footnote{The motivated reader may want to work out
    the same example on our $\ty{\alpha}{exp}$ datatype: for an
    \texttt{append} function of type
    $\ty{\ty{\alpha}{list}}{exp}*\ty{\ty{\alpha}{list}}{exp} \to \ty{\ty{\alpha}{list}}{exp}$,
    only the constructors \texttt{Val} and \texttt{Thunk} need to be
    considered. We get the same result: an equality constraint is
    necessary to express \texttt{append}.}}. We would need
decomposability-based reasoning to deduce, from
$\alpha \geq \ty{\beta}{list}$ and the fact that $\tyc{list}$ is
upward-closed, that in fact $\alpha = \ty{\beta'}{list}$ for some
$\beta'$.

This demonstrates that single subtyping constraints and our novel
decomposability check on equality constraints are of incomparable
expressivity: each setting handles programs that the other cannot
type-check. From a theoretical standpoint, we think there is value in
exploring the combination of both systems: using subtyping constraints
rather than equalities, but also using decomposability to deduce
stronger equalities when possible.

Note that while our soundness result directly transposes to
a type-system with decomposability conditions on subtyping rather than
equality constraints, our completeness result is special-cased on
equality constraints. Completeness in the case of subtyping
constraints is an open question.

\section*{Related Work}

%% \paragraph{``A Constraint-Based Approach to Guarded Algebraic Data
%%   Types'', by Vincent Simonet and François Pottier, 2007}

Simonet and Pottier \cite{simonet-pottier-hmg-toplas} have studied
\GADTs in a general framework HMG(X), inspired by HM(X). They were
interested in type inference using constraints, so considered
\GADTs with arbitrary constraints rather than type equalities, and
considered the case of subtyping with applications to information
flow security in mind. 
Their formulation of the checking problem for datatype
declarations, as a constraint-solving problem, is
exactly our semantic criterion and is not amenable to a direct
implementation. Correspondingly, they did not encounter any of the
new notions of upward and downward-closure and variable interference
(zipping) discussed in the present work.
They define a dynamic semantics and prove that this semantic criterion
implies subject reduction and progress. However, we cannot directly
reuse their soundness result as they work in a setting where all
constructors are upward- and downward-closed (their subtyping relation
is atomic). We believe this is only an artifact of their presentation
and their proof should be easily extensible to our setting.

%% \paragraph{``Variance and generalized constraints for \csharp
%%   generics'', by Burak Emir, Andrew Kennedy, Claudio Russo and Dachuan
%%   Yu, 2006}

\medskip 

Emir, Kennedy, Russo and Yu~\cite{csharp-generalized-constraints}
studied the soundness of an object-oriented calculus with subtyping
constraints on classes and methods. Previous work \cite{gadt-and-oop}
had established the correspondence between equality constraints on
methods in an object-oriented style and GADT constraints on type
constructors in functional style. Through this surprisingly
non-obvious correspondence, their system matches our presentation of
\GADTs with subtyping constraints and easier variance assignment,
detailed in~\S\ref{sec/gadts-with-subtyping-constraints}. They provide
several usage examples and a full soundness proof using a classic
syntactic argument. However, they do not consider the more delicate
notions of decomposability, and their system therefore cannot handle
some of the examples presented here.

\subsection*{Future Work}

\begin{version}{\Not\Esop}
\paragraph{On the relaxed value restriction}

Regarding the relaxed value restriction, which is our initial
practical motivation to investigate variance in presence of \GADTs,
there is also future work to be done to verify that it is indeed
compatible with this refined notion of variance. While the syntactic
proof of soundness of the relaxation doesn't involve subtyping
directly, the ``informal justification'' for value restriction uses
the admissibility of a global bottom type $\bot$ to generalize
a covariant unification variable; in presence of downward-closed type,
there is no such general $\bot$ type (only one for
non-downward-closed types). We conjecture that the relaxed value restriction
is still sound in this case, because the covariance criterion is really used to
rule out mutable state rather than subtype from a $\bot$ type; but it
will be necessary to study the relaxation justification in more
details to formally establish this result.
\end{version}

\paragraph{Experiments with $v$-closure of type constructors as a new semantic property}

In a language with non-atomic subtyping such as OCaml, we need to
distinguish $v$-closed and non-$v$-closed type constructors. This is
a new semantic property that, in particular, must be reflected through
abstraction boundaries: we should be able to say about an abstract
type that it is $v$-closed, or not say anything.

How inconvenient in practice is the need to expose those properties to
have good variance for \GADTs? Will the users be able to determine
whether they want to enforce $v$-closure for a particular type they
are defining?

\begin{version}{\Not\Esop}
\paragraph{Experiments with subtyping constraints in \GADTs}

In \S\ref{sec/gadts-with-subtyping-constraints}, we have presented
a different way to define \GADTs with weaker constraints
(simple subtyping instead of equality) and stronger variance
properties. It is interesting to note that, for the few \GADTs
examples we have considered, using subtyping constraints rather than
equality constraints was sufficient for the desired applications of
the \GADT.

However, there are cases were the strong equality relying on
fine-grained closure properties is required. We need to consider more
examples of both cases to evaluate the expressiveness trade-off in,
for example, deciding to add only one of these solutions to an
existing type system.

On the implementation side, we suspect that adding
subtyping constraints to a type system that already supports \GADT and
private types should not require large engineering efforts
(in particular, it does not implies supporting the most general forms
of bounded polymorphism). Matching on a \GADT $\ty{\alpha}{t}$ already
introduces local type equalities of the form $\alpha = \app{T}{\bb}$ in
pattern matching clauses. Jacques Garrigue suggested that adding an
equality of the form $\alpha = \mc{private}~\app{T}{\bb}$ should
correspond to \GADT equations of the form $\alpha \leq \app{T}{\bb}$,
and lower bounds could be represented using the dual notion of
\code{invisible} types. Regardless of implementation difficulties, in
a system with only explicit subtyping coercion, such subtyping
constraints would still require more user annotations.
\end{version}

\begin{version}{\Not\Esop}
\paragraph{Mathematical structures for variance studies}

There has been work on more structured presentation of \GADTs as part
of a categorical framework (\cite{ghani-popl07} and
\cite{Hamana-Fiore-11}). This is orthogonal to the question of
variance and subtyping, but it may be interesting to re-frame the
current result in this framework.

Parametrized types with variance can also be seen as a sub-field of
order theory with very partial orders and functions with strong
monotonicity properties. Finally, we have been surprised to find that
geometric intuitions were often useful to direct our formal
developments. It is possible that existing work in these fields would
allow us to streamline the proofs, which currently are rather
low-level and tedious.
\end{version}

\paragraph{Completeness of variance annotations with domain information}
\label{future_work:domain_information}

The way we present \GADTs using equality constraints instead of the
codomain syntax is well-known to practictioners, under the form of
a ``factoring'' transformation where an arbitrary \GADT is expressed
as a simple \ADT, using the equality \GADT $\ty{(\alpha,\beta)}{eq}$
as part of the constructor arguments to reify equality information.

This transformation does not work anymore with our current notion of
\GADTs in presence of subtyping. Indeed, all we can soundly say about
the equality type $\ty{(\alpha,\beta)}{eq}$ is that it must be
invariant in both its parameters; using
$\ty{(\alpha,\app{T_i}{\bb})}{eq}$ as part of a constructor type would
force the paramter $\alpha$ to be invariant.

We think it would possible to re-enable factoring by \code{eq} by
considering \emph{domain information}, that is, information on
constraints that must hold for the type to be inhabited. If we
restricted the subtyping rule with conclusion $\ty{\bs}{t} \leq
\ty{\bs'}{t}$ to only cases where $\ty{\bs}{t}$ and $\ty{\bs'}{t}$ are
inhabited---with a separate rule to conclude subtyping in the
non-inhabited case---we could have a finer variance check, as we would
only need to show that the criterion \Rule{Seq} holds
between two instances of the inhabited domain, and not any
instance. If we stated that the domain of the type
$\ty{(\alpha,\beta)}{eq}$ is restricted by the constraint $\alpha =
\beta$, we could soundly declare the variance $\ty{(\virr\alpha,
  \virr\beta)}{eq}$ on this domain---which no longer prevents from
factoring out \GADTs by equality types.

\subsection*{Conclusion}

Checking the variance of \GADTs is surprisingly more difficult
(and interesting) than we initially thought.  We have studied a novel
criterion of upward and downward closure of type expressions and
proposed a corresponding syntactic judgment that is easily
implementable. We presented a core formal framework to prove both its
correctness and its completeness with respect to a natural semantic
criterion.

This closure criterion exposes important tensions in the design of
a subtyping relation, for which we previously knew of no convincing
example in the context of ML-derived programming languages. We have
suggested new language features to help alleviate these tensions,
whose convenience and practicality is yet to be assessed by real-world
usage.

Considering extensions of \GADTs in a rich type system is useful in
practice; it is also an interesting and demanding test of one's type
system design.

%\def\url#1{\mbox{\tt #1}}}{}

% un maniaque :  http://www.cse.wustl.edu/~mgeorg/html/tildalatex.html
\def\urltilda{\kern -.15em\lower .7ex\hbox{\~{}}\kern .04em}
\def\urldot{\kern -.10em.\kern -.10em}
\def\urlhttp{http\kern -.10em\lower -.1ex\hbox{$\colon\!$}\kern -.12em\lower 0ex\hbox{/}\kern -.18em\lower 0ex\hbox{/}}

\bibliographystyle{alphaurl}
\bibliography{variance_gadts}

\begin{version}{\Not\Esop}

\pagebreak
\appendix
\tableofcontents
\end{version}

\end{document}